\title{Improved Approximations for Flexible Network Design}
\author{Dylan Hyatt-Denesik}{Eindhoven University of Technology, Eindhoven, Netherlands,}{d.v.p.hyatt-denesi@tue.nl}{}{}
\author{Afrouz Jabal-Ameli}{Eindhoven University of Technology, Eindhoven, Netherlands,}{a.jabal.ameli@tue.nl}{}{}
\author{Laura Sanit\`a}{Bocconi University, Milan, Italy}{laura.sanita@unibocconi.itg}{}{}
\authorrunning{D. Hyatt-Denesik et al.}
\keywords{Approximation Algorithms \and Network Design \and Flexible Connectivity.}
\begin{document}
%
%
%

%
%
%
\maketitle              
\begin{abstract}
    Flexible network design deals with building a network that guarantees some connectivity requirements between its vertices, even when some of its elements (like vertices or edges) fail. In particular, the set of edges (resp. vertices) of a given graph are here partitioned into \emph{safe} and \emph{unsafe}. The goal is to identify a minimum size subgraph that is 2-edge-connected (resp. 2-vertex-connected), and stay so whenever any of the unsafe elements gets removed.  

    In this paper, we provide improved approximation algorithms for flexible network design problems, considering both edge-connectivity and vertex-connectivity, as well as connectivity values higher than 2.  For the vertex-connectivity variant, in particular, our algorithm is the first with approximation factor strictly better than 2. 

\end{abstract}
%
%
%
\section{Introduction}
\label{sec:intro}
Survivable network design is an important area of combinatorial optimization. In a classical setting, we are given a network represented as a graph, and the goal is to select the cheapest subset of edges that guarantee some connectivity requirements among its vertices, even when some of its elements (like vertices or edges) may fail.

Two fundamental problems in this area are the \emph{$2$-edge-connected spanning subgraph} (2ECSS) and the \emph{$2$-vertex-connected spanning subgraph} (2VCSS). These problems aim at building a network resilient to a possible failure of an edge or of a vertex, respectively. More in detail, in 2ECSS, we are given in input a graph $G=(V,E)$, and the goal is to select a subset $F \subseteq E$ of minimum-cardinality 
such that the graph $(V,F)$ is 2-edge-connected: that is, $(V,F)$ contains 2 edge-disjoint paths between any pair of vertices.  
In 2VCSS, the input is the same, but here we require that our selected set $F$ is such that  $(V,F)$ is 2-vertex connected, i.e., it contains 2 vertex-disjoint paths between any pair of vertices. Both 2ECSS and 2VCSS have been extensively studied in the literature. They are APX-hard(see~\cite{czumaj1999approximability})
but admit constant-factor approximation algorithms (see e.g.~\cite{ameli20234,garg2023improved,KV94, sebHo2014shorter}). Currently, the best approximation for  2ECSS is 
$\frac{118}{89}$~\cite{garg2023improved}\footnote{Recently a $1.3$-approximation was presented for $2$ECSS (see~\cite{DBLP:conf/isaac/0001N23}) with the assumption that there exists a polynomial time algorithm for maximum triangle-free $2$-matching problem.}, while the best approximation for 2VCSS is $\frac{4}{3}$~\cite{ameli20234}. The problems have been investigated also in the \emph{weighted} setting. That is, when the input graph $G$ comes equipped with an edge-weight function $w \in \mathbb R_+$, and the goal is to minimize the total weight of the selected set $F$, rather than its cardinality. For this more general case, nothing better than a 2-approximation is known~\cite{jain2001factor, khuller1995improved}.

In recent years, an interesting generalization has been introduced by Adjiashvili et al.~\cite{adjiashvili2022flexible}, which soon received a lot of attention in the network design community. This generalization is called \emph{flexible} graph connectivity and is the focus of this paper. Specifically, the authors in~\cite{adjiashvili2022flexible} considered a scenario in which not all edges are subject to potential failures. The set of edges is partitioned into \emph{safe} and \emph{unsafe}, and the goal is to construct a network resilient to the failure of unsafe edges. A formal definition is given below.

\begin{definition}[Flexible Graph Connectivity Problem (FGC)]
    Given a graph $G=(V,E)$ and a partition of $E$ into safe edges $E_S$ and unsafe edges $E_U$, the goal is to find the smallest subset $E'$ of $E$ such that (1) $H=(V,E')$ is connected, (2) and for every edge $e\in E_U \cap E'$, the graph $(V, E'\setminus{e})$ is connected. 
\end{definition}

Next, we define the vertex-connectivity version of the problem, investigated first in~\cite{bansal2022extensions}. Recall that a \emph{cut-vertex} of a graph is a vertex $u$ such that if we remove $u$ and all its incident edges the number of connected components of the graph increases.
\begin{definition}[Flexible Vertex Connectivity Problem (FVC)]
    Given a graph $G=(V,E)$ and a partition of $V$ into safe vertices $V_S$ and unsafe vertices $V_U$, the goal is to find the smallest subset $E'$ of $E$ such that (1) $H=(V,E')$ is connected, (2) and for every vertex $u\in V_U$, $u$ is not a cut-vertex of $H$. 
\end{definition}

Note that, when $E_S=\emptyset$, FGC reduces to 2ECSS. Similarly, when $U_S=\emptyset$, FVC reduces to 2VCSS. Hence, these problems
are at least as hard as 2ECSS and 2VCSS, respectively. 

For FGC and some further generalizations, several approximation results have been developed in the past few years (see e.g.~\cite{adjiashvili2022flexible, adjiashvili2020fault, bansal2023constant, bansal2022extensions, bansal2023improved, bentert2023complexity, boyd2023approximation,  chekuri2023approximation, nutov2023improved}, and application of flexible graph based techniques). This list shows a growing interest in this problem and its variants in the literature. 

The current best known approximation factor for FGC is given in~\cite{boyd2023approximation}, which is based on the best 2ECSS approximation ratio. At the moment, using the best approximation result available in the literature for 2ECSS, \cite{garg2023improved}, this translates into a bound of $\frac{472}{385}\approx 1.45$.\footnote{If one applies the aforementioned 1.3-approximation found in \cite{DBLP:conf/isaac/0001N23}, then this approximation becomes $1.\bar 4$.}

Of course, a first natural question is the following:
\begin{itemize}
\item \emph{Can the approximation factor for FGC be improved?}
\end{itemize}
Problems involving vertex-connectivity rather than edge-connectivity often turn out to be more challenging to address, and in fact
approximation results on FVC are more scarce.  The authors in~\cite{bansal2022extensions} studied the problem in the \emph{weighted} setting, and give a 2-approximation under the assumption to have at least one safe vertex (they target a more general setting of removing $k$ unsafe vertices).  
Note that a 2-approximation is known for the weighted setting of FGC as well, however as mentioned before, for the standard FGC a significantly better approximation is known. This raises the question of whether a better-than-2 approximation is possible also for FVC.

\begin{itemize}
\item \emph{Is there an algorithm for FVC with approximation factor better than 2?}
\end{itemize}

Finally, we observe that both 2ECSS and 2VCSS have been studied in a generalized setting, called $k$ECSS and $k$VCSS respectively, where one requires $k$-edge-disjoint paths (resp. $k$-vertex-disjoint paths) between each pair of vertices. 
It turns out that for higher values of connectivity requirements, much better approximations can be developed. In particular,~\cite{cheriyan2000approximating,gabow2009approximating} show an approximation factor of $1+ O(\frac{1}{k})$ for $k$ECSS and $k$VCSS. It is natural to ask what happens if we generalize FGC and FVC in a similar way. 
Specifically, we could aim at building graphs that are $k$-edge-connected (resp. $k$-vertex-connected) after the removal of any unsafe edge. 
The last question we are interested in is then the following:
\begin{itemize}
\item \emph{Can we obtain similar approximation bounds as observed for $k$ECSS or $k$VCSS, if we consider FGC and VFC with a higher value of $k$ as connectivity requirement?}
\end{itemize}

\paragraph{Our Results}
In this paper, we focus on the above questions and answer them in a positive way. 

In Section~\ref{sec:FGC}, we prove Theorem~\ref{thm:FECapx}, which improves the best known approximation factor for FGC. 
The theorem is proved by giving a refined analysis of the algorithm developed in~\cite{boyd2023approximation}. Their algorithm relies on an approximation algorithm
for 2ECSS as a subroutine,  whose analysis is used mainly when the size of the optimal solution is large enough compared to the number of vertices $n$. Our improvement stems from realizing that 2ECSS can be approximated better than the current best known factor whenever the optimal solution is large compared to $n$.
\begin{theorem}
\label{thm:FECapx}
    There is a polynomial time $\frac{10}{7}$-approximation algorithm for FGC.
\end{theorem}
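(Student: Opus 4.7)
The plan is to open up the 2ECSS subroutine inside the algorithm of~\cite{boyd2023approximation} and sharpen its analysis in the regime where the 2ECSS optimum is large relative to $n$. That algorithm uses a 2ECSS approximation as a black box, and the resulting FGC ratio treats the 2ECSS guarantee as purely multiplicative, $|\text{ALG}_{2ECSS}|\leq \alpha\,\text{OPT}_{2ECSS}$ with $\alpha=\tfrac{118}{89}$ from~\cite{garg2023improved}. Since that worst case is attained only when $\text{OPT}_{2ECSS}$ is close to the trivial lower bound of $n$ edges, there is slack to exploit whenever the instance forces $\text{OPT}_{2ECSS}\gg n$, and this gain should propagate through the combination used in~\cite{boyd2023approximation}.

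The first step is to revisit the 2ECSS approximation (either~\cite{garg2023improved} or a more transparent predecessor) and extract a sharper bound of additive form, say $|\text{ALG}_{2ECSS}|\leq \text{OPT}_{2ECSS}+c\,n$ for an explicit small constant $c$. Since any 2-edge-connected spanning subgraph has at least $n$ edges, this recovers a multiplicative guarantee of $1+c$ in the worst case, but improves to $1+\tfrac{c}{t}$ whenever $\text{OPT}_{2ECSS}\geq tn$. The second step is to introduce a threshold $t$ and case-split on the size of $\text{OPT}_{\text{FGC}}$: in the large-$\text{OPT}$ regime the sharpened 2ECSS bound, fed into the Boyd et al. combination, gives a ratio tending to $1$ as $t$ grows; in the small-$\text{OPT}$ regime the instance is highly constrained, and one reuses the structural arguments already present in~\cite{boyd2023approximation} together with the trivial fact that any feasible FGC solution has at least $n-1$ edges. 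Optimizing $t$ to balance the two regimes should pin the overall ratio at $\tfrac{10}{7}$.

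The main obstacle is book-keeping inside the 2ECSS subroutine: the $\tfrac{118}{89}$ analysis of~\cite{garg2023improved} rests on a delicate case analysis over ear decompositions and triangle-free $2$-matching structures, and isolating an additive $cn$ overhead cleanly from it is nontrivial. I expect the cleanest path is not to reproduce that entire argument, but to identify a simpler intermediate 2ECSS algorithm whose dependence on $n$ is transparent (for instance, an ear-decomposition-based augmentation), accepting a slightly worse $c$ in return. The remaining work is then essentially numerical: choosing $t$ and the weights in the Boyd et al. combination so that the two cases meet exactly at $\tfrac{10}{7}$.
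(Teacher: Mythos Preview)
Your high-level instinct is correct: the improvement comes from opening up the 2ECSS subroutine in the Boyd et al.\ algorithm and exploiting the slack when $\mathrm{OPT}_{2ECSS}$ exceeds $n$. You also correctly point toward an ear-decomposition-based algorithm as the right tool. However, the concrete execution you sketch has a real gap.

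The issue is the \emph{shape} of the refined 2ECSS bound. You propose an additive form $|\mathrm{ALG}_{2ECSS}|\le \mathrm{OPT}_{2ECSS}+cn$ for some fixed $c$. Plugging this into the Boyd et al.\ combination (with $\mathrm{OPT}(G'')\le 2|OPT_S|+|OPT_U|$ and $n\le |OPT_S|+|OPT_U|$) gives $|ALG_2|\le (2+c)|OPT_S|+(1+c)|OPT_U|$, and balancing against $|ALG_1|\le |OPT_S|+\tfrac32|OPT_U|$ yields an FGC ratio of $\tfrac43+\tfrac{c}{3}$. To reach $\tfrac{10}{7}$ you would need $c\le \tfrac{2}{7}$, i.e., a $\tfrac{9}{7}$-approximation for 2ECSS when $\mathrm{OPT}=n$; no such guarantee is known, and you will not be able to extract one from the existing ear-decomposition analyses. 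So the additive form is a dead end, and the proposed threshold case-split cannot rescue it (the ``small-$\mathrm{OPT}$'' regime is precisely where the additive bound is tight).

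What the paper actually does is extract from the Seb\H{o}--Vygen analysis a bound of the different bilinear form
\[
|\mathrm{ALG}_{2ECSS}|\;\le\;\tfrac{4}{3}n+\tfrac{2}{3}\bigl(\mathrm{OPT}_{2ECSS}-n-1\bigr)\;=\;\tfrac{2}{3}n+\tfrac{2}{3}\,\mathrm{OPT}_{2ECSS}-\tfrac{2}{3}.
\]
This form plugs directly into Algorithm~2 to give $|ALG_2|\le 2|OPT_S|+\tfrac{4}{3}|OPT_U|$, and then a single convex combination with weights $\tfrac37$ and $\tfrac47$ (no threshold, no case split) balances both coefficients at $\tfrac{10}{7}$. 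The missing ingredient in your plan is precisely this: the refined 2ECSS bound must have a sub-unit coefficient on $\mathrm{OPT}$, not just a small additive term in $n$.
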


In Section~\ref{sec:11fvc}  we prove Theorem~\ref{thm:FVCapx}, which yields the first approximation
algorithm for FVC with an approximation factor strictly better than 2.
Its proof constitutes the most technical part of our paper. It combines two different main algorithms, which rely on  
non-trivial combinatorial ingredients, like ear-decompositions and matroid intersection.

\begin{theorem}
\label{thm:FVCapx}
    There is a polynomial time $\frac{11}{7}$-approximation algorithm for FVC.
\end{theorem}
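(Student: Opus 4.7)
The plan is to combine two complementary algorithms whose guarantees are analyzed in different regimes, in the same spirit as the strategy behind Theorem~\ref{thm:FECapx} but now adapted to vertex failures. Let $\mathrm{OPT}$ denote the optimum value of the given FVC instance, let $n=|V|$, and let $\alpha=\frac{4}{3}$ be the best known approximation ratio for 2VCSS~\cite{ameli20234}.

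The first algorithm (``large-$\mathrm{OPT}$ regime'') ignores the safe/unsafe partition and simply returns a 2-vertex-connected spanning subgraph $H_A$ via the 2VCSS approximation. Such an $H_A$ is trivially feasible for FVC and has size at most $\alpha\cdot \mathrm{OPT}(\text{2VCSS})$. Crucially, I expect the analysis of~\cite{ameli20234} to sharpen when $\mathrm{OPT}(\text{2VCSS})$ is large compared to $n$ (analogous to how 2ECSS improves for large optima in the proof of Theorem~\ref{thm:FECapx}), which translates into a strong guarantee whenever $\mathrm{OPT}$ itself is large compared to $n$.

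The second algorithm (``small-$\mathrm{OPT}$ regime'') targets solutions whose size is only slightly more than $n$. It proceeds in two phases. First, it builds a backbone derived from an open ear decomposition of the (2-vertex-connected) blocks of $G$, serving the dual purpose of spanning $V$ and placing as many unsafe vertices as possible into cycles ``for free''. Second, it augments this backbone with a minimum-cardinality set of extra ears so that every remaining unsafe vertex becomes internal to some cycle. I would cast this augmentation as a matroid intersection problem over two matroids: a partition-type matroid that tracks which unsafe vertices still need to be ``covered'', and a graphic-like matroid that encodes which collections of candidate ears can be realized simultaneously in the backbone. Matroid intersection then produces an optimal augmentation in polynomial time, and its size can be bounded by a dual certificate that is itself a lower bound on $\mathrm{OPT}$. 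Taking the better of $H_A$ and $H_B$ and performing a case analysis driven by the ratio $\mathrm{OPT}/n$ balances the two bounds at the threshold $\frac{11}{7}$.

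The main obstacle I foresee is the design and tight analysis of the matroid-intersection step in the second algorithm. Unlike FGC, where the relevant structural notions (2-edge-covers, ear relaxations of 2ECSS) have well-developed matroid models available off-the-shelf, the vertex-failure requirement must contend with subtleties such as length-two and length-three ears through unsafe vertices, ears sharing safe endpoints, and the interaction with cut-vertices of the chosen backbone. Setting up matroids whose common independent sets correspond exactly to feasible augmentations — while still admitting a charging scheme that converts ear/vertex counts into a ratio of $\frac{11}{7}$ rather than a weaker bound like $\frac{3}{2}$ — is the most delicate part of the argument, and I expect this to occupy the bulk of Section~\ref{sec:11fvc}.
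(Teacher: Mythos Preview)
Your first algorithm has a genuine gap: the FGC-to-2ECSS analogy does not transfer. In FGC, doubling each safe edge produces a 2ECSS instance whose optimum is at most $2|OPT\cap E_S|+|OPT\cap E_U|$, which is what lets one charge the 2ECSS output against $\mathrm{OPT}(\text{FGC})$. For FVC there is no comparable ``vertex-doubling'' reduction; simply running a 2VCSS algorithm on $G$ gives you $|H_A|\le \alpha\cdot\mathrm{OPT}(\text{2VCSS})$, but $\mathrm{OPT}(\text{2VCSS})$ is not bounded by any useful multiple of $\mathrm{OPT}(\text{FVC})$. When many vertices are safe, $\mathrm{OPT}(\text{FVC})$ can be essentially $n-1$ while $\mathrm{OPT}(\text{2VCSS})$ sits near $2n$, so your first algorithm degrades to (roughly) a $2$-approximation regardless of any refined 2VCSS analysis for ``large optimum''. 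The parameter you would need to be large is $\mathrm{OPT}(\text{FVC})$, not $\mathrm{OPT}(\text{2VCSS})$, and the two are decoupled.

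The paper's actual proof does not use 2VCSS as a black box at all. Both algorithms share a single open ear decomposition $D$ built greedily from ears of length $\ge 4$, giving $|E(D)|\le\frac{4}{3}(|V(D)|-1)$; the crucial structural lemma is that $G[V\setminus V(D)]$ is a matching, which lets one partition the outside vertices into four classes $K_{1,1},K_{1,2},K_{2,2},K_{2,3}$ according to how many edges any feasible solution must spend on them. The first approximation extends $D$ directly using these classes; the second keeps the classes but reconnects $V(D)$ more cheaply. The matroid intersection is \emph{not} a covering model for unsafe vertices as you describe: it solves a Maximum Rainbow Connection problem (graphic matroid intersected with a partition matroid on colour classes) over \emph{pseudo-edges} on $V(D)$, one colour per $K_{1,2}$-vertex or $K_{2,3}$-edge, encoding the possible attachment paths. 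The resulting component count $\alpha$, together with a good-cycle merging step and the class sizes, yields three separate lower bounds on $\mathrm{OPT}$ that balance the two upper bounds at $\frac{11}{7}$. Your sketch misses both the matching lemma and this pseudo-edge/rainbow formulation, which are the ideas doing the real work.
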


Finally, in Section~\ref{sec:1kFEC} we answer the last question we raised, for FGC specifically. In particular, we consider the FGC problem for higher values of $k$, formally defined below.
\begin{definition}[$k$-Flexible Graph Connectivity Problem ($k$-FGC)]
    Given a graph $G=(V,E)$ and a partition of $E$ into safe edges $E_S$ and unsafe edges $E_U$, the goal is to find the smallest subset $E'$ of $E$ such that (1) $H=(V,E')$ is $1$-edge-connected, (2) and for every $k$ unsafe edges $\{e_1,\dots, e_k\} \subseteq E_U \cap E'$, the graph $(V, E'\setminus \{e_1,\dots, e_k\})$ is $1$-edge-connected. 
\end{definition}

We prove Theorem~\ref{thm:1kFECapx}, which shows that also $k$-FGC becomes somewhat easier to approximate when $k$ grows, as it happens for $k$ECSS.\footnote{We note that a (stronger) approximability of $1 + O \left( \frac{1}{k} \right)$ for $k$-FGC was previously claimed by \cite{adjiashvili2022flexible}, but their proof is flawed as we explain in Appendix~\ref{sec:bug}.}

\begin{theorem}
\label{thm:1kFECapx}
    There is a polynomial time $1 + O \left( \frac{1}{\sqrt{k}} \right)$-approximation algorithm for $k$-FGC.
\end{theorem}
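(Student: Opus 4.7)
The plan is to approach $k$-FGC by linear programming together with a carefully scaled randomized rounding, inspired by the $1+O(1/k)$-approximations known for $k$ECSS.

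First, I set up the natural cut LP: minimize $\sum_{e\in E}x_e$ subject to $(k+1)\,x(\delta(S)\cap E_S)+x(\delta(S)\cap E_U)\ge k+1$ for every proper cut $S$, together with $0\le x_e\le 1$. Every feasible integer $k$-FGC solution satisfies these inequalities, since in any cut either a safe edge is present (making the first term alone reach $k+1$) or at least $k+1$ unsafe edges are present. Hence the LP value is a valid lower bound on $\mathrm{OPT}$. The LP can be solved in polynomial time by ellipsoid with a min-cut separation oracle, viewing it as capacitated connectivity in which safe edges have capacity $k+1$ and unsafe edges have capacity $1$.

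Second, sample each edge $e$ independently with probability $\min\{1,\alpha x^*_e\}$, where $\alpha = 1+c/\sqrt{k}$ for a constant $c$ to be tuned. The expected number of selected edges is at most $\alpha\cdot\mathrm{LP}\le (1+c/\sqrt{k})\,\mathrm{OPT}$, which meets the target on the cost side. For feasibility across a given cut $S$, split into two cases. If $x^*(\delta(S)\cap E_S)\ge 1/2$, then after scaling the probability of sampling no safe edge is bounded by $\exp(-\Omega(\alpha))$. Otherwise the LP constraint forces $x^*(\delta(S)\cap E_U)\ge (k+1)/2$, and a Chernoff tail bound on the number of sampled unsafe edges (mean $\approx \alpha(k+1)/2$, slack $\Theta(\sqrt{k})$) bounds the probability that strictly fewer than $k+1$ are sampled by $\exp(-\Omega(c^2))$.

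Third, the main obstacle is that the naive union bound ranges over exponentially many cuts. My plan to handle this has two complementary pieces. (i) Apply Karger's cut-counting in the capacitated graph: only $n^{O(1)}$ cuts have capacity within a constant factor of the minimum $k+1$, and every other cut carries so much LP slack that its Chernoff failure probability is already $n^{-\omega(1)}$. (ii) Follow the probabilistic step by a deterministic repair phase: detect any still-violated cut via min-cut computations on the sampled subgraph and patch it with a min-cost augmentation via iterative rounding. Charging the repair cost against the slack built into $\alpha$ in a Lagrangian-style accounting keeps the overall cost within $(1+O(1/\sqrt{k}))\,\mathrm{OPT}$. Controlling this repair cost is the most technical step, and it is precisely where the exponent $1/\sqrt{k}$ — rather than the stronger $1/k$ erroneously claimed in earlier work — actually emerges, since Karger-style cut enumeration contributes a $\sqrt{\log n}$ slack that gets absorbed into the constant $c$ but only after balancing against the repair cost.
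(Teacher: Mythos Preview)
Your LP relaxation already has integrality gap tending to $2$, independent of $k$, so no rounding based on it can reach $1+o(1)$. Take a cycle on $n$ vertices with all edges safe. Setting every $x_e=\tfrac12$ is LP-feasible (each cut crosses at least two safe edges, giving $(k+1)\cdot 2\cdot\tfrac12=k+1$), so $\mathrm{LP}\le n/2$; but any integral $k$-FGC solution must at least be connected, so $\mathrm{OPT}=n-1$. The ratio is $2-o(1)$, and this kills the plan before any rounding begins.

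The per-cut concentration step is also broken as written. In your second case you only get $x^*(\delta(S)\cap E_U)\ge (k+1)/2$, so after scaling by $\alpha\approx 1$ the expected number of sampled unsafe edges is about $(k+1)/2$, whereas feasibility without a safe edge in the cut requires at least $k+1$ of them: the target is twice the mean, not the mean minus $\Theta(\sqrt{k})$. In your first case, $\exp(-\Omega(\alpha))$ with $\alpha=1+c/\sqrt{k}$ is merely a constant bounded away from $0$, so it cannot survive a union bound over even polynomially many near-minimum cuts. The repair phase is only sketched, and it would itself have to overcome the factor-$2$ integrality gap above, which it cannot.

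For comparison, the paper does not use an LP at all. It computes a maximum forest $F_s$ on the safe edges, contracts it, and runs an off-the-shelf $(1+O(1/k))$-approximation for $(k{+}1)$ECSS on the contracted graph (then makes the result minimal). The analysis is a two-case split on whether $|F_s|$ exceeds roughly $n\bigl(1-\Theta(1/\sqrt{k})\bigr)$, using only the elementary bounds $\tfrac{n'(k+1)}{2}\le \mathrm{OPT}_{(k+1)\mathrm{ECSS}}$ and $|F|\le n'(k+1)$ on the contracted instance with $n'=n-|F_s|$ vertices. The $1/\sqrt{k}$ arises from balancing these two cases, not from any concentration or cut-counting argument.
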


We were not able to extend our arguments to FVC with higher values of connectivity requirement $k$ in a similar manner to $k$-FGC. We leave this as an open question.

\subsection{Preliminaries}
Here, we begin with some preliminaries to define and construct some tools that will be helpful later on in the paper. 


Given a graph $G=(V,E)$ and a subset of vertices $U\subseteq V$. We denote by $G/U$ the (multi-)graph obtained by first replacing the vertices of $U$ by a single vertex $\hat U$, and then for every edge $uv \in E$ such that $u\in U$ and $v\in V\setminus U$
we add an edge $\hat U v$ to $E$ (Note that there can be multiple copies of the same edge $\hat U v$ in $G/U$, and there will be no loops on $\hat U$).
We sometimes refer to this as \emph{contracting} $G$ by the vertices $U$. We can define a similar operation on a subset of edges $F\subseteq E$, by taking the vertex sets of connected components of $(V, F)$, and contracting $G$ by these sets one by one. We denote this operation by $G/F$. 


Given a graph $G=(V,E)$, and a subset of vertices $U\subseteq V$, we define $G[U] = (U, \{u_1u_2\in E| u_1,u_2\in U\})$, the \emph{induced} subgraph of $U$. Similarly, we can define an induced subgraph on a subset of edges $F\subseteq E$, and by abuse of notation we use the same notation $G[F] = (\{v\in V| \exists u\in V, uv\in F\},F)$. We also say a graph $H=(V',E')$ is a spanning subgraph of $G$ if $H$ is a subgraph of $G$ and $V'=V$.

\begin{definition}[Ear-Decomposition]
    Let $G=(V,E)$ be a graph. An \emph{ear-decomposition} is a sequence \( P_1, \ldots, P_k\), where \(P_1\) is a cycle of $G$, and for each \(i \in \{2, \ldots, k\}\), \(P_i\)  is either:
    \begin{itemize}
        \item a path sharing exactly its two endpoints with \(V(P_1) \cup \ldots \cup V(P_{i-1})\), or;
        \item a cycle that shares exactly one vertex with \(V(P_1) \cup \ldots \cup V(P_{i-1})\).
    \end{itemize}
    \(P_1, \ldots, P_k\) are called \emph{ears}. $P_i$ is an \emph{open} ear if it is a path. An ear-decomposition is \emph{open} if for every $i\in \{2,...,k\}$, $P_i$ is an open ear. We refer to $|E(P)|$ as the \emph{length} of $P$. 
    
    Given an open ear-decomposition \( P_1, \ldots, P_k\), we say that $P'$ is a potential open ear of \( P_1, \ldots, P_k\) if \( P_1, \ldots, P_k,P'\) is itself an open ear-decomposition. \( P_1, \ldots, P_k\) is an open ear-decomposition of a graph $G=(V,E)$ if \( (V(P_0) \cup \ldots \cup V(P_k)=V\) and for each $i$, $E(P_i)\subseteq E$. 
\end{definition}
We will often abbreviate the adjective `2-vertex-connected' with `2VC'. 
The following well known result on open ear-decompositions can be found in Chapter 4 of \cite{west2001introduction}.
\begin{lemma}[\cite{west2001introduction}]
\label{lem:2VCandOpenEars}
    A graph $G$ is $2$VC if and only if it has an open-ear decomposition.
\end{lemma}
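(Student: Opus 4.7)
The plan is to prove both directions of the equivalence. For the forward direction (open ear-decomposition $\Rightarrow$ 2VC), I would induct on the number of ears $k$. The base case $k=1$ is immediate, since a cycle is 2VC. For the inductive step, write $H_{i-1}=P_1\cup\cdots\cup P_{i-1}$ and assume $H_{i-1}$ is 2VC; let $P_i$ be an open ear with endpoints $u,v\in V(H_{i-1})$ and internal vertices in $V\setminus V(H_{i-1})$. To verify that $H_i:=H_{i-1}\cup P_i$ is 2VC, I would use the criterion ``$H_i-z$ is connected for every vertex $z$'' and split into three cases: (i) $z$ is internal to $P_i$, in which case $P_i-z$ splits into two sub-paths each still attached to $H_{i-1}$ through $u$ or $v$; (ii) $z\in V(H_{i-1})\setminus\{u,v\}$, so $H_{i-1}-z$ is connected by induction and the entire $P_i$ is still attached; and (iii) $z\in\{u,v\}$, where $H_{i-1}-z$ is connected by induction and $P_i-z$ is a sub-path attached via the other endpoint.

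For the reverse direction (2VC $\Rightarrow$ open ear-decomposition), I would build the decomposition greedily. Since $G$ is 2VC, it contains a cycle $C$; set $P_1:=C$. Now, given an open ear-decomposition $P_1,\dots,P_{i-1}$ of a proper subgraph $H_{i-1}\subsetneq G$, I construct $P_i$ as follows. If $V(H_{i-1})\neq V$, pick an edge $uv\in E$ with $u\in V(H_{i-1})$ and $v\notin V(H_{i-1})$ (one must exist because $G$ is connected). Because $G$ is 2VC, $G-u$ is connected, so there is a path from $v$ to a vertex of $V(H_{i-1})\setminus\{u\}$ inside $G-u$; choosing such a path $Q$ with the minimum number of vertices from $V(H_{i-1})$, we ensure that $Q$ meets $V(H_{i-1})$ only at its terminal vertex $w\neq u$. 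Then $P_i:=uv\cdot Q$ is an open ear. If instead $V(H_{i-1})=V$ but $E(H_{i-1})\neq E$, any single remaining edge $e\in E\setminus E(H_{i-1})$ is an open ear of length one (a path with both endpoints already in $V(H_{i-1})$). Each iteration strictly increases $|V(H_i)|+|E(H_i)|$, so the process terminates with an open ear-decomposition of $G$.

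The main subtlety is ensuring, in the reverse direction, that the next ear $P_i$ is truly \emph{open}, meaning its internal vertices are new and its two endpoints are distinct; this is exactly what the minimality choice for $Q$ guarantees, while 2-vertex-connectedness of $G$ is what allows us to avoid $u$ when routing from $v$ back to $V(H_{i-1})$. Without 2VC, one could still obtain an ear-decomposition (corresponding to 2-edge-connectedness) but possibly with closed ears, which is why the hypothesis is essential precisely for the ``open'' adjective.
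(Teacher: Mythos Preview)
Your proof is correct and follows the standard argument (essentially Whitney's 1932 theorem). Note, however, that the paper does \emph{not} supply its own proof of this lemma: it is stated with a citation to West's textbook and introduced with ``The following well known result on open ear-decompositions can be found in Chapter~4 of~\cite{west2001introduction}.'' So there is no in-paper proof to compare against; you have filled in what the paper deliberately delegated to a reference.

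Two minor remarks on your write-up. First, in the forward direction you show that $H_k=P_1\cup\cdots\cup P_k$ is 2VC; under the paper's definition an open ear-decomposition of $G$ is only required to span $V(G)$ (not to exhaust $E(G)$), so you should add the one-line observation that a graph containing a spanning 2VC subgraph is itself 2VC. Second, for the same reason your ``$V(H_{i-1})=V$ but $E(H_{i-1})\neq E$'' case in the reverse direction is not actually needed here: once the ears span $V(G)$ you already have an open ear-decomposition of $G$ in the paper's sense. Your minimality argument for choosing $Q$ is fine; equivalently, one can just truncate any $v$--$H_{i-1}$ path in $G-u$ at its first vertex in $V(H_{i-1})$.
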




An important tool for the analysis of our algorithm in Section~\ref{sec:11/9fvc} is the following, which will let us characterize the vertex connectivity of a graph.
\begin{definition}[Blocks]
    Let $G=(V,E)$ be a graph such that $|V(G)|\ge 2$. A block of $G$ is a maximal connected subgraph of $G$ that has at least one edge and has no cut vertex. Therefore if $G$ has no self-loops, a block is either an induced connected subgraph on two vertices or it is a maximal $2$VC subgraph on at least three vertices. 
\end{definition}

We end this section by introducing some useful properties of blocks. For proofs and more details on these, we refer the reader to Chapter 4 of~\cite{west2001introduction}.
\begin{lemma}[\cite{west2001introduction}]
\label{lem:blockbound}
    Let $G=(V,E)$ be a graph, and let $\{B_1,\dots, B_k\}$ be the set of all blocks of $G$. The following properties hold
    \begin{enumerate}
        \item Two blocks share at most one vertex.
        \item The blocks $B_1,\dots, B_k$ of $G$ partition $E$, that is $E = \cup_{i=1}^k E(B_i)$, and $E(B_i) \cap E(B_j) = \emptyset$, for $i\neq j$.
        \item For two distinct edges  $e_1$ and $e_2$, $e_1$ and $e_2$ belong to the same block $B_i$ if and only if there is a cycle in $B_i$ that contains $e_1$ and $e_2$.
     
        \item If $G$ is connected, $G$ has at most $|V|-1$ blocks.
    \end{enumerate}
\end{lemma}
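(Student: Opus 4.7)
These four statements are classical block-structure facts, and I would establish them in the listed order so that later items may invoke earlier ones.

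For (1), I would argue by contradiction: if distinct blocks $B_1,B_2$ share two vertices $u,v$, I would show that $H := B_1\cup B_2$ is connected with no cut vertex, contradicting the maximality of $B_1$. For any $w\in V(H)$: if $w\notin\{u,v\}$, then $w$ lies in (say) $B_1$, and $B_1-w$ is connected (either $|V(B_1)|\ge 3$ so $B_1$ is 2VC, or $|V(B_1)|=2$ in which case $w\in\{u,v\}$, a contradiction), so $H-w$ is connected via $u$ or $v$; if $w\in\{u,v\}$, then $B_1-w$ and $B_2-w$ are each connected and share the remaining vertex of $\{u,v\}$, so $H-w$ is connected. Property (2) then follows immediately: any edge $uv$ is a connected subgraph with no cut vertex, hence lies in some maximal such subgraph, i.e.\ a block; and two distinct blocks cannot share an edge, since that would force them to share both endpoints, violating (1).

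For (3), the ``if'' direction uses (2): any cycle through two edges is itself a connected subgraph with no cut vertex and therefore lies inside a single block. The ``only if'' direction is the substantive part (Whitney's theorem). My plan is to apply the open-ear-decomposition characterization of Lemma~\ref{lem:2VCandOpenEars} to the block $B$ containing $e_1,e_2$ (the two-vertex case is vacuous since such a block has a single edge, so $e_1=e_2$ is forced, contradicting distinctness). I would build an open ear decomposition of $B$ that starts with a cycle $P_1$ containing $e_1$, and then track, through successive ears, a cycle containing $e_1$ that eventually absorbs $e_2$ once the ear carrying $e_2$ is added. An alternative route is via Menger's theorem, producing two internally-disjoint paths between the endpoints of $e_1$ whose union is a cycle forced to pass through $e_2$.

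For (4), I would induct on $k$. The base case $k=1$ is immediate since any block has at least two vertices. For $k\ge 2$, I would use the block-cut tree of $G$: cut vertices have degree at least $2$ there (by definition they belong to $\ge 2$ blocks), so every leaf of this tree is a block $B$ containing exactly one cut vertex $v$ of $G$. Deleting $V(B)\setminus\{v\}$ from $G$ drops the number of blocks by exactly $1$ and the vertex count by $|V(B)|-1\ge 1$, while leaving the graph connected; the inductive hypothesis then gives $k-1\le |V|-2$, i.e.\ $k\le |V|-1$. The step I expect to require the most care is the ``only if'' direction of (3): constructing a cycle through two prescribed edges of a 2VC graph. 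The ear-decomposition approach needs bookkeeping when later ears short-circuit the initial cycle, and the Menger-based route requires choosing the terminals so that $e_2$ is forced into the resulting cycle. The remaining items reduce to bookkeeping once (1) is in hand.
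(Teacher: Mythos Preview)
Your proof plan is correct and follows the standard textbook development of block structure. Note, however, that the paper does not actually prove this lemma: it is stated with attribution to~\cite{west2001introduction}, and immediately before it the authors write that ``for proofs and more details on these, we refer the reader to Chapter 4 of~\cite{west2001introduction}.'' So there is no in-paper proof to compare against; you have simply supplied what the paper outsources.

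Your arguments for (1), (2), and (4) are exactly the classical ones. For (3), both routes you sketch are viable; the cleanest textbook version of the ``only if'' direction is usually to subdivide each of $e_1,e_2$ by a new vertex (which preserves 2-vertex-connectivity) and then invoke the two-internally-disjoint-paths characterization of 2VC on the two new vertices, yielding a cycle through both original edges. This avoids the ear-decomposition bookkeeping you flagged as delicate. One small caveat on your handling of the two-vertex block case: if the ambient graph is a multigraph, a block on two vertices can contain parallel edges, and then two distinct edges do lie on a common (length-two) cycle, so the statement is not vacuous there. In the simple-graph setting the paper primarily works in, your remark is fine.
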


 We also use the following useful Lemma, whose proof can be found in Appendix~\ref{sec:reduceblocks}.
\begin{lemma}
\label{lem:reduceblocks}
    Let $G=(V,E)$ be a connected graph with $B(G)$ many blocks. Let $H$ be a connected, spanning subgraph of $H$ with $B(H) > B(G)$ many blocks. In polynomial time, we can find an edge $e \in E(G)\backslash E(H)$ such that $H'\coloneqq (V(H),E(H)\cup \{e\})$ has fewer than $B(H)$ blocks.
\end{lemma}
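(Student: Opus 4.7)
The plan is to show that an edge $e = uv \in E(G)\setminus E(H)$ strictly reduces the number of blocks when added to $H$ if and only if no block of $H$ contains both $u$ and $v$, and then to show such an edge must exist via a contradiction argument.

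First I would establish the following key structural claim: if $e = uv$ is an edge with $u$ and $v$ both belonging to a single common block $B'$ of $H$, then $H \cup \{e\}$ has exactly the same blocks as $H$ (with $e$ simply added to $B'$). This is because by Lemma~\ref{lem:blockbound}(3), any edge $e'$ of $B'$ lies on a cycle in $B'$ through $u$ and $v$, so $e$ lies on a common cycle with every edge of $B'$; conversely, since all other blocks of $H$ share at most the cut vertices with $B'$ (Lemma~\ref{lem:blockbound}(1)), $e$ cannot lie on a cycle with any edge outside $B'$. Conversely, if no block of $H$ contains both $u$ and $v$, then in the block-cut tree of $H$ the path between (a block containing) $u$ and (a block containing) $v$ uses at least two distinct block nodes; adding $e$ creates a cycle through all these blocks, merging them, and thereby strictly decreases the block count.

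Next, I would prove existence of such an edge by contradiction. Suppose every edge $e \in E(G) \setminus E(H)$ has both endpoints in a common block of $H$. Enumerate these extra edges $e_1,\dots,e_m$ and add them one at a time to $H$. By the claim above, adding $e_1$ produces $H_1 := H \cup \{e_1\}$ with exactly the same set of blocks as $H$. In particular, for each remaining edge $e_i$, $i \geq 2$, its endpoints still share a common block of $H_1$ (namely the same block of $H$ they shared). Inductively, after adding all of them we obtain $G = H \cup \{e_1,\dots,e_m\}$ with the same blocks as $H$, hence $B(G) = B(H)$, contradicting $B(H) > B(G)$.

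Finally, for the algorithmic statement, I would note that the blocks of $H$ can be computed in linear time via standard depth-first search. For each candidate edge $uv \in E(G) \setminus E(H)$, checking whether $u$ and $v$ share a common block of $H$ is straightforward (e.g., by listing the blocks incident to each vertex). The existence argument guarantees the search succeeds, and returning the first such edge yields $H'$ with strictly fewer blocks than $H$. The main subtlety, and the step I expect to require the most care, is the inductive argument that iteratively adding edges whose endpoints share a common block of $H$ preserves the block structure at every step — this rests on Lemma~\ref{lem:blockbound}(1) and (3), and one must carefully verify that ``common block'' is preserved rather than merely ``connected''.
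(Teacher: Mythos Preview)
Your proof is correct and follows a different route from the paper. The paper argues constructively: it finds a block $B$ of $G$ that contains at least two blocks of $H$ (pigeonhole from $B(H)>B(G)$), picks a cut vertex $c$ of $H$ inside $B$ with incident edges $e_1,e_2$ lying in different $H$-blocks, and then uses that $B$ is 2-connected to obtain a cycle through $e_1$ and $e_2$; the arc of this cycle avoiding $c$ must contain an edge of $E(G)\setminus E(H)$ crossing between two $H$-blocks. Your approach instead characterises exactly which edges work (those whose endpoints are not in a common block of $H$) and then proves existence by contradiction: if every edge of $E(G)\setminus E(H)$ had both endpoints in one $H$-block, adding them all would preserve the block decomposition and force $B(G)=B(H)$. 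Your argument is arguably cleaner and avoids the somewhat imprecise ``all share the same vertex $c$'' step in the paper's version.

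One small point: your appeal to Lemma~\ref{lem:blockbound}(3) for the claim ``any edge $e'$ of $B'$ lies on a cycle in $B'$ through $u$ and $v$'' is not quite what that lemma says (it gives a cycle through two specified \emph{edges}, not through an edge and two specified vertices). The cleanest way to justify your key structural claim is via cut vertices: if $u,v$ lie in a common block $B'$ of $H$ and $c$ is any cut vertex of $H$, then $u$ and $v$ remain connected in $H-c$ (since $B'-c$ is connected), so adding $e=uv$ cannot destroy $c$ as a cut vertex; hence $H$ and $H\cup\{e\}$ have identical cut vertices and therefore identical block decompositions (with $e$ absorbed into $B'$). With this tweak your inductive step goes through without any further subtlety.
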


\section{\texorpdfstring{$\frac{11}{7}$}{}-Approximation for FVC}
\label{sec:11fvc}
In this section we provide a $\frac{11}{7}$-approximation algorithm for Flexible Vertex Connectivity Problem. We assume that our given graph $G=(V,E)$, is a simple graph, that is, $G$ does not contain any loops or parallel edges (note that even if $G$ were not simple, parallel edges and loops would not help in finding a feasible solution), and has $n \coloneqq |V|$ vertices.

We begin this section by showing that in order to obtain a $\beta$-approximation for $\beta \geq 1$ one can assume that the input graph (i.e. $G$) has some additional properties, such as not containing specific subgraphs that we call \textit{forbidden cycles}.
\begin{definition}[Forbidden Cycle]
    We say that a $4$-cycle $C$ in $G$ is a forbidden cycle if $C$ has two vertices $w$ and $z$ such that $wz\notin E(C)$ and $\deg_G(w)=\deg_G(z)=2$.
\end{definition}
The following Lemma allows us to assume without loss of generality that our input graph $G$ is $2$VC and does not contain a forbidden cycle. Its proof can be found in Appendix~\ref{sec:removeforbiddencycles}.
\begin{lemma}
\label{lem:removeforiddencycles}
    For $\beta\ge 1$, if there is a $\beta$-approximation for FVC instances that are $2$VC and do not contain forbidden cycles, then there is a $\beta$-approximation for FVC. 
\end{lemma}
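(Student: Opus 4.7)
The plan is to prove the lemma via two preprocessing reductions, applied intertwined until the graph becomes 2VC and free of forbidden cycles. Each reduction is designed so that given a $\beta$-approximation on the reduced instance, we can recover a $\beta$-approximation on the original.

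For the 2VC reduction, if $G$ is disconnected or has any cut-vertex in $V_U$, then the FVC instance is infeasible (any spanning subgraph still has that cut-vertex), and the algorithm can report failure. Otherwise every cut-vertex of $G$ is safe, and I would apply the block decomposition of Lemma~\ref{lem:blockbound} to write $G$ as the union of blocks $B_1,\dots,B_k$. Single-edge bridges must lie in any feasible solution, so their edges are pre-included. On each 2VC block of size at least three, I would invoke the forbidden-cycle reduction below (possibly recursing back through the 2VC reduction). Because the blocks are edge-disjoint and share only safe cut-vertices, any feasible FVC on $G$ is exactly the union of feasible FVCs on the blocks, so $\mathrm{OPT}(G)=\sum_i \mathrm{OPT}(B_i)$ and combining block-wise $\beta$-approximate solutions yields a $\beta$-approximation overall.

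For the forbidden-cycle reduction, given a 2VC graph $G$ with a forbidden cycle $C=w\text{-}a\text{-}z\text{-}b\text{-}w$ (so $\deg_G(w)=\deg_G(z)=2$), I would case-split on the safety of $a,b$. When both $a,b\in V_U$, all four cycle edges are forced in every feasible solution (removing $a$ or $b$ would otherwise disconnect the degree-$2$ vertex $z$ or $w$), so I would pre-include $\{wa,wb,za,zb\}$ and remove $\{w,z\}$ from $G$, then recurse on $G-\{w,z\}$ component by component. When exactly one of $a,b$ is unsafe only two cycle edges are forced (the two incident to the safe neighbor), so I would pre-include those and iterate on $G-z$. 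When neither is unsafe, at least one edge at $z$ is still needed for $z$'s connectivity, and I would iterate on $G-z$ after choosing one. Since each iteration strictly decreases $|V|$ the process terminates; after each pass the resulting graph may no longer be 2VC, so we re-apply the 2VC reduction, alternating until the graph is both 2VC and forbidden-cycle-free.

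The main obstacle is showing that in each case the forward and backward accounting preserve the $\beta$-approximation for all $\beta\ge 1$. The decisive case is both $a,b\in V_U$: the forward step $F^*\mapsto F^*\cap E(G-\{w,z\})$ removes exactly four edges from $F^*$ (the forced cycle edges), and the backward step $F'\mapsto F'\cup\{wa,wb,za,zb\}$ adds exactly four edges, yielding $|F|\le \beta\,\mathrm{OPT}(G-\{w,z\})+4 \le \beta(\mathrm{OPT}(G)-4)+4 \le \beta\,\mathrm{OPT}(G)$ whenever $\beta\ge 1$. A subtle point is that $G-\{w,z\}$ may be disconnected (e.g., when $G$ itself is just the forbidden 4-cycle); by splitting into components rather than introducing a virtual edge $ab$, the accounting stays clean because single-vertex components contribute $0$ to both sides. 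The other cases reduce $c_z\in\{1,2\}$ edges in both directions, and an analogous computation verifies the $\beta$-approximation bound for all $\beta\ge 1$, completing the reduction.
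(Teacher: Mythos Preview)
There is a genuine gap in your forbidden-cycle reduction, specifically in the case where both $a,b\in V_U$. Removing \emph{both} degree-$2$ vertices $w$ and $z$ can destroy feasibility of the instance, so your forward step $F^*\mapsto F^*\cap E(G-\{w,z\})$ does not in general land on a feasible FVC solution, and the inequality $\mathrm{OPT}(G-\{w,z\})\le \mathrm{OPT}(G)-4$ fails. Concretely, take $V=\{a,b,c,w,z\}$, all unsafe, with edges $\{wa,wb,za,zb,ac,bc\}$. This $G$ is $2$VC, hence FVC-feasible, and $w\text{-}a\text{-}z\text{-}b\text{-}w$ is a forbidden cycle with $a,b$ unsafe. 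But $G-\{w,z\}$ is the path $a\text{-}c\text{-}b$ with the unsafe cut-vertex $c$; by your own 2VC-reduction rule you would now ``report failure'', which is wrong since $G$ is feasible. The example scales: replacing $c$ by an all-unsafe path $a\text{-}c_1\text{-}\cdots\text{-}c_k\text{-}b$ keeps $G$ $2$VC while $G-\{w,z\}$ becomes a path with every internal vertex an unsafe cut-vertex. The point is that in any feasible $F^*$ the vertices $w$ and $z$ may be the \emph{only} $a$--$b$ connections; deleting both simultaneously can introduce new unsafe cut-vertices elsewhere.

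The paper avoids this by being less aggressive. When both non-degree-$2$ vertices are unsafe it removes only \emph{one} of $w,z$ (say $w$) and adds back just the two edges $\{uw,wv\}$; the crucial feasibility argument (their Lemma~\ref{lem:forbiddencyclesUnsafe}) uses that $z$ is still present to carry the $u$--$v$ connection, so $F^*\setminus\{uw,wv\}$ remains feasible on $G\setminus\{w\}$. When at least one of $u,v$ is safe, the paper does not delete a vertex at all: it shows (Lemma~\ref{lem:forbiddencycles}) that some optimal solution avoids a specific edge $uw$, and simply deletes that edge from the instance. Your ``exactly one unsafe'' and ``neither unsafe'' cases, which delete $z$ and pre-include one or two edges, suffer from the same issue in a milder form: $F^*\cap E(G-z)$ need not be feasible for $G-z$, so the claimed $c_z$-edge accounting is not justified as written. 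The fix is to mirror the paper: peel off one degree-$2$ vertex (not both) in the all-unsafe case, and peel off an edge (not a vertex) in the remaining cases.
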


Furthermore, using the next Lemma we assume throughout Section~\ref{sec:11fvc} that $OPT\ge n$ as otherwise we can solve the problem optimally in polynomial time. The proof of this Lemma can be found in Appendix~\ref{sec:optnottree}
\begin{lemma}
\label{lem:optnottree}
    Let $OPT$ be an optimal solution to FVC instance $G=(V,E)$.  We have $|OPT| \geq n-1$. Furthermore, if $|OPT| = n-1$, then we can find such a solution in polynomial time.
\end{lemma}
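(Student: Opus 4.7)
The plan is to separate the two claims. For the lower bound, I would note that condition~(1) of FVC demands that $(V, OPT)$ be connected, so it must contain at least $n-1$ edges; hence $|OPT|\ge n-1$.

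For the algorithmic part, the starting observation is structural: if $|OPT|=n-1$ then $T:=(V,OPT)$ is a spanning tree of $G$. By Lemma~\ref{lem:removeforiddencycles} we may assume $G$ is $2$VC, so in particular we may assume $n\ge 3$ (the cases $n\le 2$ are trivial and handled directly). In a tree on at least three vertices, a vertex is a cut-vertex if and only if it is not a leaf, so condition~(2) of FVC translates into the simple combinatorial constraint that every $u\in V_U$ is a leaf of $T$. The task therefore reduces to deciding whether $G$ admits a spanning tree whose internal vertices are all safe, and, if so, producing one.

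My plan for this is to show that such a tree exists if and only if the two easily checkable conditions (i)~$G[V_S]$ is connected and (ii)~every $u\in V_U$ has at least one neighbor in $V_S$ both hold; and, in that case, to construct one by taking any spanning tree of $G[V_S]$ and appending one edge from each $u\in V_U$ to an arbitrary safe neighbour. This yields a spanning tree on $(|V_S|-1)+|V_U|=n-1$ edges in which $V_U$ is precisely the set of leaves, hence a valid FVC solution, and the construction is clearly polynomial-time. Necessity of~(i) is transparent: restricting a valid tree of the above form to $V_S$ gives a spanning tree of $V_S$ contained in $G[V_S]$. The only mildly delicate point, and the main potential obstacle, is establishing necessity of~(ii): I would argue that if an unsafe leaf $u$ of $T$ had its unique $T$-neighbour $v$ also in $V_U$, then the FVC condition applied to $v$ would force $v$ to be a leaf too, and the only tree in which two adjacent vertices are simultaneously leaves is the one-edge tree $uv$, contradicting $n\ge 3$. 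Finally, if $|OPT|=n-1$ then $OPT$ itself witnesses (i) and (ii), so the procedure succeeds and outputs a size-$(n-1)$ feasible solution.
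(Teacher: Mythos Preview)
Your proof is correct and follows essentially the same route as the paper: both recognize that a size-$(n-1)$ feasible solution is a spanning tree in which every unsafe vertex is a leaf, deduce that the safe vertices induce a connected subgraph and that every unsafe vertex has a safe neighbour, and then output a spanning tree of $G[V_S]$ with each unsafe vertex attached as a leaf. One small remark: your appeal to Lemma~\ref{lem:removeforiddencycles} to assume $G$ is $2$VC is out of place here, since that lemma is an approximation-preserving reduction while the present statement is for arbitrary FVC instances; but since you handle $n\le 2$ directly anyway and only use the $2$VC hypothesis to obtain $n\ge 3$, you can simply drop that appeal without affecting the argument.
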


\subsection{Approximation 1}
\label{sec:5/3}

By applying Lemma~\ref{lem:removeforiddencycles}, we can assume without loss of generality that $G$ is $2$VC and does not contain a forbidden cycle. We also assume that $G$ has at least 5 vertices, as we can handle smaller instances by enumeration. Our algorithm relies on the construction of a certain open ear-decomposition which we outline here. 
    Start with $D$ being a cycle of length at least four. We remark that such a cycle must exist, as $n\ge 4$, and $G$ is $2$VC (see~\cite{west2001introduction}). Moreover one can find such a cycle in polynomial time. Now, until there exists a potential open ear of $D$ that has a length of at least 4 for $D$, we find and add one such potential open ear to $D$.
We observe by Lemma~\ref{lem:2VCandOpenEars} that $D$ is $2$VC.
To show that this procedure terminates in polynomial time we rely on the following simple claim, proven in Appendix~\ref{sec:eardecompinpolytime}.
\begin{claim}
\label{lem:eardecompinpolytime}
    Let $D$ be an open-ear decomposition. If there exists a potential open ear of $D$ with a length of at least $4$, it can be detected in polynomial time. 
\end{claim}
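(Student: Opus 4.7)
The plan is to describe a polynomial-time procedure that finds a potential open ear of $D$ of length at least $4$ whenever one exists. Let $S := V(D)$. By definition, any potential open ear is a simple path $v_0 v_1 \cdots v_\ell$ in $G$ with endpoints $v_0, v_\ell \in S$ and $v_0 \neq v_\ell$, while all internal vertices $v_1, \ldots, v_{\ell-1}$ lie in $V \setminus S$. Requiring $\ell \geq 4$ amounts to demanding at least three internal vertices; equivalently, we need two distinct internal vertices $a := v_1$ and $b := v_{\ell-1}$ joined by a simple path of length at least $2$ inside $G[V \setminus S]$.

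First I would compute the connected components of $G[V \setminus S]$: since the internal vertices of any candidate ear form a simple connected path avoiding $S$, they must all lie inside a single component $C$. For each component $C$, and each ordered pair of distinct vertices $a, b \in C$ that both have neighbors in $S$, I would verify two conditions: (i) one can select distinct $x, y \in S$ with $xa, yb \in E(G)$ (equivalently, the union of the $S$-neighbourhoods of $a$ and $b$ has size at least $2$), and (ii) there exists a simple $a$--$b$ path of length at least $2$ inside $C$. Concatenating the edge $xa$, such an $a$--$b$ path in $C$, and the edge $by$ then yields a potential open ear of length at least $4$, and conversely every potential open ear of length $\geq 4$ arises in this way.

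The only nontrivial step is testing condition (ii) efficiently, and this is where I expect the main obstacle to lie, since one might a priori fear needing to find arbitrarily long simple paths (which is NP-hard in general). The key observation is that length at least $2$ is all we need, and this can be decided by a simple case analysis: a simple $a$--$b$ path of length at least $2$ exists in $C$ if and only if either $ab \notin E(G)$ (so that every $a$--$b$ path inside the connected component $C$ uses at least two edges), or $ab \in E(G)$ and the edge $ab$ is not a bridge of $C$ (so that $a$ and $b$ stay connected in $C - \{ab\}$, yielding a length-$\geq 2$ path avoiding the direct edge). Both cases can be checked in polynomial time, for instance via Tarjan's bridge-finding algorithm for the second. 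Since the outer enumeration over components $C$ and ordered pairs $(a,b)$ also runs in polynomial time, the entire detection procedure is polynomial.
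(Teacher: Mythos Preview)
Your argument is correct. It differs from the paper's proof mainly in what is enumerated and how the ``long enough'' condition is tested. The paper brute-forces the first four vertices of the ear: it tries every tuple $(x,y_1,y_2,y_3)$ with $x\in V(D)$ and $y_1,y_2,y_3\notin V(D)$, checks that $xy_1,y_1y_2,y_2y_3\in E$, and then tests whether $y_3$ can reach $V(D)\setminus\{x\}$ in $G-\{x,y_1,y_2\}$; this needs only a connectivity query but pays an $O(n^4)$ enumeration. You instead enumerate only the two boundary internal vertices $a,b$, reduce the length requirement to the existence of an $a$--$b$ path of length $\ge 2$ inside $G[V\setminus S]$, and handle that via the neat observation that such a path exists iff $ab\notin E$ or $ab$ is not a bridge of its component. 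Your route gives a cleaner structural criterion and a smaller $O(n^2)$ enumeration, at the cost of invoking bridge detection; the paper's route is more elementary in that it avoids any structural lemma and relies solely on reachability.
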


The following Lemma, proven in Appendix~\ref{sec:decompositioninvariant}, provides an upper bound on the number of edges in $D$.
\begin{lemma}
\label{lem:decompositioninvariant}
    When the algorithm terminates, we have $|E(D)| \le \frac{4}{3}(|V(D)|-1)$.
\end{lemma}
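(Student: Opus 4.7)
The plan is to exploit the structural invariant maintained by the algorithm: at termination, every ear in the decomposition $D$ has length at least $4$. The first ear $P_1$ is a cycle of length at least $4$ by the initialization, and each subsequent ear is added only when it has length at least $4$. So the whole proof reduces to a short counting argument on ear lengths.

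Let $P_1,P_2,\dots,P_k$ denote the ears of $D$ at termination, and write $\ell_i \coloneqq |E(P_i)|$, with $\ell_i \geq 4$ for every $i$. I would then count edges and vertices of $D$ ear by ear. Every edge of $D$ belongs to exactly one ear, so $|E(D)| = \sum_{i=1}^k \ell_i$. For vertices, the initial cycle $P_1$ of length $\ell_1$ contributes $\ell_1$ vertices, while each open ear $P_i$ for $i \geq 2$ is an internally vertex-disjoint path sharing precisely its two endpoints with $V(P_1)\cup\dots\cup V(P_{i-1})$ and so introduces $\ell_i-1$ new vertices. This gives
\begin{equation*}
|V(D)| \;=\; \ell_1 + \sum_{i=2}^k (\ell_i - 1) \;=\; \Bigl(\sum_{i=1}^k \ell_i\Bigr) - (k-1) \;=\; |E(D)| - (k-1),
\end{equation*}
and therefore $|V(D)| - 1 = |E(D)| - k$.

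The desired inequality $|E(D)| \leq \tfrac{4}{3}(|V(D)|-1)$ is equivalent, after substituting the above identity, to $3|E(D)| \leq 4|E(D)| - 4k$, i.e.\ to $|E(D)| \geq 4k$. Since $|E(D)| = \sum_{i=1}^k \ell_i \geq 4k$ by the ear-length bound, the claim follows. The only point needing care is to justify that the loop invariant really guarantees $\ell_i \geq 4$ for every $i$, including $i = 1$ (the initial cycle): this is exactly what the algorithm's setup ensures, noting that a cycle of length at least $4$ exists in $G$ because $G$ is $2$VC, simple, and has $n \geq 5$ vertices. There is no genuine obstacle here; the whole argument is a clean consequence of the greedy rule that only ears of length $\geq 4$ are ever added.
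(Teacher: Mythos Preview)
Your proof is correct and follows essentially the same approach as the paper: both arguments exploit that every ear (including the initial cycle) has length at least $4$, and turn this into the edge-to-vertex bound. The paper phrases it as an inductive invariant maintained step by step, while you do an equivalent direct count over all ears; the content is the same.
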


With this open ear decomposition $D$, the following key Lemma shows that the edges of $G[V\setminus V(D)]$ have a useful structure, that will be critical to our approximation algorithms.
\begin{lemma}
\label{lem:matching}
    The edges of $G[V\setminus V(D)]$ form a (not necessarily perfect) matching.
\end{lemma}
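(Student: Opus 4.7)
The plan is to argue by contradiction: assume some $v \in V \setminus V(D)$ has two distinct neighbors $u, w \in V \setminus V(D)$, so that $u - v - w$ is a path in $G[V \setminus V(D)]$, and exhibit either a potential open ear of $D$ of length at least $4$ (contradicting the termination of the procedure) or a forbidden cycle (contradicting preprocessing).

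The main step is to seek a $V(D)$-path of the form $a - \cdots - u - v - w - \cdots - b$ with $a \neq b$ in $V(D)$ and interior in $V \setminus V(D)$. By Menger's theorem applied to an auxiliary graph obtained from $G - v$ by adjoining a source $\sigma$ attached to $\{u, w\}$ and a sink $t$ attached to $V(D)$, this reduces to verifying that the minimum $\sigma$-$t$ vertex cut is at least $2$. If it is, two internally vertex-disjoint paths starting at $u$ and $w$ end at distinct vertices of $V(D)$, and concatenating with $u - v - w$ produces the desired length-$\geq 4$ open ear.

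The obstacle is handling the cases in which that cut has size $1$. These fall into three structural possibilities: (A) $\{u, v\}$ separates $w$ from $V(D)$; (B) $\{v, w\}$ separates $u$ from $V(D)$; or (C) some further vertex $x$ with $\{v, x\}$ separating $\{u, w\}$ from $V(D)$. Cases (A) and (B) are symmetric: letting $K$ be the relevant flap (e.g.\ the component of $w$ in $G - \{u, v\}$ in case (A)), $2$-vertex-connectivity of $G$ forces $u$ to have a neighbor in $K$, and two disjoint $w$-to-$V(D)$ paths (one exiting $K$ through each cut vertex) combine into a $V(D)$-path of length at least $1 + 1 + 1 + 1 = 4$. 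Case (C) with $x \in V \setminus V(D)$ admits a similar construction routing through $v$ and $x$.

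The delicate subcase is $x \in V(D)$, where the ``natural'' combined ear may only have length $3$. Here I would examine the flap components $L_u$ and $L_w$ of $u$ and $w$ in $G - \{v, x\}$. When $\max(|L_u|, |L_w|) \geq 2$, a case analysis on whether the extra flap vertex attaches to $v$ or to $x$ (one of these two options is forced by $2$-vertex-connectivity of $G$, else the extra vertex would be disconnected from $V(D)$ after removing the single cut vertex $u$) lets us route the ear through that extra flap vertex to reach length at least $4$. The sole remaining configuration is $L_u = \{u\}$ and $L_w = \{w\}$, in which $\deg(u) = \deg(w) = 2$ with $N(u) = N(w) = \{v, x\}$; the $4$-cycle $u - v - w - x - u$ then has the non-adjacent pair $(u, w)$ consisting of two degree-$2$ vertices, contradicting the no-forbidden-cycle hypothesis. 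This final case, together with the verification that all other configurations yield an open ear of length at least $4$, completes the contradiction and proves the lemma.
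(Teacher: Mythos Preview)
Your proof is correct and reaches the same dichotomy (either a length-$\geq 4$ potential open ear or a forbidden $4$-cycle), but it follows a genuinely different route from the paper.

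The paper's argument is more economical because it first isolates one reusable observation: for \emph{every} edge $ab$ of $G[V\setminus V(D)]$, both $a$ and $b$ already have \emph{direct} neighbours $a',b'$ in $V(D)$ with $a'\neq b'$. This drops out immediately from $2$-connectivity plus the termination condition, since two disjoint $\{a,b\}$-to-$V(D)$ paths concatenated with $ab$ form a potential open ear that must therefore have length exactly~$3$. With that in hand, the path $u{-}v{-}w$ in $G'$ is dispatched by a short local case split: (a) if $uw\in E$ then $u'{-}u{-}w{-}v{-}v'$ is a length-$4$ ear; (b) if the path extends to $u{-}v{-}w{-}z$ in $G'$ one of the direct-neighbour pairs gives a length-$\geq 4$ ear; (c) otherwise $u$ and $w$ are forced to have degree~$2$ with identical neighbour set $\{v,u'\}$, and $u{-}v{-}w{-}u'$ is the forbidden cycle.

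Your approach bypasses that preliminary lemma and works directly with Menger-style cut arguments. This is sound but makes the analysis heavier. In particular, your case~(C) with $x\in V(D)$ and a flap of size $\geq 2$ needs a non-obvious routing: since $\{v,x\}$ is a genuine $2$-separator, both $v$ and $x$ have neighbours in the flap, and either these can be chosen distinct (so one threads the ear through the flap from $x$ to $v$ and then out via $v$ into the $V(D)\setminus\{x\}$ side), or a single flap vertex is the unique attachment for both, which would make it a cut vertex and contradict $2$-connectivity. Your parenthetical that ``the extra flap vertex attaches to $v$ or to $x$'' is a little loose as written (such a vertex may only reach $v$ or $x$ through other flap vertices, not by a direct edge), but the conclusion survives once stated this way. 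The trade-off is clear: the paper's proof buys simplicity by extracting one clean lemma up front; yours is more self-contained but pays for it with a longer cut-based case analysis.
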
 
\begin{proof}
    For simplicity, denote $G' \coloneqq G[V \setminus V(D)]$. If $E(G') = \emptyset$, we are done. Otherwise, consider edge $uv\in E(G')$. We first wish to show that there are vertices $u'\neq v'\in V(D)$, such that $uu',vv'\in E$. That is, $uv$ is part of a potential open ear for $D$ of length three.

    Since $G$ is $2$VC there are $u' \neq v'\in V(D)$, such that there exist vertex disjoint paths $P_u$ and $P_v$ from $u$ to $u'$ and $v$ to $v'$ respectively. Since $u'\neq v'$, the path $P_u \cup \{uv\} \cup P_v$ is a potential open ear of $D$. By our construction, there are no potential open ears of length at least $4$ possible anymore. Thus we can see the length of the path $P_u \cup \{uv\} \cup P_v$ is at most $3$, and hence exactly $3$. Therefore, $P_u =\{uu'\}$ and $P_v= \{vv'\}$. 
    
    To prove the claim, we assume for contradiction $G'$ has a vertex $v$ of degree at least two. That is, we assume there are edges $uv,vw\in E(G')$ for $u\neq w$. We distinguish the following cases for the path $u,v,w$:
    \begin{figure}[t]
        \begin{center}
            \begin{tabular}{c c c}
                (a)    
\begin{tikzpicture}
    
    \tikzstyle{black dot}=[fill=black, draw=black, circle, minimum size=0pt,inner sep=2pt, outer sep=2pt]
    \tikzset{terminal/.style={draw=black,  thick,minimum size=0pt, inner sep=2.5pt, outer sep=1pt}}
    \tikzset{P node/.style={fill={rgb,255: red,20; green,154; blue,0}, draw={rgb,255: red,20; green,154; blue,0}, circle, minimum size=0pt,inner sep=1pt, outer sep=1pt}}
    \tikzset{Writing/.style={shape=circle} }
    \tikzstyle{empty circle}=[fill=white, draw=black, shape=circle]

    \tikzstyle{witness edge}=[-, draw={rgb,255: red,195; green,0; blue,3}, very thick]
    \tikzstyle{T edges}=[-, thick]
    \tikzstyle{Fat edge}=[-, ultra thick]
    \tikzstyle{new witness}=[-, draw={rgb,255: red,195; green,0; blue,3}, dashed, ultra thick]
    \tikzstyle{connected terminals}=[-, draw=black, dashed, very thick]
    \tikzstyle{P}=[-, draw={rgb,255: red,20; green,154; blue,0}, very thick]

		\node [style=Writing] (0) at (-1.75, 1.5) {};
		\node [style=Writing] (1) at (1.75, 1.5) {};
		\node [style=Writing] (2) at (1.75, -1.5) {};
		\node [style=Writing] (3) at (-1.75, -1.5) {};
		\node [style=Writing] (4) at (-1.225, 0) {$D$};
		\node (5) at (-1, 1) {};
		\node (6) at (-1, -1) {};
		\node [style=black dot] (7) at (-0.775, 0.625) {};
		\node [style=black dot] (8) at (-0.825, -0.675) {};
		\node [style=black dot] (9) at (0.5, 1) {};
		\node [style=black dot] (10) at (0.5, -1) {};
		\node [style=black dot] (11) at (1.25, 0) {};
		\node [style=Writing] (12) at (0.85, 1.25) {$u$};
		\node [style=Writing] (13) at (1.5, 0.325) {$w$};
		\node [style=Writing] (14) at (0.875, -1.025) {$v$};
		\node [style=Writing] (15) at (-0.65, -0.4) {$v'$};
		\node [style=Writing] (16) at (-0.575, 0.325) {$u'$};
		\node [style=Writing] (17) at (0.75, 0) {$C$};
		\draw [style=T edges, bend right=90] (-1, 1) to (-1, -1);
		\draw [style=T edges, bend left=90] (-1, 1) to (-1, -1);
		\draw [style=T edges] (7) to (9);
		\draw [style=T edges, bend left=15, looseness=1.25] (9) to (11);
		\draw [style=T edges, bend left, looseness=0.75] (11) to (10);
		\draw [style=T edges, bend left=15] (10) to (9);
		\draw [style=T edges] (8) to (10);
\end{tikzpicture}&
                (b)    
\begin{tikzpicture}
    
    \tikzstyle{black dot}=[fill=black, draw=black, circle, minimum size=0pt,inner sep=2pt, outer sep=2pt]
    \tikzset{terminal/.style={draw=black,  thick,minimum size=0pt, inner sep=2.5pt, outer sep=1pt}}
    \tikzset{P node/.style={fill={rgb,255: red,20; green,154; blue,0}, draw={rgb,255: red,20; green,154; blue,0}, circle, minimum size=0pt,inner sep=1pt, outer sep=1pt}}
    \tikzset{Writing/.style={shape=circle} }
    \tikzstyle{empty circle}=[fill=white, draw=black, shape=circle]

    \tikzstyle{witness edge}=[-, draw={rgb,255: red,195; green,0; blue,3}, very thick]
    \tikzstyle{T edges}=[-, thick]
    \tikzstyle{Fat edge}=[-, ultra thick]
    \tikzstyle{new witness}=[-, draw={rgb,255: red,195; green,0; blue,3}, dashed, ultra thick]
    \tikzstyle{connected terminals}=[-, draw=black, dashed, very thick]
    \tikzstyle{P}=[-, draw={rgb,255: red,20; green,154; blue,0}, very thick]

		\node [style=Writing] (0) at (-1.75, 1.5) {};
		\node [style=Writing] (1) at (1.75, 1.5) {};
		\node [style=Writing] (2) at (1.75, -1.5) {};
		\node [style=Writing] (3) at (-1.75, -1.5) {};
		\node [style=Writing] (4) at (-1.475, 0.25) {$D$};
		\node [style=Writing] (5) at (-1, 1.25) {};
		\node [style=Writing] (6) at (-1, -1.25) {};
		\node [style=black dot] (7) at (-0.675, 0.525) {};
		\node [style=black dot] (8) at (-0.925, -0.85) {};
		\node [style=black dot] (9) at (0.75, 1.25) {};
		\node [style=black dot] (10) at (0.9, -0.35) {};
		\node [style=black dot] (11) at (1, 0.5) {};
		\node [style=Writing] (12) at (1.1, 1.25) {$u$};
		\node [style=Writing] (13) at (1.25, -0.275) {$w$};
		\node [style=Writing] (14) at (1.375, 0.475) {$v$};
		\node [style=Writing] (15) at (-1.025, -0.15) {$v'$};
		\node [style=Writing] (16) at (-1.0, 0.85) {$u'=w'$};
		\node [style=black dot] (17) at (0.5, -1) {};
		\node [style=black dot] (18) at (-0.675, -0.25) {};
		\node [style=Writing] (19) at (0.875, -1) {$z$};
		\node [style=Writing] (20) at (-1.25, -0.675) {$z'$};
		\draw [style=T edges, bend right=90] (-1, 1.25) to (-1, -1.25);
		\draw [style=T edges, bend left=90] (-1, 1.25) to (-1, -1.25);
		\draw [style=T edges] (7) to (9);
		\draw [style=T edges] (7) to (10);
		\draw [style=T edges] (18) to (11);
		\draw [style=T edges] (9) to (11);
		\draw [style=T edges] (11) to (10);
		\draw [style=T edges] (10) to (17);
		\draw [style=T edges] (17) to (8);
\end{tikzpicture}&
                (c)    
\begin{tikzpicture}
    
    \tikzstyle{black dot}=[fill=black, draw=black, circle, minimum size=0pt,inner sep=2pt, outer sep=2pt]
    \tikzset{terminal/.style={draw=black,  thick,minimum size=0pt, inner sep=2.5pt, outer sep=1pt}}
    \tikzset{P node/.style={fill={rgb,255: red,20; green,154; blue,0}, draw={rgb,255: red,20; green,154; blue,0}, circle, minimum size=0pt,inner sep=1pt, outer sep=1pt}}
    \tikzset{Writing/.style={shape=circle} }
    \tikzstyle{empty circle}=[fill=white, draw=black, shape=circle]

    \tikzstyle{witness edge}=[-, draw={rgb,255: red,195; green,0; blue,3}, very thick]
    \tikzstyle{T edges}=[-, thick]
    \tikzstyle{Fat edge}=[-, ultra thick]
    \tikzstyle{new witness}=[-, draw={rgb,255: red,195; green,0; blue,3}, dashed, ultra thick]
    \tikzstyle{connected terminals}=[-, draw=black, dashed, very thick]
    \tikzstyle{P}=[-, draw={rgb,255: red,20; green,154; blue,0}, very thick]

		\node [style=Writing] (0) at (-1.75, 1.5) {};
		\node [style=Writing] (1) at (1.75, 1.5) {};
		\node [style=Writing] (2) at (1.75, -1.5) {};
		\node [style=Writing] (3) at (-1.75, -1.5) {};
		\node [style=Writing] (4) at (-1.275, 0) {$D$};
		\node [style=Writing] (5) at (-1, 1) {};
		\node [style=Writing] (6) at (-1, -1) {};
		\node [style=black dot] (7) at (-0.625, 0.25) {};
		\node [style=black dot] (8) at (-0.825, -0.675) {};
		\node [style=black dot] (9) at (0.5, 1) {};
		\node [style=black dot] (10) at (0.5, -0.5) {};
		\node [style=black dot] (11) at (1.25, 0.25) {};
		\node [style=Writing] (12) at (0.85, 1) {$u$};
		\node [style=Writing] (13) at (0.5, -0.175) {$w$};
		\node [style=Writing] (14) at (1.625, 0.225) {$v$};
		\node [style=Writing] (15) at (-0.65, -0.4) {$v'$};
		\node [style=Writing] (16) at (-0.825, 0.575) {$u'$};
		\draw [style=T edges, bend right=90] (-1, 1) to (-1, -1);
		\draw [style=T edges, bend left=90] (-1, 1) to (-1, -1);
		\draw [style=T edges] (7) to (9);
		\draw [style=T edges] (9) to (11);
		\draw [style=T edges] (11) to (10);
		\draw [style=T edges] (7) to (10);
		\draw [style=T edges, bend left=75] (11) to (8);
		\draw [style=connected terminals, bend right=45] (7) to (8);
\end{tikzpicture}
            \end{tabular}
        \end{center}
        \caption{
        (a) Here we have a triangle $C$, with $V(C)=\{u,v,w\}$, in $G'$. A potential open ear of length $4$ for $D$ is the path $uu',uw,wv, vv'$. 
        (b) $u,v,w,z$ is a path in $G'$. Each vertex in this path is adjacent to a vertex $u',v',w',z'\in V(D)$, respectively. Since $w'\neq z'$ by construction, we can find a potential open ear of $D$ of length at least $4$ for $D$.
        (c) Here the path $u,v,w$ is in neither case $a$ nor $b$. Thus, $u$ and $w$ are degree $2$ in $G$, with identical neighbourhood. So $u',u,v,w$ is a forbidden square.
        }
        \label{fig:matchings}
    \end{figure}
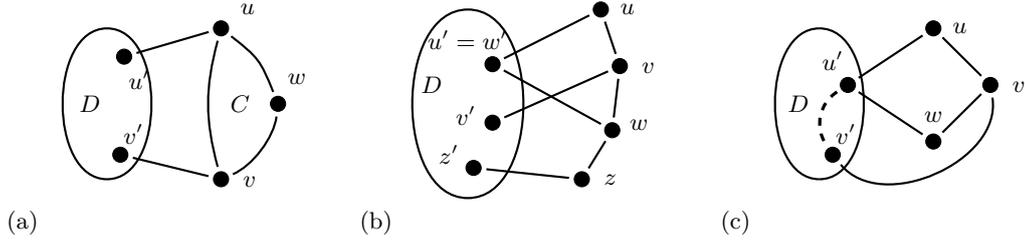
        
    \textbf{Case a:} $uw$ is an edge of $G'$. We know there are vertices $u'\neq v'\in V(D)$ with $uu',vv'\in E$. Now $\{u'u,uw,wv,vv'\}$ is the edge set of a potential open ear of $D$ of length four, contradicting our construction of $D$. See Figure~\ref{fig:matchings}.(a).

    \textbf{Case b:} $u,v,w$ is part of a path $P$ of length three in $G'$. Without loss of generality let $u,v,w,z$ be a subpath of $P$ in $G'$ ($z \notin \{u, v, w\}$). We know there are vertices $u', v', w', z'\in V(D)$, with $u'\neq v'$, $w'\neq z'$, such that $uu'$, $vv'$, $ww', zz' \in E$. Since $w'\neq z'$, at most one of $u' = w'$ and $u' = z'$ can be true. Assume for contradiction that $u' \neq w'$. Then the path $u',u,v,w,w'$  is a potential open ear for $D$ of length $4$. Contradicting our construction of $D$.
    If $u' \neq z'$, we can find a similar contradiction. See Figure~\ref{fig:matchings}.(b).

    \textbf{Case c:} $u,v,w$ is neither in case $a$ nor in case $b$. We know there are vertices $u',w'\in V(D)$, such that $uu'$, $ww'\in E$. 
    If $u'\neq w'$, then the path  $u',u,v,w,w'$ is a potential open ear of $D$ of length $4$, a contradiction. 
    So we see $u'=w'$. So this implies that $u'$ is the unique neighbour of $u$ and $w$ in $V(D)$, or again we find a potential open ear of $D$ of length $4$. Furthermore, as we assume we are not in case a or b, we can observe that $v$ and $u'$ are the only neighbours of $u$ and $w$, ie, $deg_G(w) = deg_G(u)=2$. That is, $u,v,w,u'$ is a forbidden cycle, contradicting our application of Lemma~\ref{lem:removeforiddencycles}.
\end{proof}
Since the edges of $G[V\setminus V(D)]$ form a matching, each connected component of $G[V\setminus V(D)]$ is either a singleton or an edge.
We will now partition the vertices of $V\setminus V(D)$ into the sets $K_{1,1}, K_{1,2}, K_{2,2},$ and $K_{2,3}$. 

We define $K_{1,1}\subseteq V\setminus V(D)$ as the singletons of $G[V\setminus V(D)]$ that have a safe vertex neighbour in $V(D)$. 
We define $K_{1,2}\subseteq V\setminus V(D)$ as the singletons of $G[V\setminus V(D)]$ that do not have a safe vertex neighbour in $V(D)$. 
We define $K_{2,2}\subseteq V\setminus V(D)$ as the endpoints of edges $uv$ in $G[V\setminus V(D)]$ that satisfy one of the following: $u$ and $v$ are both adjacent to a (possibly equal) safe vertex in $V(D)$, or one of $u$ or $v$ are safe, and that safe vertex is adjacent to a safe vertex in $V(D)$. 
Finally, we define $K_{2,3} = V\setminus (V(D) \cup K_{1,1}\cup K_{1,2}\cup K_{2,2})$.

Intuitively, we imagine the set $K_{i,j}$ to represent vertices of the components of $G[V\setminus V(D)]$ with $i$ vertices, where any feasible FVC of solution of $G$ must have at least $j$ edges with endpoints in a component.



We now describe our first algorithm, which will compute a feasible solution $APX_1$. Starting with $APX_1 \coloneqq \emptyset$. We first add the edges of $D$ to $APX_1$. If $V\backslash V(D)= \emptyset$, then $APX_1$ is feasible since $D$ is 2VC, and we are done. Otherwise, we buy edges in the following way to make $APX_1$ feasible. 

First, for every $v\in K_{1,1}$, buy an edge $uv\in E$ where $u\in V(D)$ is safe (such a $u$ exists, by definition of $K_{1,1}$).
Second, for every $v\in K_{1,2}$, we buy arbitrary pair of edges $uv,u'v\in E$, where $u\neq u'$ (these edges exists since $G$ is assumed to be 2VC).
Third, for every edge $uv$ of $G[K_{2,2}]$, if $u$ and $v$ are both adjacent to (potentially equal) safe vertices $u'$ and $v'$ in $V(D)$, then we buy the edges $uu'$ and $vv'$. 
If at least one of $u$ and $v$, is not adjacent to a safe vertex in $V(D)$, then by definition of $K_{2,2}$, (without loss of generality) $v$ is safe and is adjacent to a safe vertex in $v'\in V(D)$; In that case, we buy edges $uv,vv'$.
Lastly, for each edge $uv$ in $G[K_{2,3}]$, we buy edge $uv$ and arbitrary pair of edges $uu'$ and $vv'$, where $u' \neq v'$. Observe that such edges must exist again as $G$ is $2$VC and $n\ge4$.  

The next Lemma is proven in Appendix~\ref{sec:apx1feasible}


\begin{lemma}
\label{lem:apx1feasible}
    Our algorithm computes a feasible FVC solution, $APX_1$, in polynomial time.
\end{lemma}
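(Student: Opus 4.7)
The plan is to verify two properties of $APX_1$: (i) it is connected, and (ii) no unsafe vertex is a cut-vertex of $(V, APX_1)$. Polynomial running time will be essentially immediate since the algorithm performs a bounded number of edge look-ups per vertex (or per matched pair) in $V \setminus V(D)$, on top of constructing $D$ which Claim~\ref{lem:eardecompinpolytime} already gives us in polynomial time.

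For connectivity, I would argue that $D$ is 2VC (hence connected) by Lemma~\ref{lem:2VCandOpenEars}, and by Lemma~\ref{lem:matching} the connected components of $G[V\setminus V(D)]$ are either singletons or single edges, each of which gets at least one edge to $V(D)$ in the construction (the $K_{1,1}$-edge, the two $K_{1,2}$-edges, the $uu'$ or $vv'$ edge for $K_{2,2}$-pairs, and the $uu',vv'$ edges for $K_{2,3}$-pairs). This reduces connectivity of $APX_1$ to the 2VCness of $D$.

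The main content is property (ii), which I would prove by a case analysis on an unsafe vertex $u$. First, if $u \in V(D)$, then $D-u$ is connected by 2VCness of $D$, and I need to check that every vertex in $V\setminus V(D)$ still has an $APX_1$-edge to $D-u$: for $K_{1,1}$ this holds because its unique chosen neighbor in $V(D)$ is \emph{safe} (hence $\neq u$); for $K_{1,2}$ because we bought two parallel attachments to two distinct vertices of $V(D)$; for a $K_{2,2}$-pair the two chosen attachment endpoints $u',v' \in V(D)$ are safe by construction, so neither equals $u$; and for a $K_{2,3}$-pair the two chosen attachment endpoints $u',v' \in V(D)$ are distinct, so at least one survives, and the matching edge $uv$ of $G[K_{2,3}]$ keeps the pair together. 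Second, if $u \in V\setminus V(D)$ is unsafe, I handle each class: $u \in K_{1,1}$ or $K_{1,2}$ is essentially a pendant attached only to $V(D)$, so deleting $u$ removes no non-incident edge and $D$ still connects the rest; for $u \in K_{2,2}$ with partner $v$, the construction guarantees that $v$ retains an $APX_1$-edge to $V(D)$ (either directly via $vv'$, or via the safe $v$-to-safe-$v'$ edge in the second subcase), so $v$ and hence the remainder reaches $D$; for $u \in K_{2,3}$ with partner $v$, the edge $vv'$ survives and attaches $v$ to $D$.

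The main obstacle I anticipate is bookkeeping the $K_{2,2}$ case, which has two subcases in the construction (both endpoints safe-attached to $V(D)$, versus one endpoint safe and safe-attached to $V(D)$); in each subcase I need to verify the cut-vertex condition for every possible choice of which endpoint is the unsafe vertex being removed. This is straightforward but the combinatorics of ``unsafe vs.~safe'' for the pair $(u,v)$ interacts with the two subcases, so the argument needs to be spelled out carefully. Everything else, including the polynomial running time bound, follows by inspection.
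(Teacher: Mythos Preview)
Your proposal is correct. The paper's proof arrives at the same conclusion but organizes the argument differently: rather than a uniform case analysis on the removed unsafe vertex, it observes that the edges bought for each $K_{1,2}$-vertex and each $K_{2,3}$-pair are precisely potential open ears of $D$ (of length two and three, respectively), so by Lemma~\ref{lem:2VCandOpenEars} the graph $(V(D)\cup K_{1,2}\cup K_{2,3}, APX_1)$ is in fact $2$VC, not merely free of unsafe cut-vertices. Only the $K_{1,1}$ and $K_{2,2}$ attachments then require the ``attached to a safe vertex'' observation. This buys a shorter proof and explains structurally why the $K_{1,2}$ and $K_{2,3}$ cases never cause trouble, whereas your direct case analysis is more elementary and makes explicit what happens when each type of vertex is deleted; both approaches are valid and the running-time part is identical.
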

By our construction of $APX_1$, $|APX_1|\le |E(D)|+ |K_{1,1}| + 2|K_{1,2}| + |K_{2,2}| + \frac{3}{2}|K_{2,3}|$. The following Lemma is clear since $|E(D)|\leq \frac{4}{3}(|V(D|-1)$, by Lemma~\ref{lem:decompositioninvariant}.
\begin{lemma}
\label{lem:earAPX}
    $|APX_1| \leq \frac{4}{3}(|V(D)| -1) + |K_{1,1}| + 2|K_{1,2}| + |K_{2,2}| + \frac{3}{2}|K_{2,3}|$
\end{lemma}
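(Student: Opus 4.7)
The plan is to simply count the edges bought in each phase of the construction of $APX_1$ and then apply Lemma~\ref{lem:decompositioninvariant} to bound $|E(D)|$. The four sets $K_{1,1}, K_{1,2}, K_{2,2}, K_{2,3}$ partition $V \setminus V(D)$ and, by construction, the algorithm handles them in four disjoint phases, so the total edge count is just the sum of the contributions of each phase plus $|E(D)|$.

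Concretely, I would argue as follows. The initial phase contributes $|E(D)|$ edges. For each vertex $v \in K_{1,1}$ the algorithm buys exactly one edge of the form $uv$ with $u \in V(D)$ safe, giving a contribution of at most $|K_{1,1}|$. For each vertex $v \in K_{1,2}$ the algorithm buys exactly two edges $uv, u'v$, so the contribution is at most $2|K_{1,2}|$. Since the edges of $G[V \setminus V(D)]$ form a matching by Lemma~\ref{lem:matching}, the vertices in $K_{2,2}$ (resp. $K_{2,3}$) are exactly the endpoints of the edges of $G[K_{2,2}]$ (resp. $G[K_{2,3}]$), so $G[K_{2,2}]$ has $|K_{2,2}|/2$ edges and $G[K_{2,3}]$ has $|K_{2,3}|/2$ edges. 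For each edge of $G[K_{2,2}]$ the algorithm buys exactly $2$ edges, giving a contribution of $2 \cdot \frac{|K_{2,2}|}{2} = |K_{2,2}|$; for each edge of $G[K_{2,3}]$ it buys exactly $3$ edges, giving a contribution of $3 \cdot \frac{|K_{2,3}|}{2} = \frac{3}{2}|K_{2,3}|$. Summing, we obtain
\[
|APX_1| \;\le\; |E(D)| + |K_{1,1}| + 2|K_{1,2}| + |K_{2,2}| + \tfrac{3}{2}|K_{2,3}|.
\]

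Finally, applying Lemma~\ref{lem:decompositioninvariant}, which gives $|E(D)| \le \tfrac{4}{3}(|V(D)|-1)$, we obtain the desired bound. There is no real obstacle here: the only thing to be careful about is using Lemma~\ref{lem:matching} to ensure that each connected component of $G[V \setminus V(D)]$ contributing to $K_{2,2}$ or $K_{2,3}$ is exactly an edge, so that counting by edges of $G[K_{2,2}]$ and $G[K_{2,3}]$ (rather than by vertices) gives precisely the factors $1$ and $\tfrac{3}{2}$ in front of $|K_{2,2}|$ and $|K_{2,3}|$.
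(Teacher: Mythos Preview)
Your proposal is correct and follows exactly the same approach as the paper: the paper states immediately before the lemma that by construction $|APX_1|\le |E(D)|+ |K_{1,1}| + 2|K_{1,2}| + |K_{2,2}| + \frac{3}{2}|K_{2,3}|$ and then simply invokes Lemma~\ref{lem:decompositioninvariant} to bound $|E(D)|$. Your write-up is slightly more explicit (spelling out the per-phase edge counts and using Lemma~\ref{lem:matching} to justify the $|K_{2,2}|/2$ and $|K_{2,3}|/2$ component counts), but the argument is identical.
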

We fix an optimal solution $OPT$ to the instance $G = (V,E)$. The following Lemma provides a lower bound on $|OPT|$. The proof can be found in Appendix~\ref{sec:UpperboundSimple}.

\begin{lemma}\label{lem:UpperboundSimple}
    $|OPT|\ge \max \{ n, |K_{1,1}| + 2|K_{1,2}| + |K_{2,2}| + \frac{3}{2}|K_{2,3}|\}$
\end{lemma}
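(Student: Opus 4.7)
The first bound $|OPT|\ge n$ is the standing assumption stated just after Lemma~\ref{lem:optnottree}. The rest of my plan targets the bound $|OPT|\ge |K_{1,1}|+2|K_{1,2}|+|K_{2,2}|+\tfrac{3}{2}|K_{2,3}|$. The idea is to charge each edge of $OPT$ to a connected component of $G[V\setminus V(D)]$. By Lemma~\ref{lem:matching}, every such component $C$ is either a singleton $\{v\}$ or a pair $\{u,v\}$ joined by a single edge, and any $OPT$-edge incident to $C$ is either internal (only possible when $|C|=2$, namely the edge $uv$) or a crossing edge to $V(D)$. In particular, each $OPT$-edge is incident to at most one component, so it suffices to prove a lower bound $L(C)$ on the number of $OPT$-edges incident to each $C$ and then sum.

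For $C=\{v\}\subseteq K_{1,1}$ I only need $L(C)\ge 1$, which is immediate from connectivity of $(V,OPT)$. For $C=\{v\}\subseteq K_{1,2}$ I argue $L(C)\ge 2$: every $V(D)$-neighbour of $v$ is unsafe by the definition of $K_{1,2}$, so if $v$ had only one incident $OPT$-edge $vu$ then removing the unsafe vertex $u$ would isolate $v$, contradicting feasibility. For $C=\{u,v\}\subseteq K_{2,2}$ I show $L(C)\ge 2$ by splitting on whether $uv\in OPT$: if not, each of $u$ and $v$ needs its own crossing edge to $V(D)$; if so, connectivity of $(V,OPT)$ still forces at least one crossing edge from $C$ to $V(D)$.

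The main obstacle is the $K_{2,3}$ case, where I want $L(C)\ge 3$ for $C=\{u,v\}$. I plan to assume toward a contradiction that $L(C)=2$, which is the minimum by the $K_{2,2}$ argument. In Case (a) the two incident $OPT$-edges are crossing edges $uu'$ and $vv'$ with $uv\notin OPT$; then $u'$ must be safe, since otherwise removing the unsafe vertex $u'$ would isolate $u$ in $(V,OPT)$, and symmetrically $v'$ is safe, so both $u$ and $v$ have a safe neighbour in $V(D)$, placing $C$ into $K_{2,2}$ by its first clause. In Case (b) the two incident $OPT$-edges are one crossing edge (say $uu'$) and the internal edge $uv$; then $u$ must be safe (else removing $u$ isolates $v$) and $u'$ must be safe (else removing $u'$ isolates $\{u,v\}$), so $u$ is a safe vertex of $C$ adjacent to a safe vertex of $V(D)$, placing $C$ into $K_{2,2}$ by its second clause. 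Either conclusion contradicts $C\subseteq K_{2,3}$.

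Summing $L(C)$ over the $|K_{1,1}|$ singletons in $K_{1,1}$, the $|K_{1,2}|$ singletons in $K_{1,2}$, the $|K_{2,2}|/2$ pairs in $K_{2,2}$, and the $|K_{2,3}|/2$ pairs in $K_{2,3}$ yields exactly $|K_{1,1}|+2|K_{1,2}|+|K_{2,2}|+\tfrac{3}{2}|K_{2,3}|$ as a lower bound on $|OPT|$. The only non-routine step I expect is the $K_{2,3}$ case analysis, where both sub-cases of $L(C)=2$ must be ruled out by combining cut-vertex reasoning with the two clauses in the definition of $K_{2,2}$.
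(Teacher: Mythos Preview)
Your proposal is correct and follows essentially the same approach as the paper: bound the number of $OPT$-edges incident to each component of $G[V\setminus V(D)]$ and sum, using connectivity for the easy cases and the absence of unsafe cut-vertices for $K_{1,2}$ and $K_{2,3}$. Your treatment of the $K_{2,3}$ case is more explicit than the paper's---you split on whether $uv\in OPT$ and show each sub-case forces the pair into $K_{2,2}$ via one of its two defining clauses---whereas the paper compresses this into a single sentence invoking the definition of $K_{2,3}$; but the underlying reasoning is the same.
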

In the next Lemma show that even with the tools we have already developed, we have a $\frac{5}{3}$-approximation, answering the question if there exists an approximation factor less than $2$ in the affirmative. We spend the remainder of the section improving on this easier approximation. The proof can be found in Appendix~\ref{sec:5/3apx}.
\begin{lemma}
\label{lem:5/3apx}
    $APX_1$ is a $\frac{5}{3}$-approximate solution for FVC. Furthermore, if $|K_{1,2}| + |K_{2,3}| \le 2$, then $APX_1$ is a  $\frac{4}{3}$-approximate solution.
\end{lemma}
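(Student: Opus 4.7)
I will set up notation and reduce the claim to a linear inequality in a handful of variables, then close it by taking a suitable convex combination of the two lower bounds on $|OPT|$ provided by Lemma~\ref{lem:UpperboundSimple}.

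First I would let $d = |V(D)|$ and, for convenience, write $x_1 = |K_{1,1}|$, $x_2 = |K_{1,2}|$, $x_3 = |K_{2,2}|$, $x_4 = |K_{2,3}|$, and $S = x_1 + 2x_2 + x_3 + \tfrac{3}{2} x_4$. Since $V(D), K_{1,1}, K_{1,2}, K_{2,2}, K_{2,3}$ partition $V$, we have $n = d + x_1 + x_2 + x_3 + x_4$. Lemma~\ref{lem:earAPX} gives
\[
|APX_1| \;\le\; \tfrac{4}{3}(d-1) + S,
\]
while Lemma~\ref{lem:UpperboundSimple} gives $|OPT| \ge n$ and $|OPT| \ge S$.

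For the $\tfrac{5}{3}$-approximation, the plan is to exploit both bounds simultaneously via the convex combination $|OPT| \ge \tfrac{4}{5}n + \tfrac{1}{5}S$, so that $\tfrac{5}{3}|OPT| \ge \tfrac{4}{3}n + \tfrac{1}{3}S$. Subtracting the upper bound on $|APX_1|$ gives
\[
\tfrac{5}{3}|OPT| - |APX_1| \;\ge\; \tfrac{4}{3}(n - d + 1) - \tfrac{2}{3}S \;=\; \tfrac{4}{3} + \tfrac{2}{3}x_1 + \tfrac{2}{3}x_3 + \tfrac{1}{3}x_4,
\]
after expanding $n - d = x_1 + x_2 + x_3 + x_4$ and $S$. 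Every term on the right is non-negative, proving $|APX_1| \le \tfrac{5}{3}|OPT|$.

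For the second statement, I will use only the bound $|OPT| \ge n$, which suffices under the additional hypothesis $x_2 + x_4 \le 2$. Compute
\[
\tfrac{4}{3}|OPT| - |APX_1| \;\ge\; \tfrac{4}{3}n - \tfrac{4}{3}(d-1) - S \;=\; \tfrac{4}{3} + \tfrac{1}{3}x_1 + \tfrac{1}{3}x_3 - \tfrac{2}{3}x_2 - \tfrac{1}{6}x_4.
\]
Since $\tfrac{2}{3}x_2 + \tfrac{1}{6}x_4 \le \tfrac{2}{3}(x_2 + x_4) \le \tfrac{4}{3}$ under the hypothesis, and the remaining terms $\tfrac{1}{3}x_1 + \tfrac{1}{3}x_3$ are non-negative, the right-hand side is non-negative, giving the claimed $\tfrac{4}{3}$-approximation.

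There is no real technical obstacle here; the argument is a direct algebraic verification once the variables and the two $|OPT|$ bounds are in place. The only thing worth emphasizing is that the $\tfrac{5}{3}$ bound requires combining both inequalities from Lemma~\ref{lem:UpperboundSimple}, since neither alone suffices when the other quantity dominates: the vertex-count bound $|OPT| \ge n$ is weak when many vertices of $K_{1,2}$ and $K_{2,3}$ contribute heavily to $S$, while $|OPT| \ge S$ is weak when $V(D)$ is large relative to $S$.
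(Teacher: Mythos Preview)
Your proof is correct and uses exactly the same ingredients as the paper: the upper bound on $|APX_1|$ from Lemma~\ref{lem:earAPX} and the two lower bounds $|OPT|\ge n$ and $|OPT|\ge S$ from Lemma~\ref{lem:UpperboundSimple}. The only difference is in how the two lower bounds are combined: the paper first drops the $K_{1,1}$ and $K_{2,2}$ terms (using that they appear with coefficient $1$ in both numerator and denominator and the ratio is at least $1$) and then does a case split on which of $|V(D)|+|K_{1,2}|+|K_{2,3}|$ or $2|K_{1,2}|+\tfrac{3}{2}|K_{2,3}|$ is larger, whereas you take the single convex combination $|OPT|\ge \tfrac{4}{5}n+\tfrac{1}{5}S$. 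Your version is a bit more streamlined and avoids the case analysis; the paper's case split makes more visible which lower bound is the binding one in each regime. For the $\tfrac{4}{3}$ statement both proofs use only $|OPT|\ge n$ and the hypothesis $|K_{1,2}|+|K_{2,3}|\le 2$ in the same way.
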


\subsection{Approximation 2}
\label{sec:11/9fvc}
In this section, we will provide a second algorithm that relies on the vertex sets defined in Section~\ref{sec:5/3}, that were computed by our first algorithm. Namely, we are interested in $V(D)$, $K_{1,1}$, $K_{1,2}$, $K_{2,2}$, and $K_{2,3}$. This algorithm, when combined with the algorithm of Section~\ref{sec:5/3} will achieve a $\frac{11}{7}$-approximation. Applying Lemma~\ref{lem:5/3apx}, we assume that at least one of $K_{1,2}$ or $K_{2,3}$ is non-empty, as otherwise we immediately have a $\frac{4}{3}$-approximation algorithm.

As in the previous section, we will rely on the fact that $D$ is constructed as a 2VC subgraph of $G$, and the fact that any feasible solution must take edges incident to vertices in $K_{1,1}$, $K_{1,2}$, $K_{2,2}$, and $K_{2,3}$. However this time we start by buying a minimal set of edges $E'$ incident to $K_{1,1}$, $K_{1,2}$, $K_{2,2}$, and $K_{2,3}$ that are required for feasibility and then we complement these edges with a subset of edges of $G[D]$ to obtain a feasible solution $APX_2$. 

Before we describe our approximation algorithm, we will define the following tools that the algorithm relies on. 
\begin{definition}[Maximum Rainbow Connection Problem]
    Given a (multi-)graph $G$ and a coloring $c:E\rightarrow\mathbb{N}$ of the edges, find a spanning subgraph of $G$ that minimizes the number of components, while choosing exactly one edge from each colour.
\end{definition}

This problem can be solved to optimality using matroid intersection between the graphic matroid, and the partition matroid on the colour classes. This is shown in the following Lemma which is proven  Appendix~\ref{sec:RainbowForest}.

\begin{lemma}\label{lem:RainbowForest}
    Given an instance of the  Maximum Rainbow Connection Problem with (multi)-graph $G = (V,E)$, with coloring $c: E \rightarrow \mathbb{N}$. We can find in polynomial time, an optimal solution $P$ such that the number of isolated vertices (vertices of degree 0) in $(V,P)$ is minimal with respect to replacing an edge with another edge of the same colour class.
\end{lemma}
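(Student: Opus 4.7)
The plan is to reformulate the Maximum Rainbow Connection Problem via matroid intersection. I would let $M_1$ be the graphic matroid on $E$ (independent sets: forests) and $M_2$ be the partition matroid on $E$ with parts $E_j := c^{-1}(j)$ of capacity one per part (independent sets: rainbow edge subsets), and compute a maximum common independent set $F$ of $M_1$ and $M_2$ in polynomial time via the standard matroid intersection algorithm.

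From $F$, I would construct an initial rainbow spanning subgraph $P$ by adding, for each color $j$ not represented in $F$, an arbitrary edge of color $j$ (such an edge exists whenever the instance is feasible). The key claim is that every added edge $e$ necessarily forms a cycle with $F$: otherwise $F \cup \{e\}$ would be a strictly larger common independent set of $M_1$ and $M_2$, contradicting the maximality of $F$. Thus the endpoints of each added edge already lie in the same connected component of $(V, F)$, so $(V, F)$ and $(V, P)$ induce the same partition of $V$ into connected components; in particular, $(V, P)$ has exactly $|V| - |F|$ components. To see this is optimal, note that any competing rainbow spanning subgraph $P'$ contains a maximum forest-subset of size at most $|F|$ (as this subset lies in $M_1 \cap M_2$), so $(V, P')$ has at least $|V| - |F|$ components, matching the bound achieved by $P$.

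Finally, to enforce the isolated-vertex minimality, I would run a straightforward local exchange procedure: while there exist $e \in P$ and $e' \in E_{c(e)} \setminus P$ such that $P'' := (P \setminus \{e\}) \cup \{e'\}$ has the same number of connected components as $P$ but strictly fewer isolated vertices, replace $P$ by $P''$. Each iteration strictly decreases the isolated-vertex count, which is bounded by $|V|$, so the loop terminates after at most $|V|$ steps; each iteration inspects $O(|E|)$ candidate swaps, each evaluated by a single connected-components computation in polynomial time. The main subtlety I foresee is reconciling the two criteria, since same-color swaps can in general perturb the component count; the explicit component-count condition inside the loop rules this out, guaranteeing that the returned $P$ remains an optimal rainbow spanning subgraph while being locally minimal in isolated vertices under same-color edge swaps, matching the lemma's requirement.
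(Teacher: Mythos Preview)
Your proposal is correct and follows essentially the same approach as the paper: both compute a maximum common independent set of the graphic matroid and the partition matroid on colour classes, complete it to a full rainbow selection by adding one arbitrary edge per missing colour (arguing these edges close cycles and hence do not change the component structure), and then perform local same-colour swaps to drive down the number of isolated vertices. The only cosmetic difference is that you explicitly restrict swaps to those preserving the component count, whereas the paper performs any singleton-reducing swap and argues a posteriori that such a swap can never increase the number of components; since $P$ already attains the minimum number of components, the two conditions coincide.
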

Our goal with the Maximum Rainbow Connection problem is to more cleverly find a minimal set $E'$ than simply taking an \textit{arbitrary}  minimal set of edges incident on at least one vertex of $V\setminus V(D)$. Our goal is to select such an $E'$ that also minimizes the number of connected components of $(V, E')$. To do this we use the edges of $E$ that are incident on $K_{1,1}$, $K_{1,2}$, $K_{2,2}$, and $K_{2,3}$ to create an instance of Maximum Rainbow Connection Problem (and solve it with Lemma~\ref{lem:RainbowForest}). 


We define a set of so-called \emph{pseudo-edges} $\tilde E$  with endpoints in $V(D)$ (named pseudo-edges in order to distinguish them from the "real" edges, $E$), and assign to each pseudo-edge a unique colour indexed by vertices in $K_{1,2}$ and pairs of adjacent vertices in $K_{2,3}$. 
For every $u\in K_{1,2}$, and  every distinct pair of edges $uv_1$ and $uv_{2}\in E$, we add pseudo-edge $v_1v_2$ to $\tilde E$. Assign $v_1v_2$ the colour $c_{u}$. 
Intuitively, a pseudo-edge $xy$ with colour $c_v$ (for example) corresponds to a path in $E$ from $x$ to $v$ to $y$.

For every ordered pair $(u,v) \in K_{2,3}\times K_{2,3}$ such that $uv\in E$, we add pseudo-edges to $\tilde E$ in the following way: 
(1) for every pair of edges $uu'$ and $vv'$ with $u'\neq v'$, we add pseudo-edge $u'v'$ to $\tilde E$. Assign $u'v'$ the colour $c_{uv}$. 
(2) If $u$ is adjacent to safe vertex $u'\in V(D)$, and for any two neighbours $v_1',v_2' \in V(D)$ of $v$ (if $v$ has at least two neighbours in $V(D)$), we add pseudo-edge $v_1'v_2'$ to $\tilde E$. We assign $v_1'v_2'$ the colours $c_{uv}$ (note $v$ is not adjacent to a safe vertex since $u,v\notin K_{2,2}$). 
(3) If $u$ is safe, then for every  distinct pair of edges $uv_1$ and $uv_{2}$, we add pseudo-edge $v_1v_2$ to $\tilde E$. Assign $v_1v_2$ the colour $c_{uv}$. 

Again, a pseudo-edge $xy$ with colour $c_{uv}$  corresponds to a path from $x$ to $y$ created by a minimum selection of the edges incident on $x$ and $y$ of a feasible FVC solution.

Notice that with $(V(D), \tilde E)$ (along with the corresponding edge colours we provide) describe an instance of the Maximum Rainbow Connection problem.
We use Lemma~\ref{lem:RainbowForest} to compute a solution to the Maximum Rainbow Connection problem, which we denote by $P$. Say that $P$ has $\alpha$ many components, and $\alpha_{large}$ many components with at least two vertices.
Then we use the following three algorithms one by one to obtain a feasible solution.
We have one last tool we need to provide before we can define for the next step of our algorithm, which is inspired by techniques employed in \cite{garg2023improved}. This tool will be useful for letting us decide which edges of $\tilde E$ to buy.
\begin{definition}[Good Cycles]
    Let \(\Pi = (V_1, \ldots, V_k)\), \(k \geq 2\), be a partition of the vertex-set of a 2VC simple graph \(G\). 
    
    A good cycle $C$ of $\Pi$ is a subset of edges with endpoints in distinct subsets of $\Pi$ such that:  (1) $C$ is a cycle of length at least $2$ in the graph obtained from $G$ by contracting each $V_i$ into a single vertex one by one ( that is, $G/V_1/V_2/\dots /V_k$); (2) given any two edges of $C$ incident to some $V_i$, these edges are incident to distinct vertices of \(V_i\) unless \(|V_i| = 1\); (3) $C$ has an edge incident to at least one $V_i$ with $|V_i| \geq 2$, and; (4) $|C| = 2$ only if both $V_i$ and $V_j$ incident to $C$ have $|V_i| , |V_j| \geq 2$.
\end{definition}

The following Lemma allows us to compute good cycles in polynomial time. Its proof can be found in Appendix~\ref{sec:findgoodcycles}. 
\begin{lemma}
\label{lem:findgoodcycles}
    Let \(\Pi = \{V_1, \ldots, V_k \} \), \(k \geq 2\), be a partition of the vertex-set of a 2VC simple graph \(G\) with the following conditions: $G[V_i]$ is connected for all $i=1,\dots,k$, with at least one set of size at least $2$.  Furthermore, if there is exactly one $V_i\in \Pi$ with $|V_i| \geq 2$, then there are at least two singletons in $\Pi$ that are adjacent in $G$.

    Then in polynomial time, one can compute a good cycle $C$ of $\Pi$.
\end{lemma}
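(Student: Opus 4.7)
My approach hinges on the fact that the contracted multigraph $G' := G/V_1/\cdots/V_k$ is $2$-edge-connected: since $G$ is $2$VC and each $G[V_i]$ is connected, each contraction preserves $2$-edge-connectivity, so $G'$ has no bridges and every edge of $G'$ lies on a cycle. I would split the proof on the number of \emph{large} parts (parts of size at least $2$).

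\emph{Case A: exactly one large part $V^*$.} By hypothesis there are adjacent singletons $\{u\},\{v\}$ with $uv \in E$; let $C^*$ denote the component of $G - V^*$ containing $u$ and $v$, so $|C^*| \ge 2$. Using $2$VC, I would argue that the bipartite edge-set $N \subseteq V^* \times C^*$ of edges between $V^*$ and $C^*$ cannot be concentrated on a single $V^*$-vertex (else that vertex is a cut vertex of $G$ separating $C^*$ from the rest) nor on a single $C^*$-vertex. A short counting argument then yields pairs $(a_1, w_1), (a_2, w_2) \in N$ with $a_1 \ne a_2$ and $w_1 \ne w_2$. Let $P$ be any simple $w_1$-$w_2$ path in $G[C^*]$, which exists by connectivity of $G[C^*]$. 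Since every vertex of $P$ lies in a singleton of $\Pi$, the set $C := \{a_1 w_1,\, a_2 w_2\} \cup E(P)$ is a cycle in $G'$ through $\hat V^*$ of length at least $3$; condition (2) at $\hat V^*$ holds because $a_1 \ne a_2$ and is trivial at the singleton nodes on $C$; conditions (1), (3), (4) are immediate.

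\emph{Case B: at least two large parts.} I would enumerate all pairs $(V_i, V_j)$ of large parts and check whether there exist edges $e_1 = a_1 b_1,\, e_2 = a_2 b_2 \in E$ with $a_1 \ne a_2 \in V_i$ and $b_1 \ne b_2 \in V_j$; if so, $\{e_1, e_2\}$ is a good cycle of length $2$. Otherwise, I would fall back to a variant of the Case~A construction, taking $V^*$ to be any large part and $C^*$ the component of $G - V^*$ containing the vertices of another large part (so $|C^*| \ge 2$).

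The principal obstacle lies in the fall-back of Case~B, where the length-$\ge 3$ cycle may visit another \emph{large} contracted vertex $\hat V_\ell$ and one must additionally secure condition (2) at $\hat V_\ell$: the two cycle-edges at $\hat V_\ell$ must be chosen among available parallel edges of $G'$ so that their $V_\ell$-endpoints are distinct. A further $2$VC argument shows that any configuration forcing a common $V_\ell$-endpoint would yield a cut vertex of $G$; so the required representative choice always exists and, together with occasional short detours through an additional part, the good cycle can always be completed. All subroutines (enumeration of pairs, component analysis in $G - V^*$, selecting edges in $N$, shortest-path search in $G' - \hat V^*$, and representative adjustments) run in polynomial time via standard BFS/DFS.
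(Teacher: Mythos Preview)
Your Case~A and the first subcase of Case~B (the search for a length-$2$ good cycle between two large parts) are correct and cleanly argued. The genuine gap is in the fallback of Case~B.

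Concretely, take $V_1=\{a_1,a_2\}$, $V_2=\{b_1,b_2\}$, singletons $\{c\},\{d\}$, with edges $a_1a_2$, $b_1b_2$, $a_1c$, $a_2d$, $cb_1$, $db_1$, $a_1b_2$. This graph is $2$VC, both $V_1,V_2$ are large, and the unique edge between $V_1$ and $V_2$ is $a_1b_2$, so no good $2$-cycle exists between them. Your fallback sets $V^*=V_1$; then $C^*=\{b_1,b_2,c,d\}$ and $N=\{a_1c,\,a_2d,\,a_1b_2\}$. The pair $(a_1,c),(a_2,d)$ is a valid matching in $N$, and the only $c$--$d$ path in $G[C^*]$ is $c\,b_1\,d$, yielding the contracted cycle $\hat V_1\text{--}c\text{--}\hat V_2\text{--}d\text{--}\hat V_1$ whose two edges at $\hat V_2$ both land on $b_1$, violating condition~(2). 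There is no parallel-edge fix here: $cb_1$ and $db_1$ are the \emph{only} edges from $c,d$ into $V_2$. Moreover $b_1$ is not a cut vertex of $G$, so your stated ``further $2$VC argument'' (that a forced common $V_\ell$-endpoint yields a cut vertex) fails as written. A good cycle does exist, namely $\{a_1b_2,\,b_1d,\,da_2\}$, but your procedure as described does not find it; it would require choosing the different pair $(a_1,b_2),(a_2,d)$ from $N$, and you give no rule for making such a choice, nor any argument that some choice always succeeds. Iterating over pairs in $N$ is fine, but for each pair you still need a path in $G'-\hat V^*$ on which condition~(2) holds at every intermediate large part, and you have not shown such a path can be found (or even exists) in general.

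The paper sidesteps exactly this difficulty by invoking the \emph{nice cycle} lemma of Garg, Grandoni and Jabal Ameli, which guarantees, for \emph{any} partition of a $2$VC graph, a cycle in the contracted graph satisfying your conditions~(1) and~(2). The paper then coarsens $\Pi$ to a new partition $\Pi''$ (merging adjacent singletons, and merging each singleton with two edges to the same large part into that part) so that every part of $\Pi''$ either contains a large part of $\Pi$ or is built from at least two singletons of $\Pi$. A nice cycle on $\Pi''$ is then locally expanded back to a good cycle on $\Pi$, with condition~(2) at intermediate large parts coming for free from the nice-cycle guarantee. In effect, the hard step you left as a sketch---controlling the endpoints inside intermediate large parts---is precisely the content of the nice-cycle lemma, which you would need either to cite or to reprove.
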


In our algorithm we will distinguish between connected components that have one vertex (i.e. singletons) and connected components with at least two vertices. Given a graph $H$, we say that a connected component is \textit{large} if it has at least two vertices.
Now we have all the ingredients required to finalize the description of our approximation algorithm. After computation of $P$, we initialize the set that will eventually be our solution as $APX_2\leftarrow \emptyset$. 

Our plan is to gradually build $APX_2$. We have three steps, Algorithm~\ref{alg:phase1},~\ref{alg:phase2} and~\ref{alg:phase3}, which we apply one after another. These algorithms return edge sets $S_1$, $S_2$ and $S_3$, respectively that will be part of our solution $APX_2$. We now take time to describe these algorithms in more details.


First we use Algorithm~\ref{alg:phase1},  which takes the pseudo-edges $P$, and find the large components and  a subset $X_1$ of singletons of $(V(D),P)$,
buying a subset of edges, $S_1$, to connect them into a single component $A$ in $(V(D), S_1\cup P)$ using Lemma~\ref{lem:findgoodcycles}. Upon termination of this algorithm, we will obtain the additional property that $V(D)\backslash V(A)$ is an independent set in $G$.

\begin{algorithm}[ht]
    {
        \textbf{Input:} pseudo-edges $P$\\
        Let $Y$ denote the singletons of $(V(D),P)$\\
        $S_1 \leftarrow \emptyset$\\
        \While{There is a good cycle $C$ in $G[V(D)]$ with connected components of $(V(D), P\cup S_1)$ being the vertex partitioning }
        {
            $S_1\leftarrow S_1\cup E(C)$\\
        }
        $A\leftarrow$ unique large component of $(V(D), S_1\cup P)$ \\
        Let $X_1 \leftarrow Y\cap V(A)$\\

        \textbf{Return} $(X_1, Y, S_1, A)$
    }
    \caption{Buying Good Cycles}
    \label{alg:phase1}
\end{algorithm} 

In Algorithm~\ref{alg:phase2}, our goal is to buy a minimal set $S_2$ of edges between $V(A)$ and compute a subset of $V(D) \backslash V(A)$ (which we denote by  $X_2$), such that $(V(A)\cup X_2,P\cup S_1\cup S_2)$ has one block. We remark that after termination of this algorithm any vertex of $X_3:=V(D)\setminus V(A)\cup X_2)$ is a singleton in $(V(D),P\cup S_1\cup S_2)$
\begin{algorithm}[ht]
    {
        \textbf{Input:} Edges $S_1\subseteq E$, pseudo-edges $P$, large component $A$\\
        $X_2,S_2 \leftarrow \emptyset$\\
        \While{$G[V(A)\cup X_2] \cup P$ has more than one block}
        {
            Find $v\in V(D)\backslash (V(A) \cup X_2)$ such that $G[V(A)\cup X_2\cup\{v\}]\cup P$ has fewer blocks than $G[V(A)\cup X_2]\cup P$\\
            Find edges $e_1=vu$, $e_2=vw$ ,$v\neq w \in V(A)$, such that $(V(A) \cup X_2 \cup \{v\}, P \cup S_1\cup S_2 \cup \{e_1, e_2\})$ has fewer blocks than $(V(A) \cup X_2 , P \cup S_1 \cup S_2 )$ \\
            $X_2 \leftarrow X_2 \cup \{v\}$\\
            $S_2 \leftarrow S_2 \cup \{e_1,e_2\}$\\
            
        }
        \While{There exists an edge $e_3 = uz \in E\backslash S$ such that $(V(A) \cup X_2,  P \cup S_1\cup S_2 \cup \{e_3\})$ has fewer blocks than $(V(A) \cup X_2,  P \cup S_1\cup S_2 )$ }
            {
                $S_2 \leftarrow S_2\cup \{e_3\}$\\
            }
        \textbf{Return} $(X_2, S_2)$
    }
    \caption{Making Large Component 2VC}
    \label{alg:phase2}
\end{algorithm} 

In Algorithm~\ref{alg:phase3}, the goal is to buy a subset $S_3$ of edges such that $(V(D),S_1\cup S_2\cup S_3\cup P)$ is a feasible FVC solution on $V(D)$ (i.e. $(V(D),S_1\cup S_2\cup S_3\cup P)$ is connected and has no unsafe cut-vertices).  For every vertex in $v\in X_3\coloneqq V(D)\backslash (V(A) \cup X_2) $ that has a safe neighbour in $V(A)\cup X_2$, we buy one edge from $v$ to one of its safe neighbour in $V(A)\cup X_2$. For any other vertex in $X_3$  we buy two distinct edges from it to $V(A)\cup X_2$. We define $\alpha_1'$ as the number of vertices of $X_3$ that have a safe neighbour, and $\alpha_2'$ is the number of vertices of $X_3$ that do not have a safe neighbour. Thus $|X_3| = \alpha_1' + \alpha_2'$. To maintain a consistent notation for number of components we define $\alpha' \coloneqq |X_3| = \alpha_1' + \alpha_2'$. 
Note that this implies, $\alpha = \alpha_{large} + |X_1| + |X_2| + |X_3| =  \alpha_{large} + |X_1| + |X_2| + \alpha'.$

\begin{algorithm}[ht]
    {
        \textbf{Input:} Singletons $X_2$, large component $A$\\
        $S_3\leftarrow \emptyset$\\
        $X_3 \leftarrow  V(D)\backslash (V(A)\cup X_2)$\\
        $\alpha_1' \leftarrow 0$, $\alpha_2' \leftarrow 0$\\
        \For{every $v\in X_3$ }
        {
            \If{$v$ is adjacent to safe vertex $u\in V(A)\cup X_2$ }
            {
                $S_3 \leftarrow S_3 \cup \{uv\}$, $\alpha_1' \leftarrow \alpha_1' +1$\\
            }
            \Else
            {
                Find $u,w\in V(A)\cup X_2$ adjacent to $v$ \\
                $S_3\leftarrow S_3 \cup \{uv, uw\}$, $\alpha_2' \leftarrow \alpha_2' +1$\\
            }
        }
        \textbf{Return} $(X_3, S_3, \alpha_1', \alpha_2')$
    }
    \caption{Making Solution Feasible}
    \label{alg:phase3}
\end{algorithm} 




\begin{figure}[t]
    \begin{center}
        \begin{tabular}{c | c |c }
                
\begin{tikzpicture}[scale=0.8]
    
    \tikzstyle{black dot}=[fill=black, draw=black, circle, minimum size=0pt,inner sep=2pt, outer sep=2pt]
    \tikzset{terminal/.style={draw=black,  thick,minimum size=0pt, inner sep=2.5pt, outer sep=1pt}}
    \tikzset{P node/.style={fill={rgb,255: red,20; green,154; blue,0}, draw={rgb,255: red,20; green,154; blue,0}, circle, minimum size=0pt,inner sep=1pt, outer sep=1pt}}
    \tikzset{Writing/.style={shape=circle} }
    \tikzstyle{empty circle}=[fill=white, draw=black, shape=circle]

    \tikzstyle{witness edge}=[-, draw={rgb,255: red,195; green,0; blue,3}, very thick]
    \tikzstyle{T edges}=[-, thick]
    \tikzstyle{Fat edge}=[-, ultra thick]
    \tikzstyle{new witness}=[-, draw={rgb,255: red,195; green,0; blue,3}, dashed, ultra thick]
    \tikzstyle{connected terminals}=[-, draw=black, dashed, very thick]
    \tikzstyle{P}=[-, draw={rgb,255: red,20; green,154; blue,0}, very thick]
    \tikzstyle{Strong witness}=[-, draw={rgb,255: red,195; green,0; blue,3}, ultra thick]
    \tikzstyle{Blue edge}=[-, draw={rgb,255: red,3; green,0; blue,195}]
    \tikzstyle{arrow}=[<-, very thick]
    \tikzstyle{Marked Edges}=[-, draw={rgb,255: red,255; green,207; blue,14}, very thick]

        \node [style=Writing] (17) at (1.45, -1.375) {};
		\node [style=Writing] (21) at (-1.325, -0.55) {};
        \node [style=Writing] (16) at (-1.25, -0.25) {};
		\node [style=Writing] (22) at (1.475, -1.55) {};
		\filldraw [color=red!20,  bend left=105, looseness=0.50] (1.7, -1.375) to (-1.325, -0.55);
		\filldraw [color=red!20,  bend right=105, looseness=0.50] (1.7, -1.375) to(-1.325, -0.55);
		\node [style=Writing] (19) at (1, -0.5) {$X_1$};

		\node [style=Writing] (0) at (-2.25, 1.5) {};
		\node [style=Writing] (1) at (2.25, 1.5) {};
		\node [style=Writing] (2) at (2.25, -1.5) {};
		\node [style=Writing] (3) at (-2.5, -1.5) {$(a)$};
		\node [style=black dot] (4) at (-1.75, 1.25) {};
		\node [style=black dot] (5) at (-0.5, 1) {};
		\node [style=black dot] (6) at (-2, -1) {};
		\node [style=black dot] (7) at (0.25, -0.25) {};
		\node [style=black dot] (8) at (2, -0.25) {};
		\node [style=black dot] (10) at (1.5, 1.25) {};
		\node [style=terminal] (11) at (1, -1.25) {};
		\node [style=black dot] (12) at (-0.75, -0.5) {};
		\node [style=black dot] (13) at (0.75, 0.75) {};
		\node [style=black dot] (14) at (-1.25, 0.5) {};
		\node [style=black dot] (15) at (1.25, 0.25) {};
		\node [style=Writing] (20) at (-0.55, -0.8) {$u_1$};
		\node [style=Writing] (21) at (1.35, -1.275) {$v_1$};
		\draw [style=connected terminals] (4) to (5);
		\draw [style=connected terminals] (5) to (7);
		\draw [style=connected terminals] (8) to (10);
		\draw [style=connected terminals] (14) to (5);
		\draw [style=P] (7) to (8);
		\draw [style=P] (5) to (10);
		\draw [style=Strong witness] (11) to (8);
		\draw [style=Strong witness] (11) to (12);
		\draw [style=Strong witness] (12) to (5);
\end{tikzpicture}&
                
\begin{tikzpicture}[scale=0.8]
    
    \tikzstyle{black dot}=[fill=black, draw=black, circle, minimum size=0pt,inner sep=2pt, outer sep=2pt]
    \tikzset{terminal/.style={draw=black,  thick,minimum size=0pt, inner sep=2.5pt, outer sep=1pt}}
    \tikzset{P node/.style={fill={rgb,255: red,20; green,154; blue,0}, draw={rgb,255: red,20; green,154; blue,0}, circle, minimum size=0pt,inner sep=1pt, outer sep=1pt}}
    \tikzset{Writing/.style={shape=circle} }
    \tikzstyle{empty circle}=[fill=white, draw=black, shape=circle]

    \tikzstyle{witness edge}=[-, draw={rgb,255: red,195; green,0; blue,3}, very thick]
    \tikzstyle{T edges}=[-, thick]
    \tikzstyle{Fat edge}=[-, ultra thick]
    \tikzstyle{new witness}=[-, draw={rgb,255: red,195; green,0; blue,3}, dashed, ultra thick]
    \tikzstyle{connected terminals}=[-, draw=black, dashed, very thick]
    \tikzstyle{P}=[-, draw={rgb,255: red,20; green,154; blue,0}, very thick]
    \tikzstyle{Strong witness}=[-, draw={rgb,255: red,195; green,0; blue,3}, line width=4pt]
    \tikzstyle{Blue edge}=[-, draw={rgb,255: red,3; green,0; blue,195}]
    \tikzstyle{arrow}=[<-, very thick]
    \tikzstyle{Marked Edges}=[-, draw={rgb,255: red,255; green,207; blue,14}, very thick]

		\node [style=Writing] (25) at (-2, -1.5) {};
		\node [style=Writing] (26) at (-2, -0.5) {};
		\node [style=Writing] (27) at (-1.5, -1.25) {$X_2$};
		\filldraw [color=yellow!50, bend left=90, looseness=1.25] (-2, -1.5) to (-2, -0.5);
		\filldraw [color=yellow!50, bend right=90, looseness=1.25] (-2, -1.5) to (-2, -0.5);        
    
		\node [style=Writing] (0) at (-2.25, 1.5) {};
		\node [style=Writing] (1) at (2.25, 1.5) {};
		\node [style=Writing] (2) at (2.25, -1.5) {};
		\node [style=Writing] (3) at (-2.5, -1.5) {$(b)$};
		\node [style=black dot] (4) at (-1.75, 1.25) {};
		\node [style=black dot] (5) at (-0.5, 1) {};
		\node [style=black dot] (6) at (-2, -1) {};
		\node [style=black dot] (7) at (0.25, -0.25) {};
		\node [style=black dot] (8) at (2, -0.25) {};
		\node [style=black dot] (10) at (1.5, 1.25) {};
		\node [style=terminal] (11) at (1, -1.25) {};
		\node [style=black dot] (12) at (-0.75, -0.5) {};
		\node [style=black dot] (13) at (0.75, 0.75) {};
		\node [style=black dot] (14) at (-1.25, 0.5) {};
		\node [style=black dot] (15) at (1.25, 0.25) {};
		\node [style=Writing] (22) at (-0.25, 1.25) {$c$};
		\node [style=Writing] (24) at (-2.25, -0.75) {$u_2$};
		\draw [style=connected terminals] (4) to (5);
		\draw [style=connected terminals] (5) to (7);
		\draw [style=connected terminals] (8) to (10);
		\draw [style=connected terminals] (14) to (5);
		\draw [style=Fat edge] (7) to (8);
		\draw [style=Fat edge] (11) to (12);
		\draw [style=Fat edge] (12) to (5);
		\draw [style=Fat edge] (11) to (8);
		\draw [style=Fat edge] (5) to (10);
		\draw [style=Marked Edges] (4) to (6);
		\draw [style=Marked Edges] (6) to (14);
		\draw [style=Marked Edges] (6) to (12);
\end{tikzpicture}&
                
\begin{tikzpicture}[scale=0.8]
    
    \tikzstyle{black dot}=[fill=black, draw=black, circle, minimum size=0pt,inner sep=2pt, outer sep=2pt]
    \tikzset{terminal/.style={draw=black,  thick,minimum size=0pt, inner sep=2.5pt, outer sep=1pt}}
    \tikzset{P node/.style={fill={rgb,255: red,20; green,154; blue,0}, draw={rgb,255: red,20; green,154; blue,0}, circle, minimum size=0pt,inner sep=1pt, outer sep=1pt}}
    \tikzset{Writing/.style={shape=circle} }
    \tikzstyle{empty circle}=[fill=white, draw=black, shape=circle]

    \tikzstyle{witness edge}=[-, draw={rgb,255: red,195; green,0; blue,3}, very thick]
    \tikzstyle{T edges}=[-, thick]
    \tikzstyle{Fat edge}=[-, ultra thick]
    \tikzstyle{new witness}=[-, draw={rgb,255: red,195; green,0; blue,3}, dashed, ultra thick]
    \tikzstyle{connected terminals}=[-, draw=black, dashed, very thick]
    \tikzstyle{P}=[-, draw={rgb,255: red,20; green,154; blue,0}, very thick]
    \tikzstyle{Strong witness}=[-, draw={rgb,255: red,195; green,0; blue,3}, ultra thick]
    \tikzstyle{Blue edge}=[-, draw={rgb,255: red,3; green,0; blue,195}, ultra thick]
    \tikzstyle{arrow}=[<-, very thick]
    \tikzstyle{Marked Edges}=[-, draw={rgb,255: red,255; green,207; blue,14}, very thick]

		\node [style=Writing] (29) at (0.5, 1) {};
		\node [style=Writing] (30) at (1.5, 0) {};
		\filldraw [color=blue!20, bend right=75, looseness=0.75] (0.5, 1) to (1.5, 0);
		\filldraw [color=blue!20, bend right=90, looseness=0.75] (1.5, 0) to (0.5, 1);
    
		\node [style=Writing] (0) at (-2.25, 1.5) {};
		\node [style=Writing] (1) at (2.25, 1.5) {};
		\node [style=Writing] (2) at (2.25, -1.5) {};
		\node [style=Writing] (3) at (-2.5, -1.5) {$(c)$};
		\node [style=black dot] (4) at (-1.75, 1.25) {};
		\node [style=black dot] (5) at (-0.5, 1) {};
		\node [style=black dot] (6) at (-2, -1) {};
		\node [style=black dot] (7) at (0.25, -0.25) {};
		\node [style=black dot] (8) at (2, -0.25) {};
		\node [style=black dot] (10) at (1.5, 1.25) {};
		\node [style=terminal] (11) at (1, -1.25) {};
		\node [style=black dot] (12) at (-0.75, -0.5) {};
		\node [style=black dot] (13) at (0.75, 0.75) {};
		\node [style=black dot] (14) at (-1.25, 0.5) {};
		\node [style=black dot] (15) at (1.25, 0.25) {};
		\draw [style=connected terminals] (4) to (5);
		\draw [style=connected terminals] (8) to (10);
		\draw [style=connected terminals] (5) to (7);
		\draw [style=connected terminals] (14) to (5);
		\draw [style=Fat edge] (7) to (8);
		\draw [style=Fat edge] (11) to (12);
		\draw [style=Fat edge] (12) to (5);
		\draw [style=Fat edge] (11) to (8);
		\draw [style=Fat edge] (5) to (10);
		\draw [style=Blue edge] (7) to (13);
		\draw [style=Blue edge] (13) to (10);
		\draw [style=Fat edge] (4) to (6);
		\draw [style=Fat edge] (6) to (14);
		\draw [style=Fat edge] (6) to (12);
		\draw [style=Blue edge] (15) to (11);
		\node [style=Writing] (28) at (1.35, -1.275) {$v_1$};
		\node [style=Writing] (31) at (1.4, 0.725) {$X_3$};
		\node [style=Writing] (32) at (0.25, 0.75) {$u_3$};
		\node [style=Writing] (33) at (1, 0) {$v_3$};
\end{tikzpicture}
        \end{tabular}
    \end{center}
    \caption{ A depiction of the edges and vertex sets found by Algorithms~\ref{alg:phase1},~\ref{alg:phase2}, and \ref{alg:phase3} in $V(D)$. Here the unsafe vertices are depicted by black circles. In this example there is only one safe vertex, $v_1$ in the set $V(D)$ that is shown by a square.\\
    (a) The dashed edges are pseudo-edges $P$ found by Lemma~\ref{lem:RainbowForest}. Algorithm~\ref{alg:phase1} first computes good cycle on green edges that merges two large components of pseudo-edges, then it finds the red cycle that merges the new large component and 2 singletons. $X_1 = \{u_1, v_1\}$
    (b) The yellow edges of the second figure are found by Algorithm~\ref{alg:phase2} which cover the cut-vertex $c$ in the component. The interior vertex is $X_2 = \{u_2\}$.
    (c) The blue edges of the third figure are the edges found by Algorithm~\ref{alg:phase3}, which add edges to the solution that bring $u_3$ and $v_3$ into $V(D)$ form a feasible FVC solution. The vertex $v_1$ is a safe vertex so we only add one edge ($x_3v_1$) incident on $v_3$.
     }
    \label{fig:rainbow}
\end{figure}

We finalize our solution for the instance by computing $S_P$, edges with endpoints in $K_{1,1}, K_{1,2}, K_{2,2},$ and $K_{2,3}$, by considering each pseudo-edge $\tilde e= v_1v_2\in P$.
If $\tilde e$ has colour $c_{u}$, then $u\in K_{1,2}$. We add edges $uv_1$ and $uv_2$ to $S_P$. 

If $\tilde e$ has colour $c_{uv}$, then $u,v\in K_{2,3}$. We add edges to $S_P$ in exactly one of the following ways (breaking ties in an arbitrary but fixed manner): 
    (1) we add $uv_1, vv_2$ and $uv$ to $S_P$ if $uv_1,vv_2\in E$; 
    (2) we add $uv_2, vv_1$ and $uv$ to $S_P$ if $uv_2,vv_1\in E$; 
    (3) we add $uv_1$, $uv_2$, and $uv$ to $S_P$ if $u$ is safe, and  $uv_1$, $uv_2$, $uv \in E$; 
    (4) we add $vv_1$, $vv_2$, and $uv$ to $S_P$ if $v$ is safe, and  $vv_1$, $vv_2$, $uv \in E$; 
    (5) we add $vv_1$, $vv_2$, and $uu_1$ to $S_P$ if there exist a safe vertex $u_1\in V(D)$ such that $uu_1\in E$, and  $vv_1$, $vv_2 \in E$, and; 
    (6) we add $uv_1$, $uv_2$, and $vu_1$ to $S_P$ if there exist a safe vertex $u_1\in V(D)$ such that $vu_1\in E$, and  $uv_1$, $uv_2 \in E$.
    
For every $v\in K_{1,1}$, buy an edge $uv\in E$ where $u\in V(D)$ is safe. By definition of $K_{1,1}$, $u$ exists. Also, for every $uv\in E(K_{2,2})$, if $u$ and $v$ are both adjacent to  safe vertices $u'$ and $v'$ in $V(D)$, then we buy the edges $uu'$ and $vv'$. If at least one of $u$ and $v$, (wlog say $u$) is not adjacent to a safe vertex in $V(D)$, then by definition of $K_{1,2}$, $v$ must be safe, and $v$ must be adjacent to a safe vertex $v'\in V(D)$ in $G$. In this case we buy edges $vv',uv$.
Observe that by construction, $|S_P|=|K_{1,1}| + 2| K_{1,2}| +2| K_{2,2}| + \frac{3}{2}|K_{2,3}|$.
The output of our algorithm is $APX_2 \coloneqq S_P \cup S_1 \cup S_2 \cup S_3$. The following Lemma shows that $APX_2$ is feasible, can be computed in polynomial time, as well as an upper bound on $APX_2$. Its proof can be found in Appendix~\ref{sec:FVCupperbound} 
\begin{lemma}
\label{lem:FVCupperbound}
    Our algorithm computes a feasible solution $APX_2 = S_P \cup S_1 \cup S_2 \cup S_3$, in polynomial time and  $|APX_2| \leq |V(D)|-1 + |S_P| + \alpha - 1 - (\frac{\alpha - \alpha'}{2} + \frac{\alpha_{large}}{2} + \alpha_{1}')$.
\end{lemma}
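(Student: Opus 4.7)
I would prove feasibility, polynomial runtime, and the edge-count bound in turn.

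\emph{Feasibility.} My plan is to show that $(V(D),\,P\cup S_1\cup S_2\cup S_3)$ is FVC-feasible on $V(D)$ and then lift to all of $V$ via $S_P$. After Algorithm~\ref{alg:phase1}, $(V(D),\,P\cup S_1)$ has a unique large component $A$ because each good cycle merges its incident classes into one $2$-edge-connected piece. Both loops of Algorithm~\ref{alg:phase2} strictly decrease the block count (appealing to Lemma~\ref{lem:reduceblocks}), so $(V(A)\cup X_2,\,P\cup S_1\cup S_2)$ is $2$VC at termination. Algorithm~\ref{alg:phase3} then attaches every $v\in X_3$ either by one edge to a safe vertex (making $v$'s unsafety irrelevant) or by two edges to distinct vertices (so no single unsafe removal disconnects $v$). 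To lift to $V$, each pseudo-edge $\tilde e\in P$ is realised by $S_P$ as a $2$-edge path through a $K_{1,2}$ vertex or a $3$-edge gadget through a $K_{2,3}$ pair, and $K_{1,1},K_{2,2}$ are attached via safe edges or safe paths; each gadget survives removing any single unsafe vertex. Polynomial time then follows from Lemma~\ref{lem:RainbowForest} for computing $P$, Lemma~\ref{lem:findgoodcycles} for good-cycle detection in Algorithm~\ref{alg:phase1}, and the strict decrease of the block count per iteration of Algorithm~\ref{alg:phase2}, with all loops performing at most $|V(D)|$ iterations.

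\emph{Edge count.} Starting from $|APX_2|=|S_P|+|S_1|+|S_2|+|S_3|$, I would read off $|S_P|=|K_{1,1}|+2|K_{1,2}|+2|K_{2,2}|+\tfrac{3}{2}|K_{2,3}|$ and $|S_3|=\alpha_1'+2\alpha_2'$ directly from the construction; the real work is in bounding $|S_1|+|S_2|$. Algorithm~\ref{alg:phase1} contributes $\alpha_{large}+|X_1|-1+c$ edges, where $c$ is the number of good cycles it selects (each length-$\ell$ cycle adding $\ell$ edges for $\ell-1$ component reductions), while Algorithm~\ref{alg:phase2} contributes $2|X_2|+b$, for $b$ second-loop single-edge additions. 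The saving $\tfrac{\alpha_{large}}{2}$ arises because length-$2$ good cycles, available only between two large classes, are strictly more edge-efficient than longer cycles; the $\tfrac{|X_1|+|X_2|}{2}$ contribution comes from length-$\geq 3$ cycles absorbing pairs of singletons together with $X_2$-vertex attachments merging two blocks at a time; and $\alpha_1'$ comes from $X_3$ vertices with safe neighbours in Algorithm~\ref{alg:phase3}. Substituting $\alpha=\alpha_{large}+|X_1|+|X_2|+\alpha'$ and $\alpha'=\alpha_1'+\alpha_2'$ and rearranging will yield the claimed bound. The hardest step will be this bookkeeping: I plan to use an amortised accounting that credits each length-$2$ good cycle with a $\tfrac12$ edge-saving and each safe-neighbour attachment with a $1$ edge-saving, then verify the resulting counting identity against the invariants maintained by the three algorithms.
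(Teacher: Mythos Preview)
Your overall structure matches the paper's, and your feasibility and runtime sketches are essentially the same as the paper's (the paper also argues feasibility by first showing $(V(D),P\cup S_1\cup S_2\cup S_3)$ is FVC-feasible on $V(D)$ and then replacing each pseudo-edge by its $S_P$-gadget, with a case analysis over $K_{1,1},K_{1,2},K_{2,2},K_{2,3},V(D)$ for the unsafe-cut-vertex check). The edge-count plan, however, has a genuine gap and one minor misstatement.

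\textbf{The gap: block reduction by good cycles.} Your accounting for $|S_2|$ starts from $2|X_2|+b$ and bounds each $X_2$-attachment and each second-loop edge by one block reduction. That is fine, but to get the stated inequality you must also know that \emph{the good cycles chosen in Algorithm~\ref{alg:phase1} already reduce the block count of $(V(D),P\cup S_1)$ by at least $\alpha_{large}-1$}. The paper proves this separately (a good cycle touching $l$ large classes merges at least $l$ existing blocks into one, and $\sum (l_i-1)=\alpha_{large}-1$), so that after Algorithm~\ref{alg:phase1} there are at most $|V(D)|-\alpha-\alpha_{large}+1$ blocks, giving $|S_2|\le |X_2|+|V(D)|-\alpha-\alpha_{large}$. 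Without this observation your block budget for Algorithm~\ref{alg:phase2} is only $|V(D)|-\alpha$, yielding $|S_2|\le |X_2|+|V(D)|-\alpha-1$, which is $\alpha_{large}-1$ too large. Correspondingly, your list of savings $\tfrac{\alpha_{large}}{2}+\tfrac{|X_1|+|X_2|}{2}+\alpha_1'$ is exactly $\tfrac{\alpha_{large}}{2}$ short of what the lemma requires (the parenthesised term equals $\alpha_{large}+\tfrac{|X_1|+|X_2|}{2}+\alpha_1'$). The amortised scheme you describe does not recover this, because it tracks only component merges for Algorithm~\ref{alg:phase1} and blocks only for Algorithm~\ref{alg:phase2}; the missing $\tfrac{\alpha_{large}}{2}$ lives in the \emph{interaction} between the two potentials.

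\textbf{The misstatement.} Length-$2$ good cycles are \emph{less} edge-efficient per component merge ($2$ edges for $1$ merge) than longer cycles ($\le \tfrac{3}{2}$ per merge), not more. The paper's $|S_1|$ bound works the other way: because length-$2$ cycles can only occur between two large classes (condition~(4) of a good cycle), their number $t'$ is at most $\alpha_{large}-1$, and combining $t'+2(t-t')\le \alpha_{large}+|X_1|-1$ with $t'\le \alpha_{large}-1$ gives $t\le \alpha_{large}+\tfrac{|X_1|}{2}-1$ and hence $|S_1|\le 2\alpha_{large}+\tfrac{3}{2}|X_1|-2$. This is the $\tfrac{|X_1|}{2}$ saving, not an $\tfrac{\alpha_{large}}{2}$ saving; the $\alpha_{large}$ saving comes entirely from the block argument above.
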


\subsection{Approximation Factor}

We fix an optimal solution, $OPT$, for the instance $G=(V,E)$. The following Lemma, proven in Appendix~\ref{sec:FVC3lowerbounds}, finds a set of lower bounds on $|OPT|$ that depend on terms found by our algorithm, in particular $K_{1,1}, K_{1,2}, K_{2,2}, K_{2,3}, \alpha, \alpha_{large}, \alpha_1',$ and $\alpha_2'$. 
We Recall that $|S_P| = |K_{1,1}| + 2|K_{1,2}| + 2|K_{2,2}|  + \frac{3}{2}|K_{2,3}|$. 
\begin{lemma}
\label{lem:FVC3lowerbounds}
    $|OPT| \geq \max\{|S_P| + \alpha - 1, 2K_{1,2} -2\alpha_{large} + \alpha_{1}' + 2\alpha_{2}',n\}$
\end{lemma}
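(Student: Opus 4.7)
The lemma bundles three separate lower bounds, and my plan is to establish each one independently.

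First, $|OPT|\geq n$ is immediate from the standing assumption set at the start of Section~\ref{sec:11fvc}: Lemma~\ref{lem:optnottree} guarantees $|OPT|\geq n-1$, and the case $|OPT|=n-1$ is solvable exactly in polynomial time, so we may assume $|OPT|\geq n$ throughout.

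For $|OPT|\geq |S_P|+\alpha-1$, the plan is to build a feasible Rainbow Connection selection out of $OPT$ and invoke the optimality of $P$ from Lemma~\ref{lem:RainbowForest}. For each colour $c_u$ with $u\in K_{1,2}$, feasibility forces $OPT$ to contain at least two edges incident on $u$, both going to $V(D)$ (since $u$ has only unsafe neighbours in $V(D)$, a single such edge would leave a cut-vertex); pick two such edges $uv_1,uv_2\in OPT$ and select the pseudo-edge $v_1v_2$ of colour $c_u$. Analogously, for each colour $c_{uv}$ with $(u,v)\in K_{2,3}$, feasibility requires at least three $OPT$-edges near $\{u,v\}$, from which a pseudo-edge of colour $c_{uv}$ can be extracted. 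Let $P_{OPT}$ denote the resulting rainbow selection; by optimality of $P$, $(V(D),P_{OPT})$ has at least $\alpha$ components. Now $OPT$ decomposes into the support edges of $P_{OPT}$ (contributing $2|K_{1,2}|+\tfrac{3}{2}|K_{2,3}|$), the mandatory incidences for $K_{1,1}$ and $K_{2,2}$ (accounting for the remaining terms of $|S_P|$), and the $V(D)$-internal $OPT$-edges. Contracting the components of $G[V\setminus V(D)]$ in $OPT$ yields a connected multigraph on $V(D)$ whose edge set is $P_{OPT}$ together with the $V(D)$-internal edges of $OPT$; since $P_{OPT}$ already has $\geq \alpha$ components, at least $\alpha-1$ further $V(D)$-internal edges are needed to reach connectivity, yielding the bound.

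For $|OPT|\geq 2|K_{1,2}|-2\alpha_{large}+\alpha_1'+2\alpha_2'$, the plan is an inclusion--exclusion estimate. Let $O_1\subseteq OPT$ be the set of $OPT$-edges incident on $K_{1,2}$; the argument above gives $|O_1|\geq 2|K_{1,2}|$. Let $O_2\subseteq OPT$ be the set of $OPT$-edges incident on $X_3$; since every $v\in X_3$ needs at least one incident $OPT$-edge when $v$ has a safe neighbour in $V(A)\cup X_2$ and at least two otherwise, $|O_2|\geq \alpha_1'+2\alpha_2'$. Inclusion--exclusion then gives $|OPT|\geq |O_1|+|O_2|-|O_1\cap O_2|$, and it remains to bound $|O_1\cap O_2|\leq 2\alpha_{large}$, i.e., that $OPT$ has at most $2\alpha_{large}$ edges from $K_{1,2}$ directly to $X_3$. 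The argument I would use is a swapping argument against $P$: each edge in $O_1\cap O_2$ is one of a two-edge witness for a colour $c_u$ whose alternative pseudo-edge $v_1v_2$ has at least one endpoint at an $X_3$-singleton of $(V(D),P)$; substituting such an alternative into $P$ colour by colour would either strictly decrease the number of components (contradicting the Rainbow Connection optimality of $P$) or strictly decrease the number of isolated vertices in $P$ (contradicting the extremal property guaranteed by Lemma~\ref{lem:RainbowForest}), except for a controlled amount of slack attributable to each large component of $P$. A careful case analysis pins this slack to at most two overlap edges per large component, giving $|O_1\cap O_2|\leq 2\alpha_{large}$.

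The third bound is clearly the main obstacle: establishing the overlap inequality requires simultaneously exploiting the component-minimality and the isolated-vertex-minimality properties of $P$ from Lemma~\ref{lem:RainbowForest}, and handling a case analysis of how the pseudo-edges induced by $OPT$ can redistribute across the large and singleton components of $P$ without violating either extremal property.
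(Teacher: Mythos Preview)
Your handling of the first two bounds is in line with the paper: $|OPT|\ge n$ is the standing assumption, and your rainbow-selection idea for $|OPT|\ge |S_P|+\alpha-1$ is exactly the mechanism behind the paper's Claim~\ref{lem:eprimebound}, though the paper carries out a substantially more careful case analysis for colours $c_{uv}$ (the naive ``extract a pseudo-edge from three $OPT$-edges near $\{u,v\}$'' does not always produce an element of $\tilde E$ without the case split, and the contraction you describe also picks up $K_{1,1}$- and $K_{2,2}$-incident edges, which must be accounted for).

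The genuine problem is your third bound. The inclusion--exclusion step $|OPT|\ge |O_1|+|O_2|-|O_1\cap O_2|$ is fine, but the claimed inequality $|O_1\cap O_2|\le 2\alpha_{large}$ is not established and need not hold. The set $O_1\cap O_2$ consists of $OPT$-edges between $K_{1,2}$ and $X_3$; a single vertex of $K_{1,2}$ may have arbitrarily many such edges in $OPT$, so no per-large-component budget of two edges controls $|O_1\cap O_2|$. Your swapping sketch conflates edges of $OPT$ with pseudo-edges and does not bound the overlap.

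The paper avoids the overlap issue altogether by a different decomposition. It sets $K_{1,2}'\subseteq K_{1,2}$ to be those vertices with some neighbour (in $G$) in $X_3$, and observes that $I\coloneqq (K_{1,2}\setminus K_{1,2}')\cup X_3$ is independent. Summing required $OPT$-degrees over the independent set $I$ directly yields
\[
|OPT|\ \ge\ 2|I|-\alpha_1'\ =\ 2|K_{1,2}|+\alpha_1'+2\alpha_2'-2|K_{1,2}'|,
\]
so one only needs $|K_{1,2}'|\le \alpha_{large}$. This last inequality is where the singleton-minimality of $P$ from Lemma~\ref{lem:RainbowForest} enters cleanly: if $v'\in K_{1,2}'$ has a neighbour $u\in X_3$, then swapping the colour-$c_{v'}$ pseudo-edge $xy\in P$ for $xu$ or $yu$ would reduce the number of isolated vertices unless both $x$ and $y$ have degree one in $(V(D),P)$; hence $\{x,y\}$ is itself a large component, and distinct $v'$ yield distinct such components. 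The key missed idea is that bounding the number of \emph{vertices} $|K_{1,2}'|$ is both achievable and sufficient, whereas bounding the number of overlap \emph{edges} $|O_1\cap O_2|$ is neither.
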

With Lemma~\ref{lem:FVC3lowerbounds}, Lemma~\ref{lem:earAPX}, and Lemma~\ref{lem:FVCupperbound} we have the tools necessary to prove Theorem~\ref{thm:FVCapx}.
\begin{proof}[Proof of Theorem~\ref{thm:FVCapx}]
    Given instance of $(1,1)$-FVC, $G=(V_S\cup V_U, E)$. We apply Lemma~\ref{lem:removeforiddencycles} to assume without loss of generality that $G$ does not contain any forbidden cycles and $G$ is 2VC.
    We first find solution $APX_1$, and vertex sets $D$, $K_{1,1}, K_{1,2}, K_{2,2},$ and  $K_{2,3}$. By Lemma~\ref{lem:apx1feasible}, $APX_1$ is a feasible solution that we obtain in polynomial time.
    By Lemma~\ref{lem:earAPX}, we have $|APX_1| \leq \frac{4}{3}(|V(D)| -1) + |K_{1,1}| + 2|K_{1,2}| + |K_{2,2}| + \frac{3}{2}|K_{2,3}| = \frac{4}{3}(|V(D)| -1) + |S_P|$.

    Using sets $V(D)$, $K_{1,1}, K_{1,2}, K_{2,2},$ and  $K_{2,3}$ we apply Lemma~\ref{lem:RainbowForest} to compute a set of pseudo-edges $P$ on $V(D)$ with $\alpha$ many components and $\alpha_{large}$ many large components (at least 2 vertices). We then apply Algorithms~\ref{alg:phase1},~\ref{alg:phase2}, and~\ref{alg:phase3} as described in Section~\ref{sec:11/9fvc} to compute edge sets $S_1,S_2$, and $S_3$, as well as $\alpha_1'$ and $\alpha_2'$, where $\alpha' =\alpha_1' + \alpha_2'$. We then find edge set $S_P$ by replacing pseudo-edges with corresponding edges, and let $APX_2 = S_P \cup S_1 \cup S_2 \cup S_3$. By Lemma~\ref{lem:FVCupperbound} computing $APX_2$ in this way takes polynomial time and $|APX_2| \leq |V(D)|-1 + |S_P| + \alpha - 1 - (\frac{\alpha - \alpha'}{2} + \frac{\alpha_{large}}{2} + \alpha_{1}')$.
    
  By Lemma~\ref{lem:FVC3lowerbounds}, we have $|OPT| \geq \max\{|S_P| + \alpha - 1, 2K_{1,2} -2\alpha_{large} + \alpha_{1}' + 2\alpha_{2}',n\}$. Therefore, we have $\frac{\min\{APX_1, APX_2\}}{|OPT|}$ is at most:
    \begin{align*}
        \frac{\min\{ \frac{4}{3}(|V(D)| -1) + |S_P| , |V(D)|-2 + |S_P| + \alpha - (\frac{\alpha - \alpha'}{2} + \frac{\alpha_{large}}{2} + \alpha_{1}')\}}{\max\{|S_P| + \alpha - 1, 2K_{1,2} -2\alpha_{large} + \alpha_{1}' + 2\alpha_{2}',n\}} \leq \frac{11}{7}.
    \end{align*}
    Where the last inequality will be proven in Appendix~\ref{sec:11/7}.
\end{proof}
\section{FGC Improvement}
\label{sec:FGC}
The goal of this section is to prove Theorem~\ref{thm:FECapx}. We use the algorithm described by \cite{boyd2023approximation}, and provide a tighter analysis. In particular, their algorithm combines two different algorithms, and takes the best solution output among the two. 
Our improvement is based on a better analysis of the second one.

The lemma below comes from  \cite{boyd2023approximation}, in particular using their first algorithm. The authors prove the following (see Claim 6.4 of \cite{boyd2023approximation}).

\begin{lemma}[\cite{boyd2023approximation}]
\label{lem:FECapx1}
  Given a FGC instance with optimal solution OPT, one can compute in polynomial time a solution $F_1$
  with $|F_1| \leq |OPT \cap E_S| + \frac{3}{2}|OPT\cap E_U|$.
\end{lemma}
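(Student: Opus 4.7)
The plan is to reduce FGC on $G$ to the 2-edge-connected spanning subgraph (2ECSS) problem on the contracted multigraph $G' := G/E_S$, solve that 2ECSS instance with a $\tfrac{3}{2}$-approximation, and lift the resulting edge set back to $G$ by adding a minimum set of safe edges.

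First, I would establish the structural lemma that, for any feasible FGC solution $E^*$, the image of $E^* \cap E_U$ in $G'$ is a $2$-edge-connected spanning subgraph. The reason is that for every unsafe edge $e \in E^* \cap E_U$, FGC feasibility guarantees that $(V, E^* \setminus \{e\})$ is connected, and contracting the safe edges preserves connectivity (the safe edges of $E^*$ become loops in $G'$ and drop out). Applying this to $E^* = OPT$, one gets $\mathrm{OPT}_{2ECSS}(G') \le |OPT \cap E_U|$. I would then apply a $\tfrac{3}{2}$-approximation algorithm for unweighted 2ECSS (e.g.\ the Cheriyan--Seb\H{o}--Szigeti algorithm) to $G'$, obtaining $F_U \subseteq E_U$ with $|F_U| \le \tfrac{3}{2}\,\mathrm{OPT}_{2ECSS}(G') \le \tfrac{3}{2}\,|OPT \cap E_U|$.

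Next, I would compute a minimum set $T \subseteq E_S$ of safe edges so that $F_1 := F_U \cup T$ is a feasible FGC solution of $G$. Feasibility follows because every unsafe edge of $F_1$ lies in $F_U$ and is on a cycle in $G'$; this cycle lifts back to a cycle in $(V, F_1)$ provided $T$ supplies the required safe paths inside each safe component the cycle visits, which is exactly how $T$ is chosen. Overall connectivity is argued analogously.

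The heart of the analysis is the bound $|T| \le |OPT \cap E_S|$. The intended argument is an exchange argument: I would show that the candidate $T_0 := OPT \cap E_S$ is itself a valid safe augmentation for $F_U$, from which the bound follows by the minimality of $T$. Since both $F_U$ and $OPT \cap E_U$ are $2$-edge-connected spanning subgraphs of $G'$, any cycle of $OPT$ certifying the FGC condition for one of its unsafe edges can be rerouted through $F_U$, reusing inside each safe component the safe transits supplied by $OPT \cap E_S$. The main obstacle is making this routing precise, since the vertices of the safe components through which $F_U$'s cycles pass need not coincide with those used by $OPT$; this requires a careful routing argument inside each safe component. Combining the two bounds gives
\[
|F_1| \;=\; |T| + |F_U| \;\le\; |OPT \cap E_S| + \tfrac{3}{2}\,|OPT \cap E_U|,
\]
as desired.
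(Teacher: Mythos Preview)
Your reduction to 2ECSS on $G' = G/E_S$ and the bound $|F_U| \le \tfrac{3}{2}\,|OPT \cap E_U|$ are correct. The gap is in the exchange step: the assertion that $T_0 := OPT \cap E_S$ is a valid safe augmentation of $F_U$ is false in general, and in fact the inequality $|T| \le |OPT \cap E_S|$ can fail outright for the minimum such $T$. The obstacle you flag but hope to overcome is fatal: the vertices at which $F_U$ enters and leaves a safe component need not be connectable inside that component using only the safe edges that $OPT$ happened to buy, and no routing argument can repair this.

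Here is a concrete instance. Let $V=\{a,b,c,x,y\}$, $E_S=\{ab,bc\}$, $E_U=\{ax,ay,cx,cy,xy\}$. Then $G'$ has three vertices, and the triangle $F_U=\{ax,cy,xy\}$ is an optimal (hence certainly $\tfrac{3}{2}$-approximate) 2ECSS of $G'$. The set $OPT=\{ab,ax,ay,cx,cy\}$ is a feasible FGC solution of minimum size $5$, with $OPT\cap E_S=\{ab\}$. But $F_U\cup\{ab\}$ is infeasible: deleting $ax$ separates $\{a,b\}$ from $\{c,x,y\}$. The same happens for $F_U\cup\{bc\}$, so the minimum safe augmentation of this particular $F_U$ has size $2 > |OPT\cap E_S|=1$. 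Thus the decomposition $|F_1|=|T|+|F_U|\le |OPT\cap E_S|+\tfrac{3}{2}\,|OPT\cap E_U|$ cannot be obtained by bounding the two summands separately in the way you propose. (As a secondary issue, you also do not argue that a \emph{minimum} safe augmentation $T$ is computable in polynomial time.)

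Note that the paper does not supply its own proof of this lemma; it quotes it from \cite{boyd2023approximation} (Claim~6.4 there), whose algorithm and analysis proceed differently and do not rely on the exchange step above.
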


The second algorithm used in \cite{boyd2023approximation} needs to be described fully, as we modify its analysis.

\smallskip
\emph{Algorithm $2$}. Given a FGC instance defined on a graph $G$, consider the graph $G''$ obtained from $G$ by duplicating every safe edge in $E$. 
Run an $\beta$-approximation algorithm for the 2ECSS problem on $G''$, and let $F_2$ be the output. Drop extra copies of safe edges from $F_2$. 
\smallskip

The authors prove the following claim (see Claim 6.5 of \cite{boyd2023approximation}).
\begin{lemma}[\cite{boyd2023approximation}]
    We have $|F_2| \leq 2\beta|OPT\cap E_S| + \beta |OPT\cap E_U|$.
\end{lemma}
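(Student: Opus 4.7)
\smallskip
\noindent\textbf{Proof proposal.} The plan is to exhibit a specific feasible 2ECSS solution on $G''$ whose size is exactly $2|OPT\cap E_S|+|OPT\cap E_U|$, and then invoke the $\beta$-approximation guarantee of the subroutine together with the observation that removing duplicate safe copies can only shrink $F_2$.

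Concretely, define $OPT'' \subseteq E(G'')$ by taking every unsafe edge of $OPT$ once, and every safe edge of $OPT$ together with its duplicate in $G''$. Then $|OPT''|=2|OPT\cap E_S|+|OPT\cap E_U|$. The main step is to verify that $(V,OPT'')$ is 2-edge-connected. For this, consider any nontrivial cut $\delta(S)$ in $G$ and the corresponding cut in $G''$. Since $OPT$ is a feasible FGC solution, $\delta_{OPT}(S) \neq \emptyset$; moreover, if $|\delta_{OPT}(S)|=1$, the unique edge in the cut cannot be unsafe, for otherwise its removal would disconnect $(V,OPT)$, contradicting the FGC condition. Hence either $|\delta_{OPT}(S)|\geq 2$ (which gives $|\delta_{OPT''}(S)|\geq 2$ immediately), or the unique edge of $\delta_{OPT}(S)$ is safe and therefore appears twice in $OPT''$, again yielding $|\delta_{OPT''}(S)|\geq 2$. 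Thus every cut of $G''$ is crossed at least twice by $OPT''$, so $OPT''$ is 2-edge-connected on $G''$.

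Now apply the $\beta$-approximation for 2ECSS on $G''$. If $F_2^{\text{raw}}$ denotes the subroutine's output before dropping duplicates, then $|F_2^{\text{raw}}| \leq \beta \cdot |OPT''| = 2\beta|OPT\cap E_S|+\beta|OPT\cap E_U|$, since $OPT''$ is a feasible 2ECSS on $G''$. The algorithm then drops extra copies of safe edges, which can only decrease the cardinality, giving $|F_2|\leq |F_2^{\text{raw}}|\leq 2\beta|OPT\cap E_S|+\beta|OPT\cap E_U|$, as required.

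The only place that needs care is the cut argument identifying the single-edge cut with a safe edge; everything else is just combining the 2ECSS guarantee with the trivial fact that removing duplicates does not increase size. I do not expect a real obstacle here — the bottleneck in the overall analysis of \cite{boyd2023approximation} is choosing the right $\beta$ (and, in our refined version, exploiting that $\beta$ can be taken smaller when $|OPT|$ is large relative to $n$), not proving the inequality above.
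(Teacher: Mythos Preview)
Your proposal is correct and is exactly the standard argument behind this cited claim: the paper itself does not reprove the lemma (it defers to \cite{boyd2023approximation}), but in the proof of Theorem~\ref{thm:FECapx} it uses precisely the observation $opt(G'') \leq 2|OPT\cap E_S|+|OPT\cap E_U|$, which is what your construction of $OPT''$ establishes. Combining this with the $\beta$-approximation guarantee and the monotonicity under dropping duplicate safe copies gives the bound, just as you wrote.
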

In order to improve this algorithm we show that one can find a better approximation for 2ECSS if the size of the optimal solution is far enough from $n$, the number of vertices in $G$:

\begin{lemma}
\label{lem:2ECSSImprovement}
    Let $G$ be a 2ECSS instance. One can find in polynomial time a solution $APX$ of size $\frac{4}{3}n+\frac{2}{3}(x-1)$, where $x = |OPT|-n$. 
\end{lemma}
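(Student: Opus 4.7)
The plan is to design an algorithm producing a $2$-edge-connected spanning subgraph $APX$ satisfying the stated bound, by combining a structured ear-decomposition construction with a sharp lower bound on $|OPT|$. The approach closely parallels the ideas developed in Section~\ref{sec:5/3}.

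First I would construct a $2$-vertex-connected subgraph $D \subseteq G$ by starting from a cycle of length at least $4$ in $G$ and iteratively adding open ears of length at least $4$, following the procedure of Section~\ref{sec:5/3}. By Lemma~\ref{lem:decompositioninvariant}, this ensures $|E(D)| \leq \tfrac{4}{3}(|V(D)|-1)$, and by Lemma~\ref{lem:matching}, $G[V\setminus V(D)]$ forms a matching. Next I would extend $D$ to span all of $V$: for each singleton $v \in V\setminus V(D)$ add two edges from $v$ to $V(D)$, and for each matching edge $uv$ in $G[V\setminus V(D)]$ add $uv$ together with one edge from each of $u,v$ to $V(D)$. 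Writing $s$ and $m$ for the number of singletons and matching edges respectively, the resulting $APX$ is a feasible $2$-edge-connected spanning subgraph with $|APX| \leq \tfrac{4}{3}(|V(D)|-1) + 2s + 3m$, and $n = |V(D)| + s + 2m$.

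To conclude, I would establish a lower bound on $|OPT|$ that captures the extra edges $OPT$ must contain because of the vertices in $V\setminus V(D)$. Combining the degree-sum inequality $|OPT| \geq n$ with a cut-counting argument---each singleton contributes at least two $OPT$-edges to $V(D)$, each matching pair contributes at least three $OPT$-edges in total, and contracting $V\setminus V(D)$ in $OPT$ yields a $2$-edge-connected multigraph on $V(D)$---should yield an inequality of the form $x \geq s + \tfrac{m}{2} - 1$. Substituting into the upper bound on $|APX|$ and using $n = |V(D)| + s + 2m$ then gives $|APX| \leq \tfrac{4}{3}n + \tfrac{2}{3}(x-1)$ after routine algebra.

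The main obstacle is establishing the sharp lower bound on $|OPT|$: neither $|OPT| \geq n$ (degree-sum) nor $|OPT| \geq 2s + 3m$ (cut around $V\setminus V(D)$) suffices on its own, and their naive combination is not tight in general. In particular, when $V(D)$ is forced to be small (for example when $G$ contains no long cycle through some required vertex set), the greedy ear-decomposition may need to be refined---by trying several initial cycles or performing local ear-swaps to reduce $s + m$---in order to match the target bound. Handling this interplay between the structure of $D$ and the lower bound on $|OPT|$ is the delicate part of the argument.
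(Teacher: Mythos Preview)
Your approach has a genuine gap, and it also diverges completely from the paper's argument.

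The crucial step you identify yourself --- the lower bound $|OPT|\ge n+s+\tfrac{m}{2}-1$ --- is simply false for an arbitrary choice of the initial cycle. Take $G$ on vertices $\{a,b,c,d,v_1,v_2\}$ with the $4$-cycle $abcd$ together with edges $v_ia,v_ib,v_ic$ for $i=1,2$. This graph is $2$VC and has no forbidden cycle (only $d$ has degree $2$), so Lemmas~\ref{lem:decompositioninvariant} and~\ref{lem:matching} apply. Starting the ear procedure from the $4$-cycle $abcd$, no open ear of length $\ge 4$ exists, so $D=abcd$, $s=2$, $m=0$. Your inequality would require $x\ge 1$, but $a,v_1,b,v_2,c,d,a$ is a Hamiltonian cycle, so $|OPT|=n$ and $x=0$. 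Thus the bound fails. You gesture at ``trying several initial cycles or local ear-swaps'', but give no algorithm for choosing a good $D$ and no argument that one exists; the example shows that the naive cut/degree reasoning you sketch cannot certify the needed lower bound. A second, independent problem: Lemma~\ref{lem:matching} relies on the absence of forbidden cycles, which in the paper is obtained through an FVC-specific reduction (Lemma~\ref{lem:removeforiddencycles}). No such reduction is available for bare $2$ECSS instances, so in general $G[V\setminus V(D)]$ need not be a matching and your edge count $2s+3m$ is unjustified.

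The paper takes an entirely different route. Lemma~\ref{lem:2ECSSImprovement} is reduced to the $2$VC case (Lemma~\ref{lem:jensextension}) by a block decomposition, summing the per-block bounds. For the $2$VC case the paper does not build its own ear decomposition at all; instead it re-analyses the Seb\H{o}--Vygen algorithm, bounding $ALG_2$ (the nice ear decomposition) by $\tfrac{5}{4}(n-1)+\tfrac14\phi(G)+\tfrac12\pi$ and combining this with $ALG_1\le\tfrac32|OPT|-\pi$. The key lower bound is the nontrivial Frank-type inequality $|OPT|\ge n-1+\phi(G)$, where $\phi(G)$ is the minimum number of even ears over all ear decompositions. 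That inequality is exactly what substitutes for the bound you were unable to prove, and it is not derivable from the elementary cut-counting you propose.
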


Instead of proving Lemma~\ref{lem:2ECSSImprovement},  we here prove the following lemma, which assumes the instance is 2VC. Lemma~\ref{lem:2ECSSImprovement} is proved in Appendix~\ref{sec:jensextension} by considering the blocks of $G$.

\begin{lemma}
\label{lem:jensextension}
    Let $G$ be a 2ECSS instance that is 2VC. One can find in polynomial time a solution $APX$ of size $\frac{4}{3}n+\frac{2}{3}(x-1)$, where $x = |OPT|-n$. 
\end{lemma}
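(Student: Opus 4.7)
My plan is to apply the greedy open ear-decomposition procedure of Section~\ref{sec:11fvc} directly to the 2VC graph $G$. Starting from an initial cycle of length at least 4 (which exists since $G$ is 2VC and we may assume $n \geq 4$; otherwise the instance is solvable by inspection), I iteratively add potential open ears of length at least 4 until none remains, producing a 2VC subgraph $D$ with $|E(D)| \leq \tfrac{4}{3}(|V(D)|-1)$ by Lemma~\ref{lem:decompositioninvariant}. I then extend $D$ to a spanning 2-edge-connected subgraph by attaching short ears for the residual vertex set $V' := V\setminus V(D)$.

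The structure of $V'$ follows from cases (a) and (b) in the proof of Lemma~\ref{lem:matching}, which do not depend on the forbidden-cycle-freeness hypothesis used there. Each connected component of $G[V']$ must be one of: (i) a singleton, (ii) an isolated edge, or (iii) a 3-vertex path $u{-}v{-}w$ in which $u$ and $w$ have degree 2 in $G$ and share a unique $V(D)$-neighbor (case (c)'s ``forbidden cycle'' is allowed here because we are \emph{not} applying Lemma~\ref{lem:removeforiddencycles}, whose reduction is tailored to FVC). Writing $s_1,s_2,s_3$ for the counts of these three types, I attach a length-2 open ear per singleton, a length-3 open ear per isolated edge, and a closed 4-ear (the 4-cycle through $u,v,w,u'$) for each 3-path. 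Using $|V(D)| = n - s_1 - 2s_2 - 3s_3$, a direct computation gives
\[
|APX| \;\leq\; \tfrac{4}{3}(n-1) + \tfrac{2}{3}s_1 + \tfrac{1}{3}s_2.
\]

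The crux of the proof---and the step I expect to be the hardest---is to establish the matching lower bound $|OPT| \geq n - 1 + s_1 + s_2/2$, since substituting this into the display above gives exactly $\tfrac{4}{3}n + \tfrac{2}{3}(x-1)$. A naive per-component accounting (degree-$\geq 2$ at every singleton, cut-size-$\geq 2$ around every isolated edge, the 4 forced edges of each 3-path) only delivers $|OPT| \geq n$, which falls short by a constant times $s_1+s_2$. My plan to bridge this gap is to argue that whenever the desired inequality fails, some ear of $D$ admits a \emph{rotation}: an internal subpath of an existing ear can be replaced by a shorter subpath through a singleton of $V'$, thus moving that singleton into $V(D)$ while preserving the invariant $|E(D)| \leq \tfrac{4}{3}(|V(D)|-1)$. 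Performing such rotations exhaustively during the construction of $D$ drives the algorithm to a fixed point at which the lower bound provably holds. The appendix-level argument would verify polynomial-time detection of rotations and handle the border cases (e.g.\ when no 4-cycle exists and one must seed with a triangle). The extension from 2VC $G$ to general 2EC $G$, as Lemma~\ref{lem:2ECSSImprovement}, then follows by applying this bound block-by-block, as indicated in the paper.
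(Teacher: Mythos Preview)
Your approach is genuinely different from the paper's and, as it stands, contains a real gap at the step you yourself flag as ``the hardest.''

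The paper does not build a new ear decomposition at all. Instead it runs the Seb\H{o}--Vygen algorithm and gives a sharpened analysis: from the two solutions $ALG_1$ (with $|ALG_1|\le \tfrac32|OPT|-\pi$) and $ALG_2$ (the nice ear decomposition), it derives the refined bound $|ALG_2|\le \tfrac54(n-1)+\tfrac14\varphi(G)+\tfrac12\pi$ by a careful count over ear lengths, takes the convex combination $\tfrac13|ALG_1|+\tfrac23|ALG_2|$, and then plugs in the Seb\H{o}--Vygen lower bound $n+\varphi(G)-1\le |OPT|$. The point is that the nontrivial lower bound on $|OPT|$ is supplied externally by the $\varphi(G)$ theorem, not invented anew.

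Your proposal, by contrast, needs the inequality $|OPT|\ge n-1+s_1+\tfrac12 s_2$, and this inequality is \emph{false} for the decomposition $D$ produced by the greedy long-ear procedure. Take a cycle $C_{100}$ and attach $50$ independent vertices $v_1,\dots,v_{50}$, with $v_i$ adjacent to $u_{2i-2},u_{2i-1}$ (disjoint adjacent pairs on the cycle). The algorithm sets $D=C_{100}$, every $v_i$ is a singleton (no open ear of length $\ge 4$ exists through any $v_i$), so $s_1=50$. Here $n=150$ and $|OPT|=150$ (subdivide each edge $u_{2i-2}u_{2i-1}$ by $v_i$), so $x=0$ and the target is $\tfrac43\cdot 150-\tfrac23<200$, yet your $APX=100+2\cdot 50=200$. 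Thus the basic algorithm \emph{fails}, and the lower bound you need is off by $49$.

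Your proposed fix via ``rotations'' is where the argument breaks down. First, the direction is wrong: to absorb a singleton $v$ with neighbours $u,u'$ you must replace the edge $uu'$ (length $1$) by the path $u,v,u'$ (length $2$), i.e.\ \emph{lengthen} a subpath, not shorten it. Second, and more seriously, you give no argument that a rotation is always available when the inequality fails: if the two $D$-neighbours of a singleton are not adjacent in $D$, there is no single edge to subdivide, and replacing a longer $D$-subpath may eject other vertices or destroy the $\tfrac43$-density invariant. The claim ``the fixed point provably satisfies the lower bound'' is exactly the statement to be proved, and nothing in the proposal addresses it. (There is also a smaller structural slip: without the forbidden-cycle reduction, components of $G[V']$ need not be paths of length $\le 2$; they can be stars $K_{1,t}$ whose leaves all share a single $V(D)$-neighbour. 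This is fixable, but your case analysis as written is incomplete.)

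In short, the paper avoids this entire difficulty by leaning on the $\varphi(G)$ lower bound from Seb\H{o}--Vygen; your route would require proving a new lower bound on $|OPT|$ in terms of your $s_1,s_2$ \emph{after} rotations, and that proof is missing.
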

\begin{proof}
    We provide a refined analysis of the algorithm provided in \cite{sebHo2014shorter}. 
    
    The authors in \cite{sebHo2014shorter} first find what they call a "nice" ear decomposition.  
     To define it, let's introduce some terminology. The minimum number of ears of 1 one are called trivial, while  ears of length $2$ and $3$ are called short.
     A vertex is pendant if it is not an endpoint of any non-trivial ear, and an ear is pendant if it is non-trivial and all its internal vertices are pendant. A nice ear decomposition is an ear decomposition with minimum number of even ears, in which there are no trivial ears, all short ears are pendant, and internal vertices of distinct short ears are non-adjacent.  See Figure~\ref{fig:pendant} for an example of a nice ear decomposition, and a clarification of short ears.
     Now  consider the nice ear decomposition found in~\cite{sebHo2014shorter} and  let $\pi$ denote the number of short ears, $\pi_i$ denote the number of ears of length $i$, and $\phi(G)$ denote the number of even length ears.

\begin{figure}[t]
    \begin{center}
        \begin{tabular}{c c c}
            (a)     
\begin{tikzpicture}
    
    \tikzstyle{black dot}=[fill=black, draw=black, circle, minimum size=0pt,inner sep=2pt, outer sep=2pt]
    \tikzset{terminal/.style={draw=black,  thick,minimum size=0pt, inner sep=2.5pt, outer sep=1pt}}
    \tikzset{P node/.style={fill={rgb,255: red,20; green,154; blue,0}, draw={rgb,255: red,20; green,154; blue,0}, circle, minimum size=0pt,inner sep=1pt, outer sep=1pt}}
    \tikzset{Writing/.style={shape=circle} }
    \tikzstyle{empty circle}=[fill=white, draw=black, shape=circle]

    \tikzstyle{witness edge}=[-, draw={rgb,255: red,195; green,0; blue,3}, very thick]
    \tikzstyle{T edges}=[-, thick]
    \tikzstyle{Fat edge}=[-, ultra thick]
    \tikzstyle{new witness}=[-, draw={rgb,255: red,195; green,0; blue,3}, dashed, ultra thick]
    \tikzstyle{connected terminals}=[-, draw=black, dashed, very thick]
    \tikzstyle{P}=[-, draw={rgb,255: red,20; green,154; blue,0}, very thick]
    \tikzstyle{Marked Edges}=[-, draw={rgb,255: red,255; green,207; blue,14}, very thick]

		\node [style=Writing] (0) at (-1.75, 1.5) {};
		\node [style=Writing] (1) at (3.25, 1.5) {};
		\node [style=Writing] (2) at (3.25, -1.5) {};
		\node [style=Writing] (3) at (-1.75, -1.5) {};
		\node [style=black dot] (4) at (-1.25, 1) {};
		\node [style=black dot] (5) at (0, 1.25) {};
		\node [style=black dot] (6) at (0.75, 0.5) {};
		\node [style=black dot] (7) at (-0.75, 0.25) {};
		\node [style=black dot] (8) at (1.5, 1.5) {};
		\node [style=black dot] (9) at (1.5, 0.5) {};
		\node [style=black dot] (10) at (1.75, -0.25) {};
		\node [style=black dot] (11) at (1, 0) {};
		\node [style=black dot] (12) at (-1.5, -0.5) {};
		\node [style=black dot] (13) at (-0.5, -0.75) {};
		\node [style=black dot] (14) at (0.5, -0.75) {};
		\node [style=black dot] (15) at (1.25, -1.5) {};
		\node [style=black dot] (16) at (2.5, -0.5) {};
		\node [style=black dot] (17) at (2.5, 0.5) {};
		\node [style=Writing] (18) at (2.25, 1.25) {$P_1$};
		\node [style=Writing] (19) at (2.5, -1) {$P_2$};
		\node [style=Writing] (20) at (-0.75, 0.75) {$E_1$};
		\node [style=Writing] (21) at (1, 1) {$E_2$};
		\node [style=Writing] (22) at (-0.25, -0.25) {$E_3$};

        \draw [style=witness edge] (4) to (5);
		\draw [style=witness edge] (5) to (6);
		\draw [style=witness edge] (6) to (7);
		\draw [style=witness edge] (7) to (4);
		\draw [style=new witness] (5) to (8);
		\draw [style=new witness] (8) to (9);
		\draw [style=new witness] (9) to (10);
		\draw [style=new witness, in=345, out=165] (10) to (11);
		\draw [style=new witness] (11) to (7);
		\draw [style=connected terminals] (7) to (12);
		\draw [style=connected terminals] (12) to (13);
		\draw [style=connected terminals] (13) to (14);
		\draw [style=connected terminals] (14) to (11);
		\draw [style=P] (13) to (15);
		\draw [style=P] (15) to (16);
		\draw [style=P] (16) to (10);
		\draw [style=Marked Edges] (10) to (17);
		\draw [style=Marked Edges] (17) to (8);
\end{tikzpicture}&
        \end{tabular}
    \end{center}
    \caption{Here we have an example of a "nice" ear decomposition, consisting of ears $E_1,E_2,E_3,P_1$ and $P_2$, each represented with a different colour or edge shape. Note that the only short ears $P_1$ and $P_2$ are open and "pendant". That is, the internal vertices are \emph{only} on their respective ears.}
    \label{fig:pendant}
\end{figure}
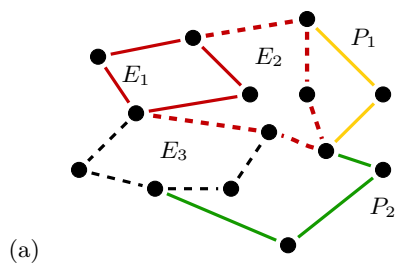

    The authors define two algorithms and take the minimum output of the two.  The first algorithm (see Section 5.3 of~\cite{sebHo2014shorter}) outputs a solution $ALG_1$ which satisfies $|ALG_1| \leq \frac{3}{2}|OPT| - \pi$. The second algorithm simply returns as a solution a nice ear decomposition (which they show exists for a 2VC graph, and can be computed efficiently). Let us call this solution $ALG_2$. We now provide a bound on the size of $ALG_2$.
\begin{align*}
        |ALG_2|& = \sum_{i\geq 2} i \pi_i = 2 \pi_2 + 3 \pi_3 + 4 \pi_4 + \sum_{i\geq 5} i \pi_i \\
        &\leq (\frac{5}{4} +\frac{3}{4}) \pi_2 + (\frac{5}{4} \cdot 2 +\frac{1}{2}) \pi_3 + (\frac{5}{4} \cdot 3 +\frac{1}{4}) \pi_4  +  
        \sum_{i \geq 5} \frac{5}{4} (i-1) \pi_i \\
        &\leq \frac{5}{4}(n-1) + \frac{3}{4}\pi_2 + \frac{1}{2}\pi_3 + \frac{1}{4}\pi_4
        = \frac{5}{4}(n-1) + \frac{1}{4}(\pi_2 + \pi_4) + \frac{1}{2}(\pi_3 + \pi_2)\\
        &\leq \frac{5}{4}(n-1) + \frac{1}{4}\phi(G) + \frac{1}{2}\pi.
\end{align*}

The second to last inequality follows since an ear of length $i$ has $i-1$ internal vertices, and every vertex of the graph but 1 is an internal vertex of exactly one ear.  
    So the algorithm in \cite{sebHo2014shorter} returns a solution of size $\leq \min\{|ALG_1|,|ALG_2|\} 
        \leq \frac{1}{3}|ALG_1| + \frac{2}{3}|ALG_2| $ which is
    \begin{align*}
     \leq & \frac{1}{3} \left( \frac{3}{2}OPT-\pi \right) + \frac{2}{3} \left(\frac{5}{4}(n-1) + \frac{1}{4}\phi(G) + \frac{1}{2}\pi \right) 
        =\frac{1}{2}OPT + \frac{5}{6}(n-1)+ \frac{1}{6}\phi(G).
    \end{align*}
   To prove our claim it is enough to show that the latter term is bounded by $\frac{4}{3}n+\frac{2}{3}(x-1)$. 
    For this, we need to employ Theorem 5 of \cite{sebHo2014shorter} which states that $n+\phi(G)-1 \leq |OPT|$, and hence 
   $\phi(G)-1 \leq x$. We then get
    \begin{align*}
        &\frac{4}{3}n+\frac{2}{3}(x-1) - \left(\frac{1}{2}OPT + \frac{5}{6}(n-1)+ \frac{1}{6}\phi(G)\right) \\
        =&\frac{4}{3}n+\frac{2}{3}(x-1) - \left(\frac{4}{3}n + \frac{1}{2}x + \frac{1}{6}\phi(G) - \frac{5}{6}\right) 
        =\frac{1}{6}(x -\phi(G) + 1) \ge  0.
    \end{align*}
\end{proof}
We are now ready to prove Theorem~\ref{thm:FECapx}.
\begin{proof}[Proof of Theorem~\ref{thm:FECapx}]
    By Lemma~\ref{lem:FECapx1}, we have that $ALG_1\leq |OPT \cap E_S|+\frac{3}{2}|OPT \cap E_U|$.

    Let $opt(G'')$ be the size of an optimal solution for the 2ECSS instance $G''$ considered by Algorithm 2. It is important to observe that
    $opt(G'') \leq 2|OPT \cap E_S| + |OPT \cap E_U|$. Let us denote $OPT_S :=OPT \cap E_S$ and $OPT_U :=OPT \cap E_U$. 
    Using Lemma~\ref{lem:2ECSSImprovement}, we can see that 
    \begin{align*}
        |ALG_2| &\leq \frac{4}{3}n  + \frac{2}{3}(opt(G'') - n-1) \leq \frac{4}{3}n + \frac{2}{3}(2|OPT_S| + |OPT_U| - n-1)\\
        &=\frac{2}{3}(n-1)+\frac{4}{3}|OPT_S|+\frac{2}{3}|OPT_U|\leq (\frac{4}{3}+\frac{2}{3}) |OPT_S| + \frac{4}{3}|OPT_U|.
    \end{align*}

    Thus our algorithm provides a solutions of size at most $ \min\{|ALG_1|, |ALG_2|\}$ which is 
    \begin{align*}
        \leq &\min \left\{|OPT_S|+\frac{3}{2}|OPT_U|, \left(\frac{4}{3}+\frac{2}{3} \right)|OPT_S| + \frac{4}{3}|OPT_U| \right\} \\
        \leq& \frac{3}{7}(2|OPT_S| + \frac{4}{3}|OPT_U| ) + \frac{4}{7} (|OPT_S|+\frac{3}{2}|OPT_U|) 
        = \frac{10}{7}|OPT_S|  + \frac{10}{7}|OPT_U|.
    \end{align*}
    
\end{proof}

\section{\texorpdfstring{$1+O(1/\sqrt{k})$}{1+O(1/sqrt(k))}-approximation for $k$-FGC}
\label{sec:1kFEC}
The goal of this section is to prove Theorem~\ref{thm:1kFECapx}. $k-$FGC was first tackled in \cite{adjiashvili2022flexible}, where the authors claimed a $(1+O(\frac{1}{k}))$-approximation. Unfortunately, the analysis of their algorithm has a flaw, rendering the proof incorrect, that seems not salvageable. We discuss this in further detail in Appendix~\ref{sec:bug}. We provide in this section an alternate analysis to a similar algorithm used in \cite{adjiashvili2022flexible}, finding a slightly larger approximation factor than the one claimed.

The algorithm employed here is simple. We compute a maximum forest $F_s$ on  $(V, E_S)$. The connected components of $(V, F_s)$ are contracted (remove loops but keep parallel edges) to find $G/F_s \coloneqq G'=(V',E')$. Then, we run one of the known $1+\frac{1}{O(k)}$-approximation algorithms  for $(k+1)$ECSS due to \cite{cheriyan2000approximating,gabow2005improved} on $G'$, finding $F\subseteq E'$. As long as $F$ is not a minimal solution for the $(k+1)$ECSS instance defined on $G'$, we find and remove an edge $e\in F$ such that $F\setminus \{e\}$ is still a solution. We then return ALG:= $F_s \cup F$ as our solution.



To prove Theorem~\ref{thm:1kFECapx}, we need the following lemmas. The first one relies on the fact that a minimal $k$ECSS solution can be partitioned into $k$ forests. As usual,
$n$ denotes $|V|$. 
\begin{lemma}[\cite{gabow2005improved}]
\label{lem:kconnectedupperbound}
   For a minimal solution $F$ to a given $k$ECSS instance $G$,  $|F| \leq nk$.
\end{lemma}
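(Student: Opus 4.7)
The plan is to prove the bound via a forest-decomposition argument. The key claim is that any minimal $k$-edge-connected spanning subgraph $F$ admits a partition of its edge set into $k$ edge-disjoint forests. Once this is established, since each forest on $n$ vertices has at most $n-1$ edges, we immediately get $|F| \leq k(n-1) \leq nk$.

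To establish the forest partition I would invoke Nash--Williams' forest-covering theorem, which states that the edges of a graph $H$ can be covered by $k$ forests if and only if for every vertex subset $U \subseteq V(H)$ with $|U| \geq 2$, the number of edges of $H$ having both endpoints in $U$ is at most $k(|U|-1)$. So the main task reduces to proving the density inequality $|E(F[U])| \leq k(|U|-1)$ for every $U\subseteq V$ with $|U|\ge 2$.

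For this inequality I would use the minimality of $F$: for each edge $e \in F$ there exists a subset $S_e \subseteq V$ with $e \in \delta_F(S_e)$ and $|\delta_F(S_e)| = k$, since otherwise $F \setminus \{e\}$ would remain $k$-edge-connected, contradicting minimality. Fixing $U$ and considering only edges $e \in E(F[U])$ (whose endpoints both lie in $U$), a standard uncrossing argument on the corresponding tight cuts yields a laminar family on $U$, and a counting argument on this laminar structure delivers the bound $|E(F[U])| \leq k(|U|-1)$.

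The main obstacle is the uncrossing and counting step. A naive uncrossing of submodular tight cuts typically produces a weaker bound of the form $2k(|U|-1)$, and sharpening the constant to $k$ requires exploiting the specific combinatorial structure of minimum cuts in a $k$-edge-connected graph (in particular, the interaction between tight cuts crossing $U$ and those contained in $U$). This subtle step is precisely what makes the result a genuine contribution of \cite{gabow2005improved}, and for the purposes of this paper the cleanest route is to invoke the result in the form stated.
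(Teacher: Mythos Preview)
Your high-level strategy matches the paper's justification exactly: the paper states (just before the lemma) that it ``relies on the fact that a minimal $k$ECSS solution can be partitioned into $k$ forests,'' from which $|F|\le k(n-1)\le nk$ is immediate. So at the top level you are aligned with the paper.

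Where you diverge is in the justification of the forest partition. Going through Nash--Williams and then trying to verify the density condition $|E(F[U])|\le k(|U|-1)$ via uncrossing of tight cuts is a detour, and as you yourself note, the uncrossing step is delicate. The standard argument---and the one implicitly behind the citation (the paper also cites \cite{nagamochi1992linear} for exactly this fact in Appendix~\ref{sec:bug})---is direct and constructive: starting from $F$, take $F_1$ a maximal spanning forest of $F$, then $F_2$ a maximal spanning forest of $F\setminus F_1$, and so on up to $F_k$. The sparse-certificate theorem of Nagamochi and Ibaraki guarantees that $F_1\cup\cdots\cup F_k$ is still $k$-edge-connected whenever $F$ is. By minimality of $F$, this union must equal $F$, so $F$ decomposes into $k$ forests with no uncrossing, no laminar families, and no appeal to Nash--Williams. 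Your eventual conclusion---to simply invoke the cited result---is therefore the right call, but the intermediate plan you sketched is not the route taken in the literature and would be considerably harder to carry out than the direct sparse-certificate argument.
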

In every $k$-edge-connected graph the degree of each vertex is at least $k$, hence:
\begin{lemma}
\label{lem:kconnectedlowerbound}
    For an optimal solution $OPT$ of a kECSS instance $G$, $ \frac{nk}{2}\leq |OPT|$.
\end{lemma}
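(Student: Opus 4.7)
The plan is to prove the stated lower bound by a short degree-counting argument based on the definition of $k$-edge-connectivity.

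First I would observe that in any $k$-edge-connected graph $H = (V, F)$, every vertex $v$ must satisfy $\deg_H(v) \geq k$. Indeed, if $\deg_H(v) < k$, then removing the (fewer than $k$) edges incident to $v$ would disconnect $v$ from the rest of the graph, contradicting the fact that $H$ remains connected after the deletion of any set of at most $k-1$ edges. Since $OPT$ is a feasible $k$ECSS solution on $G = (V, E)$, the subgraph $(V, OPT)$ is $k$-edge-connected, so this minimum-degree bound applies.

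Next, I would apply the handshake lemma to the subgraph $(V, OPT)$:
\begin{equation*}
2 |OPT| \;=\; \sum_{v \in V} \deg_{(V,OPT)}(v) \;\geq\; \sum_{v \in V} k \;=\; nk.
\end{equation*}
Dividing by $2$ yields $|OPT| \geq nk/2$, which is exactly the claim.

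There is no real obstacle here: the argument is two lines, relying only on the definitional consequence that $k$-edge-connectivity forces minimum degree at least $k$, together with the handshake identity. No structural results about $k$ECSS solutions (such as Lemma~\ref{lem:kconnectedupperbound}) are needed for this direction.
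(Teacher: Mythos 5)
Your proof is correct and matches the paper's argument: the paper justifies this lemma with exactly the same observation that every vertex of a $k$-edge-connected graph has degree at least $k$, from which the bound follows by the handshake lemma. No issues.
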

The following lemma is proven in \cite{adjiashvili2022flexible}.
\begin{lemma}[\cite{adjiashvili2022flexible}]
\label{lem:contractsafeunsafefeasible}
    Let $H\subseteq G$ be a feasible solution to a $k$-FGC instance. Then $H/(E_s \cap E(H))$ is $(k+1)$-edge-connected.
\end{lemma}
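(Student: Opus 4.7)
The plan is to prove the contrapositive: if $H/(E_s\cap E(H))$ is not $(k+1)$-edge-connected, then $H$ is not a feasible $k$-FGC solution. The key observation is that contraction and edge deletion commute, and that only unsafe edges of $H$ survive as non-loop edges after contracting the safe ones.

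First I would make the structural observation explicit: every non-loop edge of $H' \coloneqq H/(E_s\cap E(H))$ is an unsafe edge of $H$, since any safe edge of $H$ is, by construction, contracted and hence removed as a loop. Consequently, any subset of $k$ edges $\{f_1,\dots,f_k\}$ in $H'$ lifts to a set of $k$ unsafe edges $\{e_1,\dots,e_k\}\subseteq E_U\cap E(H)$, one preimage per $f_i$.

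Now suppose for contradiction that $H'$ is not $(k+1)$-edge-connected. Then (assuming $H'$ has at least two vertices; otherwise the claim holds trivially) there exist $k$ edges $f_1,\dots,f_k$ in $H'$ whose removal disconnects $H'$. Lift them to unsafe edges $e_1,\dots,e_k\in E_U\cap E(H)$ as above. Since the $e_i$ are unsafe, $E_s\cap E(H) = E_s\cap E(H\setminus\{e_1,\dots,e_k\})$, so the identity
\[
(H\setminus\{e_1,\dots,e_k\})/(E_s\cap E(H)) \;=\; \big(H/(E_s\cap E(H))\big)\setminus\{f_1,\dots,f_k\} \;=\; H'\setminus\{f_1,\dots,f_k\}
\]
holds (deletion and contraction of disjoint edge sets commute). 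Because contraction never disconnects a graph, connectedness of $H\setminus\{e_1,\dots,e_k\}$ would imply connectedness of the left-hand side, hence of $H'\setminus\{f_1,\dots,f_k\}$, contradicting our choice of $f_1,\dots,f_k$. Therefore $H\setminus\{e_1,\dots,e_k\}$ is disconnected, so $H$ violates condition (2) of the $k$-FGC definition and is not feasible.

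I do not anticipate a genuine obstacle here; the only care needed is in the bookkeeping that (i) the preimages of edges in the contracted graph are precisely unsafe edges of $H$, and (ii) the commutation between contracting the safe edges and deleting the $e_i$ is valid because the two edge sets are disjoint. Both are immediate from the definitions given in the Preliminaries.
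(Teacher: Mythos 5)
Your proof is correct. The paper itself never proves this lemma---it is imported verbatim from \cite{adjiashvili2022flexible} (and restated without proof as Lemma~\ref{lem:LemmaOneAdjishvali} in Appendix~\ref{sec:bug})---so there is no in-paper argument to compare against, but your reasoning (the non-loop edges surviving the contraction are exactly the unsafe edges of $H$, deletion and contraction of disjoint edge sets commute, and contraction preserves connectivity) is exactly the standard argument and is complete. The only point worth flagging is the padding step when the minimum cut of $H'$ has size $j<k$: taking exactly $k$ edges requires $H'$ to have at least $k$ edges, so one should either invoke the intended reading of condition (2) as ``every set of \emph{at most} $k$ unsafe edges'' or just use the $j$ cut edges directly under that reading; this is a quirk of the paper's phrasing of the definition rather than a gap in your argument.
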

We are now ready to prove Theorem~\ref{thm:1kFECapx}.
\begin{proof}[Proof of Theorem~\ref{thm:1kFECapx}]
    Fix an optimal solution $OPT$, which we decompose as $OPT = OPT_s \cup OPT_u$, where $OPT_s\subseteq E_S$, and $OPT_u \subseteq E_U$. 
    Using Lemma~\ref{lem:kconnectedupperbound} we have $|F| \leq (n-|F_s|)(k+1)$.
   This observation, together with Lemmas~\ref{lem:kconnectedlowerbound} and~\ref{lem:contractsafeunsafefeasible} and the fact that $|OPT_s| \leq |F_s|$ gives
    \begin{align}
        |OPT| &\geq |OPT_s| + (n-|OPT_s|)\frac{k+1}{2} \\
        \label{inq:3}
        & \geq |F_s| + (n-|F_s|)\frac{k+1}{2} = n\frac{(k+1)}{2} + \left(1-\frac{k+1}{2}\right)|F_s|.
    \end{align}
    Let $\beta_{k+1}$ be the best approximation factor for $(k+1)$ECSS. We distinguish two cases. 
    \begin{itemize}
        \item \textit{Case 1:} $|F_s| \leq n\frac{\sqrt{k}+2}{\sqrt{k}+3}$. Note that
            $|ALG| \leq n\frac{\sqrt{k}+2}{\sqrt{k}+3} + \beta_{k+1}|OPT|$.
            By
        applying inequality~\ref{inq:3} and the assumption of this case, we get
        \begin{align*}
            |OPT| &\geq n\frac{(k+1)}{2} + \left(1-\frac{k+1}{2}\right)|F_s|
            \geq  n\left(\frac{k+1}{2} + \left(1 - \frac{k+1}{2}\right)\frac{\sqrt{k}+2}{\sqrt{k}+3}\right).
        \end{align*}
        Where the second inequality follows since $1-\frac{k+1}{2} \leq 0$ for $k\geq 1$. Therefore, we have 
        \begin{align*}
            |ALG| 
            &\leq n\frac{\sqrt{k}+2}{\sqrt{k}+3} + \beta_{k+1}|OPT| 
            \leq \frac{|OPT|(\sqrt{k}+2) }{(\sqrt{k}+3) \left(\frac{k+1}{2} + \left(1 - \frac{k+1}{2}\right)\frac{\sqrt{k}+2}{\sqrt{k}+3}\right)} + \beta_{k+1}|OPT|\\
            &= \frac{|OPT|2(\sqrt{k}+2)}{k+2\sqrt{k} +5}  + \beta_{k+1}|OPT|
            < \left( \frac{2}{\sqrt{k}} + \beta_{k+1} \right) |OPT|.
        \end{align*}

        \item \textit{Case 2:} $|F_s| > n\frac{\sqrt{k}+2}{\sqrt{k}+3}$.
        We here use a tighter bound on $ALG$. Namely,
            $|ALG| \leq n\frac{\sqrt{k}+2}{\sqrt{k}+3} + \beta_{k+1}|OPT_U|$. This holds as $OPT_U$ is a feasible solution to the $k$ECSS instance we obtain contracting $F_s$: this is because $F_s$ is a maximal forest, hence vertices of each component of $(V,OPT_S)$ are a subset of vertices of some component of $(V,F_S)$. Therefore
        \begin{align*}
            \frac{ALG}{OPT} \leq \frac{|F_s| + \beta_{k+1} |OPT_u|}{|OPT_s| + |OPT_u|} \leq \max \left\{\frac{|F_s|}{|OPT_s|}, \beta_{k+1} \right\}.
        \end{align*}
        We need to bound $\frac{|F_s|}{|OPT_s|}$. 
        By applying Lemmas~\ref{lem:kconnectedupperbound} and \ref{lem:kconnectedlowerbound} we see that 
        \begin{align*}
            & |F_s| + (n-|F_s|)(k+1) 
            \geq |ALG| \geq |OPT| 
            \geq |OPT_s| + (n-|OPT_s|)\frac{k+1}{2} \\ 
            &= n+(n-|OPT_s|)\frac{k-1}{2}
            \Rightarrow  \frac{2k}{k-1}(n-|F_s|)\geq n-|OPT_s|.
        \end{align*}
        
        Using this inequality with the fact that $\frac{2k}{k-1}\le 4$ when $k\ge 2$, as well as the case assumption we find 
        \begin{align*}
            n - |OPT_s| 
            &\leq \frac{2k}{k-1}(n - |F_s|) 
             < n\frac{2k}{k-1} \left( 1- \frac{\sqrt{k}+2}{\sqrt{k}+3}\right)
            \le 4n \left( 1- \frac{\sqrt{k}+2}{\sqrt{k}+3}\right).
        \end{align*}
        Therefore, we have
        \begin{align}
             |OPT_s| &\geq n\left(1 - 4 + \frac{4(\sqrt{k}+2)}{\sqrt{k}+3} \right) = n\left(1 - \frac{4}{\sqrt{k}+3} \right).\label{inq:case2}
        \end{align}
        The last term above is positive for all $k\geq 2$. Using Inequality~\ref{inq:case2} and the fact that $|F_s| < n$:
        \begin{align*}
            \frac{|F_s|}{|OPT_s|} < \frac{n}{n \left( 1- \frac{4}{\sqrt{k} + 3} \right)} = \frac{1}{ 1- \frac{4}{\sqrt{k} + 3} } = \frac{\sqrt{k}+3}{\sqrt{k}-1} = 1 + \frac{4}{\sqrt{k}-1}.
        \end{align*}
        Therefore, in this case we have 
        \begin{align*}
            \frac{|ALG|}{|OPT|}\leq \max \left\{ \frac{|F_s|}{|OPT_s|}, \beta_{k+1} \right\} 
            \leq \max \left\{ 1 + O \left( \frac{1}{\sqrt{k}} \right) , \beta_{k+1} \right\}.
        \end{align*}
    \end{itemize}
    Combining the inequalities from Case 1 and 2 we find,
    \begin{align*}
        \frac{|ALG|}{|OPT|}\leq \max \left\{  \frac{2}{\sqrt{k}} + \beta_{k+1} , 1 + O \left( \frac{1}{\sqrt{k}} \right) , \beta_{k+1}\right\} = 1 + O\left( \frac{1}{\sqrt{k}} \right).
    \end{align*}
    Where the last inequality above follows by applying the algorithm in \cite{cheriyan2000approximating,gabow2009approximating}, for which  we have that $\beta_{k+1} \leq 1 + \frac{1}{O(k)} $ and our claim follows.
\end{proof}

\bibliography{refs}

\begin{appendix}
    \section{Proof of useful Lemmas}
        


\subsection{Proof of Lemma~\ref{lem:reduceblocks}}
\label{sec:reduceblocks}
\begin{proof}
    $H$ is a subgraph of $G$, therefore every block of $H$ is either a block in $G$, or a subgraph of a block in $G$. Since $B(H) > B(G)$, there must exist a block $B$ of $G$ that contains at least two blocks of $H$. Let $B_1,\dots, B_k$ be blocks of $H$ that are subgraphs of $B$ that all share the same vertex, $c \in V(B)$.

    We wish to show that there are blocks $B_i$ and $B_j$, $i,j\in \{1,\dots, k\}$, and edge $e\in E(G)\backslash E(H)$ between $B_i$ and $B_j$, that does not contain $c$ as an endpoint. That is, $B_i \cup B_j \cup \{e\}$ is a block. 

    Pairwise, any edges with $c$ as an endpoint will be part of a cycle in $B$ by Property 2 of Lemma~\ref{lem:blockbound}. So we pick edges $e_1$ and $e_2$ arbitrarily of these. By construction there is block $B_1$ that contains $e_1$ and block $B_2$ that contains $e_2$. Using Property 2 of Lemma~\ref{lem:blockbound}, there is a cycle $C$ in $B$ that contains both $e_1$ and $e_2$. Furthermore, the path $C-\{e_1,e_2\}$ starts in $B_1$ and ends in $B_2$. Therefore, there is an edge connecting a pair of blocks in $B_1,\dot, B_k$.
\end{proof}
    \section{Missing Proofs of Section~\ref{sec:11fvc}}

\subsection{Proof of Lemma~\ref{lem:removeforiddencycles}}
\label{sec:removeforbiddencycles}

\begin{lemma}
\label{lem:2-vertex-connected-node}
    Let $G=(V,E)$ be a graph and let $x$ be a safe cut-vertex of $G$. Let $V_1$ be the vertex set of one connected component of $G[V\setminus\{x\}]$ and let $V_2=V\setminus (\{V_1\}\cup x)$. 
    Now consider an arbitrary subset $E'$ of $E$. For $i\in \{1,2\}$, we define $E'_i$  as the set of edges in $E'$ that have both endpoints in $V_i\cup \{x\}$.
    Then $E'$ is a feasible FVC solution for $G$ if and only if $E'_1$ and $E'_2$ are feasible FVC solutions for $G_1:=G[V_1\cup\{x\}]$ and  $G_2:=G[V_1\cup\{x\}]$, respectively.
\end{lemma}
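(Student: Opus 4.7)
The plan is to exploit the structural consequence of $x$ being a cut-vertex of $G$: no edge of $E$ has one endpoint in $V_1$ and the other in $V_2$, since any such edge would allow a path from $V_1$ to $V_2$ avoiding $x$. Hence no such edge lies in $E'$ either, and every edge of $E'$ lies in exactly one of $E'_1, E'_2$, giving the clean decomposition $E' = E'_1 \cup E'_2$ where the two induced subgraphs $(V_1 \cup \{x\}, E'_1)$ and $(V_2 \cup \{x\}, E'_2)$ share only the vertex $x$. With this observation, both directions of the biconditional reduce to routine arguments about connectivity of a graph glued at a single vertex.

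For the forward direction, assume $E'$ is feasible for $G$. Connectivity of $(V, E')$ forces paths from $x$ to each vertex of $V_1$, and any such path is confined to $V_1 \cup \{x\}$ (otherwise it would reenter $V_1$ from $V_2$ without using $x$, impossible by the structural observation), so $(V_1 \cup \{x\}, E'_1)$ is connected; symmetrically for $E'_2$. For an unsafe vertex $u \in V_1$, feasibility gives that $(V \setminus \{u\}, E' \setminus \delta(u))$ is connected, and restricting to $V_1 \cup \{x\} \setminus \{u\}$ the same confinement argument shows $u$ is not a cut-vertex of $(V_1 \cup \{x\}, E'_1)$. The same reasoning handles unsafe vertices in $V_2$, establishing feasibility of $E'_1$ and $E'_2$ for $G_1$ and $G_2$ respectively.

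For the backward direction, I would glue. Connectivity of $(V, E')$ follows immediately from the two subgraphs being connected and sharing $x$. For an unsafe vertex $u$, observe $u \neq x$ (since $x$ is safe), so without loss of generality $u \in V_1$. Feasibility of $E'_1$ yields that $(V_1 \cup \{x\} \setminus \{u\}, E'_1 \setminus \delta(u))$ is connected; this piece still meets $(V_2 \cup \{x\}, E'_2)$ at $x$, and their union is precisely $(V \setminus \{u\}, E' \setminus \delta(u))$, which is therefore connected. There is no real obstacle here: the single nontrivial step is the initial observation that the cut-vertex $x$ forbids any $V_1$–$V_2$ edge, and everything else is bookkeeping.
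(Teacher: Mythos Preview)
Your proposal is correct and follows essentially the same approach as the paper: both rely on the key observation that since $x$ is a cut-vertex there are no $V_1$--$V_2$ edges (so $E'=E'_1\cup E'_2$), and then argue each direction via the standard gluing/splitting of connectivity at a shared vertex. The only cosmetic difference is that the paper proves the forward implication by contrapositive (an infeasible $E'_i$ yields an infeasible $E'$), whereas you argue it directly by confining paths; the content is the same.
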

\begin{proof}
    Clearly, we can observe that $E'_1$ and $E'_2$ partition $E'$.
    Assume $E'_i$ is not a feasible FVC solution for $G_i$ for some $i\in \{1,2\}$. 
    First, suppose $(V_i\cup \{x\},E'_i)$ is disconnected. Then, there exists a vertex $v$ in $V_i$ that has no paths to $x$ in  $(V_i\cup \{x\},E'_i)$. Therefore, $v$ has no paths to $x$ in  $(V,E')$ and hence $E'$ is not a feasible FVC solution for $G$. So suppose that $(V_i\cup \{x\},E'_i)$ is connected, thus, as $E'_i$ is not a feasible FVC solution for $G_i$, then $(V_i\cup \{x\},E'_i)$ must contain an unsafe cut-vertex, say $u\in V_i$. Now $u$ is also a cut-vertex in $(V,E')$, hence the claim.

    Now, we assume for each $i\in \{1,2\}$, $E'_i$ is a feasible FVC solution for $G_i$. Clearly $(V,E')$ is connected. Furthermore, if we remove any unsafe vertex $v_i\in V_i$ from $(V_i\cup \{x\},E'_i)$, it still remains connected. Therefore, by removing any unsafe vertex from $(V,E')$, it will remain connected. Hence $E'$ is a feasible FVC solution for $G$.
\end{proof}
For convenience we repeat the definition of forbidden cycles here.
\begin{definition}[Forbidden Cycle]
    We say that a $4$-cycle $C$ in $G$ is a forbidden cycle if $C$ has two vertices $w$ and $z$ such that $wz\notin E(C)$ and $\deg_G(w)=\deg_G(z)=2$.
\end{definition}

\begin{lemma}
\label{lem:forbiddencyclesUnsafe}
    Assume $G$ is $2$VC and has at least five vertices. Let $C:=uwvz$ be a forbidden cycle of $G$ as described above such that $\deg_G(w)=\deg_G(z)=2$ and $u$ and $v$ are unsafe. Then any optimal solution for $G$ can be decomposed into an optimal solution for $G\setminus\{w\}$ and the edges $\{uw,wv\}$.
\end{lemma}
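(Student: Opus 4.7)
The plan is to pin down precisely which edges any feasible FVC solution must contain around $w$ and $z$, and then verify that removing $w$ together with its two incident edges produces an optimal solution on $G\setminus\{w\}$.

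First I would establish that every feasible FVC solution $F$ of $G$ contains all four cycle edges $\{uw, wv, uz, vz\}$. Because $\deg_G(w)=2$ with neighborhood $\{u,v\}$, at least one of $uw,wv$ must lie in $F$ for $w$ to be connected. If $wv \notin F$, then removing the unsafe vertex $u$ from $(V,F)$ would isolate $w$, contradicting feasibility, so $wv \in F$; symmetrically $uw \in F$, and the same argument applied to $z$ forces $\{uz,vz\}\subseteq F$. Applying this to the optimal solution $OPT$ yields $\{uw,wv,uz,vz\}\subseteq OPT$, so setting $OPT' := OPT\setminus\{uw,wv\}$ gives the desired edge-level decomposition $OPT = OPT' \cup\{uw,wv\}$ with $OPT' \subseteq E(G\setminus\{w\})$.

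Next I would verify that $OPT'$ is a feasible FVC solution for $G\setminus\{w\}$. Connectivity of $(V\setminus\{w\}, OPT')$ follows by a detour argument: any path in $(V,OPT)$ between two vertices distinct from $w$ can use $w$ only as the internal vertex of a subpath through $u,w,v$, which can be rerouted through $z$ via the forced edges $uz,vz$. For each unsafe vertex $x \in V\setminus\{w\}$ one must show $(V\setminus\{w\}, OPT')\setminus\{x\}$ is connected. The cases $x\in\{u,v\}$ are immediate, since in $(V,OPT)\setminus x$ the vertex $w$ becomes a leaf attached to the remaining one of $\{u,v\}$, and removing a leaf from a connected graph preserves connectivity. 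For $x\notin\{u,v,z\}$, the same detour through $z$ works. The main obstacle is the case $x=z$, relevant only when $z$ is unsafe: here one must show $(V,OPT)\setminus\{w,z\}$ is connected. The key idea is to exploit a fifth vertex: since $|V|\geq 5$, pick any $y\in V\setminus\{u,v,w,z\}$. In $(V,OPT)\setminus u$ both $w$ and $z$ are leaves attached to $v$ (their only other neighbor having been deleted), so every path from $v$ to $y$ in this connected graph avoids $w$ and $z$; symmetrically $(V,OPT)\setminus v$ contains a path from $u$ to $y$ avoiding $w$ and $z$. Concatenating yields a walk from $u$ to $v$ in $(V,OPT)$ avoiding $\{w,z\}$, and the same construction applied to an arbitrary third vertex connects it to $\{u,v\}$ inside $(V,OPT)\setminus\{w,z\}$.

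Finally, I would prove optimality of $OPT'$ via the reverse direction: for any feasible FVC solution $OPT''$ of $G\setminus\{w\}$, the set $OPT''\cup\{uw,wv\}$ is feasible for $G$. Connectivity is immediate, and the required post-removal connectivity for every unsafe vertex follows from the same leaf/detour arguments as above (with $w$ becoming a leaf whenever $u$ or $v$ is removed, and the case $x=w$ being trivial since it just returns $OPT''$). Hence $|OPT|\leq|OPT''|+2$, i.e.\ $|OPT'|\leq|OPT''|$, so $OPT'$ is optimal for $G\setminus\{w\}$, completing the decomposition claimed by the lemma.
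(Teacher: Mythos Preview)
Your proof is correct and follows essentially the same strategy as the paper: force all four cycle edges into any feasible solution using the unsafety of $u$ and $v$, then verify that $OPT\setminus\{uw,wv\}$ is feasible on $G\setminus\{w\}$ by case-analysis on which vertex removal could disconnect it. Two minor differences are worth noting: your handling of the $x=z$ case builds explicit $w,z$-avoiding paths through a fifth vertex, whereas the paper argues by contradiction that $z$ being a cut-vertex would force $u$ to be one; and you additionally prove the reverse direction (any feasible solution on $G\setminus\{w\}$ extends back to $G$), which is needed to conclude \emph{optimality} of $OPT'$ but which the paper's proof leaves implicit.
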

\begin{figure}[t]
    \begin{center}
        \begin{tabular}{c c}
            (a) \begin{tikzpicture}
    
    \tikzstyle{black dot}=[fill=black, draw=black, shape=circle]
    \tikzset{terminal/.style={draw=black,  thick,minimum size=0pt, inner sep=2.5pt, outer sep=1pt}}
    \tikzset{P node/.style={fill={rgb,255: red,20; green,154; blue,0}, draw={rgb,255: red,20; green,154; blue,0}, circle, minimum size=0pt,inner sep=1pt, outer sep=1pt}}
    \tikzset{Writing/.style={shape=circle} }
    \tikzstyle{empty circle}=[fill=white, draw=black, shape=circle]

    \tikzstyle{witness edge}=[-, draw={rgb,255: red,195; green,0; blue,3}, very thick]
    \tikzstyle{T edges}=[-, thick]
    \tikzstyle{Fat edge}=[-, ultra thick]
    \tikzstyle{new witness}=[-, draw={rgb,255: red,195; green,0; blue,3}, dashed, ultra thick]
    \tikzstyle{connected terminals}=[-, draw=black, dashed, very thick]
    \tikzstyle{P}=[-, draw={rgb,255: red,20; green,154; blue,0}, very thick]
    
		\node [style=black dot] (1) at (-0.75, -0.5) {};
		\node [style=black dot] (2) at (-1.75, 0.5) {};
		\node [style=terminal] (3) at (-0.75, 1.5) {};
		\node [style=empty circle] (4) at (0.25, 0.5) {};
		\node [style=black dot] (5) at (0.25, 1.75) {};
		\node [style=Writing] (13) at (0.5, 0.5) {$w$};
		\node [style=Writing] (16) at (0.15, 2) {$x$};
		\node [style=Writing] (16) at (-2.25, 0.5) {$z$};
		\node [style=Writing] (16) at (-1.25, 1.45) {$u$};
		\node [style=Writing] (16) at (-1.25, -0.35) {$v$};
		\node [style=Writing] (17) at (-1.25, -0.75) {};
		\node [style=Writing] (18) at (-0.75, -0.75) {};
		\node [style=Writing] (19) at (-0.25, -0.75) {};
		\draw [style=connected terminals] (4) to (1);
		\draw [style=T edges] (1) to (2);
		\draw [style=connected terminals] (3) to (4);
		\draw [style=T edges] (2) to (3);
		\draw [style=T edges] (3) to (5);
		\draw [style=connected terminals] (17) to (1);
		\draw [style=connected terminals] (1) to (18);
		\draw [style=connected terminals] (1) to (19);
\end{tikzpicture}
            &
            (b)
\begin{tikzpicture}
    
    \tikzstyle{black dot}=[fill=black, draw=black, shape=circle]
    \tikzset{terminal/.style={draw=black,  thick,minimum size=0pt, inner sep=2.5pt, outer sep=1pt}}
    \tikzset{P node/.style={fill={rgb,255: red,20; green,154; blue,0}, draw={rgb,255: red,20; green,154; blue,0}, circle, minimum size=0pt,inner sep=1pt, outer sep=1pt}}
    \tikzset{Writing/.style={shape=circle} }
    \tikzstyle{empty circle}=[fill=white, draw=black, shape=circle]

    \tikzstyle{witness edge}=[-, draw={rgb,255: red,195; green,0; blue,3}, very thick]
    \tikzstyle{T edges}=[-, thick]
    \tikzstyle{Fat edge}=[-, ultra thick]
    \tikzstyle{new witness}=[-, draw={rgb,255: red,195; green,0; blue,3}, dashed, ultra thick]
    \tikzstyle{connected terminals}=[-, draw=black, dashed, very thick]
    \tikzstyle{P}=[-, draw={rgb,255: red,20; green,154; blue,0}, very thick]
    
		\node [style=black dot] (1) at (-0.75, -0.5) {};
		\node [style=terminal] (2) at (-1.75, 0.5) {};
		\node [style=black dot] (3) at (-0.75, 1.5) {};
		\node [style=empty circle] (4) at (0.25, 0.5) {};
		\node [style=black dot] (5) at (0.25, 1.75) {};
		\node [style=Writing] (13) at (0.5, 0.5) {$w$};
		\node [style=Writing] (16) at (0.15, 2) {$x$};
		\node [style=Writing] (16) at (-2.25, 0.5) {$z$};
		\node [style=Writing] (16) at (-1.25, 1.45) {$u$};
		\node [style=Writing] (16) at (-1.25, -0.35) {$v$};
		\node [style=Writing] (17) at (-1.25, -0.75) {};
		\node [style=Writing] (18) at (-0.75, -0.75) {};
		\node [style=Writing] (19) at (-0.25, -0.75) {};
		\draw [style=connected terminals] (4) to (1);
		\draw [style=T edges] (1) to (2);
		\draw [style=connected terminals] (3) to (4);
		\draw [style=T edges] (2) to (3);
		\draw [style=T edges] (3) to (5);
		\draw [style=connected terminals] (17) to (1);
		\draw [style=connected terminals] (1) to (18);
		\draw [style=connected terminals] (1) to (19);
\end{tikzpicture}
        \end{tabular}
    \end{center}
    \caption{As in the statement and proof of Lemma~\ref{lem:forbiddencyclesUnsafe}, we are given a forbidden cycle $C$, with unsafe vertices $u$ and $v$, and degree 2 vertices $w$ and $z$. The edges of $F'$  are shown with solid edges, and $F$ is both the solid lines and the dashed edges incident to $w$. Since $G\geq 5$, there is an additional vertex $x\notin V(C)$. 
    We wish to show that $F'$ neither $u$ nor $z$ is a cut vertex of $(V\setminus\{w\}, F')$ showing that if so, then that vertex will be cut vertex separating $v$ from $x$ in .
    We consider cases if $u$ or $z$ are cut vertices of $(V\setminus\{w\}, F')$:
    \\
    (a) $u$ is a cut vertex (shown as a square), separating $x$ from $z$ and $v$, and clearly even with $vw$, $x$ is still separated from these vertices.\\
    (b) $z$ is a cut vertex(shown as a square), separating $v$ from $u$, and in particular, separating $v$ from $x$. It is clear that in this case $u$ is again a cut vertex.}
    \label{fig:forbiddencycleunsafe}
\end{figure}
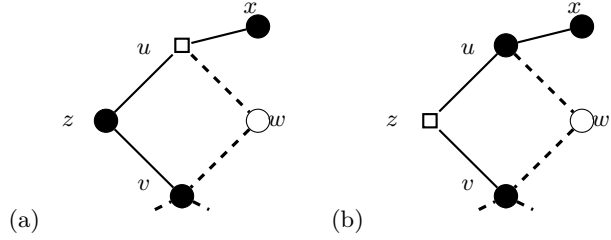
\begin{proof}
    First, we observe that since $G$ has at least $5$ vertices, and is $2$VC, there must be a path from $u$ to $v$ that does not contain any edge of $C$. Furthermore, since $u$ and $v$ are unsafe, it is clear that any feasible solution must contain every edge of $C$, otherwise, one of $u$ or $v$ would be an unsafe cut vertex. We wish to show that for every minimal feasible solution $F$,
    $F' \coloneqq F \backslash\{uw,vw\}$ is a feasible solution for the instance $G\backslash\{w\}$.

    Since $w$ is adjacent only to $u$ and $v$ in $G$,  and $F$ already contains edges $uz$ and $vz$, the only vertices that can become cut vertices when we remove $w$ from $F$ are vertices in $C$, namely, $u,v,$ and $z$. We demonstrate by cases that none of these can become cut vertices of $G':=(V\setminus\{w\},F')$. See Figure~\ref{fig:forbiddencycleunsafe}.

    \textbf{Case: }Assume for contradiction that at least one of $u$ or $v$ are cut vertices of $(V\setminus\{w\},F')$ (without loss of generality, say $u$ is the cut-vertex). That is, $G'\backslash\{u\}$ has at least two connected components, where one component contains $v$ and another contains some vertex $x\in V\backslash V(C)$. Clearly, this means that in $(V,F)$, every path from $v$ to $x$ contains $u$. Hence $F$ is not feasible, a contradiction.

    \textbf{Case: }So assume for contradiction that $z$ is a cut vertex in $G'$. Denote the components of $G'\backslash\{z\}$ as $F_u'$ and $F_v'$, which are the components containing $u$ and $v$, respectively. Since $G$ has at least five vertices, without loss of generality we can assume that there is vertex $x\in F_u'$ where $x\notin V(C)$. As $z$ has degree two, we can see that any path in $F'$ from $x$ to $v$ contains $u$. Therefore, $u$ is a cut vertex in $F'$, which leads us to the previous case to derive a contradiction.
\end{proof}

\begin{lemma}\label{lem:forbiddencycles}
    Assume $G$ is $2$VC and has at least five vertices. Let $C:=uwvz$ be a forbidden cycle of $G$ as described above such that $\deg_G(w)=\deg_G(z)=2$ and $v$ is safe. W.l.o.g assume that if $w$ is a safe vertex then $z$ is also a safe vertex. Then  there exists an optimal solution that does not contain the edge $uw$.
\end{lemma}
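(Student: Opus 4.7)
The plan is to take any optimal solution $F$ containing $uw$ and exhibit a same-size feasible solution $F^{*}$ with $uw \notin F^{*}$ via a local exchange. The argument splits on whether $vw \in F$.

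In the first case, $vw \notin F$, I would set $F^{*} := (F \setminus \{uw\}) \cup \{vw\}$. First I argue that $u$ must be safe: if $u$ were unsafe, feasibility of $F$ would require $F - u$ to be connected, but $w$'s only $G$-neighbors are $u$ and $v$, and $vw \notin F$, so $w$ would be isolated in $F - u$---a contradiction. Connectivity of $F^{*}$ follows because $|V| \geq 5$ together with connectivity of $F$ forces $u$ to have at least one edge in $F$ besides $uw$ (otherwise $\{u,w\}$ would form an isolated $F$-component). For the unsafe cut-vertex conditions, the key observation is that since $vw \notin F$, no $u$-$v$ path in $F$ passes through $w$ (a path through $w$ would require both $uw$ and $vw$); hence for every unsafe $x \notin \{u, v, w\}$, removing $uw$ from $F - x$ preserves $u$-$v$ connectivity, and adding $vw$ reconnects $w$. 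The boundary cases $x \in \{u, v\}$ are vacuous by safety, and $x = w$ (if unsafe) reduces to $F^{*} - w = F - w$, which is connected by feasibility of $F$.

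In the second case, $vw \in F$, optimality of $F$ forces $F \setminus \{uw\}$ to be infeasible (otherwise it would be a strictly smaller feasible solution). The failure is either (i) disconnection of $F \setminus \{uw\}$---so $uw$ is a bridge in $F$ with components $A \ni u$ and $B \supseteq \{v, w\}$---or (ii) the appearance of an unsafe cut-vertex $x$ in $F \setminus \{uw\}$. For (i), I would use 2-edge-connectivity of $G$ (implied by 2VC with $n \geq 3$) to pick an edge $e \in E \setminus \{uw\}$ crossing the cut and set $F^{*} := (F \setminus \{uw\}) \cup \{e\}$. For (ii), one first verifies $x \notin \{u, v, w\}$ directly from feasibility of $F$; then, if $x \neq z$, the cycle $C$ supplies an alternative $u$-$w$ path $u - z - v - w$ inside $G - x - uw$ (the edges $uz, zv, vw$ are not incident to $x$), so $uw$ is not a bridge in $G - x$ and a swap edge $e \in E \setminus F$ exists.

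The delicate remaining sub-case is $x = z$ unsafe, which by the WLOG assumption forces $w$ to be unsafe as well. Here the alternative $C$-path collapses once $z$ is removed, but 2VC of $G$ still implies $G - z$ is connected, so $u$ has some neighbor in $G - z$ other than $w$---except in the boundary situation where $u$ has $G$-degree $2$ with neighbors exactly $\{w, z\}$, which can be handled by a direct structural argument on the 4-cycle $C$ using $n \geq 5$ and the symmetry between the twin vertices $w$ and $z$. I would swap $uw$ for such an edge incident to $u$ to restore connectivity of $F^{*} - z$. The main obstacle is ensuring that each proposed swap in Case B preserves the no-unsafe-cut-vertex condition \emph{globally}---not merely for the problematic vertex $x$, but for every unsafe vertex. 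This global check is the most technical part and requires careful bookkeeping of how each $F^{*} - y$ compares to $F - y$, leveraging both 2VC of $G$ and the twin structure of $w$ and $z$ with respect to $\{u, v\}$.
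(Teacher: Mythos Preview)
Your Case~1 ($vw \notin F$) is correct and essentially the paper's argument.

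Your Case~2 ($vw \in F$) has a genuine gap, which you yourself flag at the end. The single-edge swap $F^{*} := (F \setminus \{uw\}) \cup \{e\}$ can easily create new unsafe cut-vertices at the endpoints of $e$. For a concrete failure in your sub-case~(i): take $V = \{u,v,w,z,a,b\}$ with the forbidden cycle $C$, additional $G$-edges $ua, vb, ab$, safe vertices $\{u,v,w,z\}$, unsafe vertices $\{a,b\}$, and $F = \{ua, uw, vw, vz, vb\}$ (a feasible optimal tree). Here $uw$ is a bridge, $A = \{u,a\}$, $B = \{v,w,z,b\}$, and the crossing edges in $G$ are $uw$, $uz$, $ab$. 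If you pick $e = ab$, both $a$ and $b$ become unsafe cut-vertices of $F^{*}$. You give no rule for choosing $e$, and the promised ``careful bookkeeping'' is never carried out; the same issue recurs in sub-case~(ii), where ``a swap edge $e \in E\setminus F$ exists'' only addresses the single offending vertex $x$, not global feasibility.

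The idea you are missing is the paper's: rather than swapping $uw$ for an arbitrary edge, set $F' := (F \cup C) \setminus \{uw\}$, i.e.\ complete the whole $4$-cycle. Since $F$ already contains $uw$, $vw$, and at least one of $\{uz, zv\}$, this adds at most one edge, so $|F'| \le |F|$. The path $u\text{--}z\text{--}v\text{--}w$ in $F'$ replaces $uw$, which immediately shows that no vertex outside $\{u,z\}$ can be a new cut-vertex of $F'$ (for $x \notin V(C)$ the path survives in $F'-x$; for $x=w$ one has $F'-w \supseteq F-w$; and $v$ is safe). This localizes the possible failure to $u$ or $z$ being an unsafe cut-vertex of $F'$. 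In that sub-case the paper shows, using that $z$ unsafe forces $w$ unsafe and feasibility of $F$, that $F$ must have contained all four edges of $C$; hence $|F'| = |F|-1$, and this extra unit of budget lets one add a single repair edge found via $2$VC of $G$. Your per-failure swaps never generate this slack, which is why the global check does not close.
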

\begin{proof}
    Again, we observe that since $G$ has at least $5$ vertices, and is $2$VC, there must be a path from $u$ to $v$ that does not contain any edge of $C$. We wish to show that for every feasible FVC solution $F$ such that $uw\in F$, we can find a feasible FVC solution $F'$ not containing $uw$ with $|F| \ge |F'|$. In particular, given an optimal solution, we will be able to find an optimal solution that does not contain $uw$.
    
    Assume $vw\notin F$. Since $F$ is a feasible FVC solution, $uw\in F$. In this case  $F':=F\cup\{vw\}\setminus\{uw\}$ is a feasible FVC solution as $v$ is a safe vertex. Thus we can assume that $vw\in F$. 
    
    Also $F$ must have an edge incident on $z$. So far, we realized that $F$ has $uw,vw$ and at least one of the edges in $\{zu,zv\}$.
     See Figure~\ref{fig:forbiddencyclessafe}.(a)
    
    Now, let $F':=F\cup C\setminus\{uw\}$. Since $F$ contains at least three edges of $C$, we have $|F|-1\leq |F'|\le |F|$. Furthermore for any vertex $x\in V\setminus\{u,z\}$, $x$ is a cut-vertex in $F'$ if $x$ is a cut-vertex in $F$.
    Therefore $F'$ is a feasible FVC solution satisfying our requirements, unless one of $z$ or $u$ is an unsafe cut-vertex of $(V,F')$. 
    
    Assume that $z$ is an unsafe cut-vertex of $(V,F')$. 
    Then there must be exactly one path in $(V,F')$ from $u$ to $v$, namely, the path $u,z,$ and $v$. Furthermore, as $z$ is unsafe then $w$ is also unsafe by the assumption of this Lemma on $z$ and $w$.

    As any path from $u$ to $v$ in $(V,F')$ (and hence in $F$) contains at least one edge of $C$, $F$ must contain all edges of $C$; otherwise, since $z$ and $w$ are unsafe then $F$ is not a feasible FVC solution, a contradiction. This implies that $|F|=|F'|+1$. Let $F_u'$ and $F_v'$ denote the two connected components of $(V\backslash\{z\},F'\{uz,vz\})$.

    Since $G$ has at least $5$ vertices, there is a vertex $x\notin V(C)$ in a component of $(V\backslash\{z\},F'\{uz,vz\})$, without loss of generality say $x\in F_u'$. Since $G$ is $2$VC, there is a path $P$ from $x$ to $v$ that does not contain $u$, as otherwise $u$ would be a cut vertex of $G$. Since there are only two components of $(V\backslash\{z\},F'\{uz,vz\})$, there is an edge $e\in E(P)$ with one endpoint in $F_u'$ and the other in $F_v'$. See Figure~\ref{fig:forbiddencyclessafe}.(b). Therefore, $F'\cup\{e\}$ is a feasible FVC solution that contain $uw$.

    Lastly, suppose $u$ is an unsafe cut vertex of $(V,F')$, we only need to observe that since $u$ is a cut vertex, there is some vertex $x$ that is in a separate component from $v$. And the argument is similar to the case where $z$ is an unsafe cut vertex.
\end{proof}
\begin{figure}[t]
    \begin{center}
        \begin{tabular}{c c}
            (a) \begin{tikzpicture}
    
    \tikzstyle{black dot}=[fill=black, draw=black, shape=circle]
    \tikzset{terminal/.style={draw=black,  thick,minimum size=0pt, inner sep=2.5pt, outer sep=1pt}}
    \tikzset{P node/.style={fill={rgb,255: red,20; green,154; blue,0}, draw={rgb,255: red,20; green,154; blue,0}, circle, minimum size=0pt,inner sep=1pt, outer sep=1pt}}
    \tikzset{Writing/.style={shape=circle} }
    \tikzstyle{empty circle}=[fill=white, draw=black, shape=circle]

    \tikzstyle{witness edge}=[-, draw={rgb,255: red,195; green,0; blue,3}, very thick]
    \tikzstyle{T edges}=[-, thick]
    \tikzstyle{Fat edge}=[-, ultra thick]
    \tikzstyle{new witness}=[-, draw={rgb,255: red,195; green,0; blue,3}, dashed, ultra thick]
    \tikzstyle{connected terminals}=[-, draw=black, dashed, very thick]
    \tikzstyle{P}=[-, draw={rgb,255: red,20; green,154; blue,0}, very thick]
    \tikzstyle{arrow}=[<-, very thick]
    
		\node [style=black dot] (1) at (-0.75, -0.5) {};
		\node [style=black dot] (2) at (-1.75, 0.5) {};
		\node [style=black dot] (3) at (-0.75, 1.5) {};
		\node [style=empty circle] (4) at (0.25, 0.5) {};
		\node [style=Writing] (13) at (0.5, 0.75) {$w$};
		\node [style=Writing] (16) at (-2, 0.75) {$z$};
		\node [style=Writing] (16) at (-1.25, 1.45) {$u$};
		\node [style=Writing] (16) at (-1.25, -0.35) {$v$};
		\node [style=Writing] (17) at (-1.25, -0.75) {};
		\node [style=Writing] (18) at (-0.75, -0.75) {};
		\node [style=Writing] (19) at (-0.25, -0.75) {};
		\node [style=Writing] (20) at (-0.75, 1.75) {};
		\node [style=Writing] (21) at (-1.25, 1.75) {};
		\node [style=Writing] (22) at (-0.25, 1.75) {};
		\node [style=black dot] (23) at (3.25, -0.5) {};
		\node [style=black dot] (24) at (2.25, 0.5) {};
		\node [style=black dot] (25) at (3.25, 1.5) {};
		\node [style=empty circle] (26) at (4.25, 0.5) {};
		\node [style=Writing] (27) at (4.5, 0.75) {$w$};
		\node [style=Writing] (28) at (2, 0.75) {$z$};
		\node [style=Writing] (29) at (2.75, 1.45) {$u$};
		\node [style=Writing] (30) at (2.75, -0.35) {$v$};
		\node [style=Writing] (31) at (2.75, -0.75) {};
		\node [style=Writing] (32) at (3.25, -0.75) {};
		\node [style=Writing] (33) at (3.75, -0.75) {};
		\node [style=Writing] (34) at (3.25, 1.75) {};
		\node [style=Writing] (35) at (2.75, 1.75) {};
		\node [style=Writing] (36) at (3.75, 1.75) {};
		\node [style=Writing] (37) at (0.75, 0.5) {};
		\node [style=Writing] (38) at (1.5, 0.5) {};
		\draw [style=connected terminals] (4) to (1);
		\draw [style=T edges] (1) to (2);
		\draw [style=T edges] (3) to (4);
		\draw [style=T edges] (2) to (3);
		\draw [style=connected terminals] (17) to (1);
		\draw [style=connected terminals] (1) to (18);
		\draw [style=connected terminals] (1) to (19);
		\draw [style=connected terminals] (3) to (20);
		\draw [style=connected terminals] (3) to (22);
		\draw [style=connected terminals] (3) to (21);
		\draw [style=connected terminals] (1) to (18);
		\draw [style=T edges] (26) to (23);
		\draw [style=T edges] (23) to (24);
		\draw [style=connected terminals] (25) to (26);
		\draw [style=T edges] (24) to (25);
		\draw [style=connected terminals] (31) to (23);
		\draw [style=connected terminals] (23) to (32);
		\draw [style=connected terminals] (23) to (33);
		\draw [style=connected terminals] (25) to (34);
		\draw [style=connected terminals] (25) to (36);
		\draw [style=connected terminals] (25) to (35);
		\draw [style=connected terminals] (23) to (32);
		\draw [style=arrow] (38) to (37);
\end{tikzpicture}
            &

            (b)
\begin{tikzpicture}
    
    \tikzstyle{black dot}=[fill=black, draw=black, shape=circle]
    \tikzset{terminal/.style={draw=black,  thick,minimum size=0pt, inner sep=2.5pt, outer sep=1pt}}
    \tikzset{P node/.style={fill={rgb,255: red,20; green,154; blue,0}, draw={rgb,255: red,20; green,154; blue,0}, circle, minimum size=0pt,inner sep=1pt, outer sep=1pt}}
    \tikzset{Writing/.style={shape=circle} }
    \tikzstyle{empty circle}=[fill=white, draw=black, shape=circle]

    \tikzstyle{witness edge}=[-, draw={rgb,255: red,195; green,0; blue,3}, very thick]
    \tikzstyle{T edges}=[-, thick]
    \tikzstyle{Fat edge}=[-, ultra thick]
    \tikzstyle{new witness}=[-, draw={rgb,255: red,195; green,0; blue,3}, dashed, ultra thick]
    \tikzstyle{connected terminals}=[-, draw=black, dashed, very thick]
    \tikzstyle{P}=[-, draw={rgb,255: red,20; green,154; blue,0}, very thick]
    
		\draw [style=P] (-0.75, -0.5) to (1.25, -0.5);
		\draw [style=P] (-0.75, 1.5) to (1, 1.75);
		\draw [style=P] (1.25, -0.5) to (1, 1.75);
		\node [style=black dot] (1) at (-0.75, -0.5) {};
		\node [style=black dot] (2) at (-1.75, 0.5) {};
		\node [style=black dot] (3) at (-0.75, 1.5) {};
		\node [style=empty circle] (4) at (0.25, 0.5) {};
		\node [style=Writing] (13) at (0.5, 0.5) {$w$};
		\node [style=Writing] (16) at (-2.25, 0.5) {$z$};
		\node [style=Writing] (16) at (-1.25, 1.45) {$u$};
		\node [style=Writing] (16) at (-1.25, -0.35) {$v$};
		\node [style=Writing] (17) at (-1.25, -0.75) {};
		\node [style=Writing] (18) at (-0.75, -0.75) {};
		\node [style=Writing] (19) at (1.25, -0.5) {};
		\node [style=Writing] (20) at (-0.75, 1.75) {};
		\node [style=Writing] (21) at (-1.25, 1.75) {};
		\node [style=Writing] (22) at (1, 1.75) {};
		\node [style=Writing] (23) at (-2, 1.75) {};
		\node [style=Writing] (24) at (1.75, 1.75) {};
		\node [style=Writing] (25) at (-1.75, -0.75) {};
		\node [style=Writing] (26) at (1.75, -0.5) {};
		\node [style=Writing] (27) at (1.25, 0.75) {$e$};
		\node [style=Writing] (28) at (0.5, 2) {$F_u'$};
		\node [style=Writing] (29) at (0.5, -0.75) {$F_v'$};
		\draw [style=T edges] (4) to (1);
		\draw [style=T edges] (1) to (2);
		\draw [style=connected terminals] (3) to (4);
		\draw [style=T edges] (2) to (3);
		\draw [style=connected terminals] (17) to (1);
		\draw [style=connected terminals] (1) to (18);
		\draw [style=connected terminals] (3) to (20);
		\draw [style=connected terminals] (3) to (21);
		\draw [style=connected terminals] (1) to (18);
		\draw [style=T edges, bend left=60, looseness=0.75] (-2, 1.75) to (1.75, 1.75);
		\draw [style=T edges, bend right=90, looseness=0.50] (-2, 1.75) to (1.75, 1.75);
		\draw [style=T edges, bend left=90, looseness=0.75] (-1.75, -0.75) to (1.75, -0.5);
		\draw [style=T edges, bend right=90, looseness=0.50] (-1.75, -0.75) to (1.75, -0.5);
\end{tikzpicture}
        \end{tabular}
    \end{center}
    \caption{As in the statement and proof of Lemma~\ref{lem:forbiddencycles}, we are given a forbidden cycle $C$, with unsafe vertices $u$ and $v$, and degree 2 vertices $w$ and $z$. \\
    (a) in the first figure, the solid edges represent edges in $F$, and the dashed represent the edges not in $F$ but in $C$. In the second figure we have the solid edges representing $F' \coloneqq F\backslash\{uw\} \cup \{vw\}$. 
    Since $v$ is safe, we can replace $uw$ with $vw$, and not create an unsafe cut vertex. So $F'$ is a feasible FVC solution.\\
    (b) Here the solid edges again represent edges of $F'$. The green edges show a path from $u$ to $v$ that does not contain an edge of $C$. Here we depict the case that $z$ is an unsafe cut vertex of $F'$. Since $G$ is $2$VC, we can pick edge $e$ of this path that connects the two components of $F'\backslash\{z\}$.}
    \label{fig:forbiddencyclessafe}
\end{figure}
    
It remains to show that the reduction described in Lemmas~\ref{lem:forbiddencyclesUnsafe}~and~\ref{lem:forbiddencycles} is approximation preserving and can be performed in polynomial time.

\begin{proof}[Proof of Lemma~\ref{lem:removeforiddencycles}]
    We solve the problem recursively. First, observe that we can check the feasibility of the instance in polynomial time. Moreover, if our instance has less than five vertices we can solve it in polynomial time by enumeration.
    If there exists a cut-vertex $x$ in $G$, then in polynomial time we can find sets $V_1,V_2$ such that $V_1\cup V_2\cup \{x\}$ is a partition of $V$ described by Lemma~\ref{lem:2-vertex-connected-node}. Thus computing a $\beta$-approximation on $G_1:=G[V_1\cup \{x\}]$ and $G_2:=G[V_2\cup \{x\}]$ and taking the union leads to a $\beta$-approximation for $G$. 
    
    Furthermore, if $G$ has a forbidden cycle that satisfies conditions of Lemma~\ref{lem:forbiddencyclesUnsafe} if we find a $\beta$-approximation for $G\setminus\{w\}$ and then add $\{uw,wv\}$ to it we obtain a solution of cost $\beta (opt(G)-2)+2<\beta opt(G)$.
    If $G$ has a forbidden cycle that satisfies conditions of Lemma~\ref{lem:forbiddencycles} then we can find an edge ($uw$) and remove it from the instance.
    
    Let $T(m)$ be the running time for an instance on $m$ edges.  By the above argument we have form $m\ge 3$:
    $$T
        (m)\le F(m) + \max\{T(m_1)+T(m-m_1),T(m-2),T(m-1)\},
    $$ 
    where $1\le m_1 \le m-1$ and $F(m)$ is a polynomial function of $m$, that upper bounds the running time of any of the operations listed above. Note that $T(1)=T(2)=O(1)$ as this small instances can be solved by constant number of operation. Therefore our algorithm has polynomial running time.
\end{proof}

\subsection{Proof of Lemma~\ref{lem:optnottree}}
\label{sec:optnottree}
\begin{proof}
    As $(V,OPT)$ is connected, $|OPT|\ge n-1$.
    If $|OPT|=n-1$, then by the feasibility of $OPT$, $(V, OPT)$ is a tree in which the unsafe vertices are leaves. Therefore, the only vertices of $OPT$ with degree at least 2 are safe, and thus form a connected subtree of $G$.
    
    To find a solution $OPT'$ with $|OPT'|=|OPT|$, we compute a spanning tree on the subgraph induced by the safe vertices. For every unsafe vertex $u\in V$, we buy edge $uv\in E$, where $v$ is a safe vertex $v$ that is adjacent to $u$. $OPT'$ is a tree, and thus $|OPT'|=n-1=|OPT|$.
\end{proof}

\subsection{Proof of Lemma~\ref{lem:decompositioninvariant}}
\label{sec:decompositioninvariant}
\begin{proof}
    We prove that the claim is true by showing that the inequality is invariant at every step of the construction of $D$ by using induction on the number of steps.

    In the first iteration, $D$ is a cycle of length at least 4. Thus $4\leq |E(D)|=|V(D)|\le \frac{4}{3}(|V(D)|-1)$.
    
    Now assume the statement holds in the $i^{th}$ iteration. In the $i+1^{st}$ iteration, the number of edges that we add to $D$ is at most $\frac{4}{3}$ the number of vertices added at this step since we add potential open ears of length at least $4$. Therefore the above inequality holds also at step $i+1$.
\end{proof}

\subsection{Proof of Lemma~\ref{lem:apx1feasible}}
\label{sec:apx1feasible}
\begin{proof}
    Recall, $D$ can be computed in polynomial time by Claim~\ref{lem:eardecompinpolytime}. 
    
    Computing and identifying $K_{1,1},K_{1,2},K_{2,1},K_{2,2}$ can also be done in polynomial time. Therefore, the edges we add to $APX_1$ that share an endpoint with $K_{1,1}\cup K_{1,2}\cup K_{2,2}, \cup K_{2,3}$ can be computed in polynomial time.
    
    We now show that $APX_1$ is feasible by showing that every time we add a set of edges to $APX_1$, the subgraph $(V(APX_1),APX_1)$ is a feasible FVC solution for the instance defined by $G[V(APX_1)]$, namely $(V(APX_1),APX_1)$ is connected and has no unsafe cut-vertices ($(V(APX_1),APX_1)$ is not a feasible solution for the original instance).
    Let us start with $APX_1\leftarrow D$. The subgraph $D$ is 2VC, and hence $E(D)$ is a feasible FVC solution on $V(D)$. Observe that the edges selected for components of $K_{1,2}$ and $K_{2,3}$ are potential open ears for $D$ of length two and three, respectively. Therefore, as $D$ is $2$VC, when we add these edges to $APX_1$, we have $(V(D)\cup K_{1,2}\cup K_{2,3},APX_1)$ is $2$VC.  
    
    For every $v\in K_{1,1}$, we buy edge $uv$, where $u\in V(D)$ is safe. As by adding these edges to $APX_1$,  $v$ is only adjacent to $u$ then $(V(D)\cup K_{1,2}\cup K_{2,3}\cup K_{1,1},APX_1)$ is connected and has no unsafe cut-vertex. A similar argument holds also for $K_{2,2}$ and hence this implies $APX_1$ is feasible.
\end{proof}

\subsection{Proof of Lemma~\ref{lem:UpperboundSimple}}
\label{sec:UpperboundSimple}
\begin{proof}
    We have $|OPT|\geq n$ by Lemma~\ref{lem:optnottree}. We now show $|OPT| \ge |K_{1,1}| + 2|K_{1,2}| + |K_{2,2}| + \frac{3}{2}|K_{2,3}|$. Recall by Lemma~\ref{lem:matching}, that $G[V\setminus V(D)]$ is a matching. 
    Consider a connected component $F$ of $G[V\setminus V(D)]$. We consider cases of $F$.
    If $V(F) = \{v\} \subseteq K_{1,1}$, $OPT$ has at least one edge incident on $v$.
    If $V(F) = \{v\} \subseteq K_{1,2}$, then $v$ is not adjacent to a safe vertex, hence $OPT$ has at least two edges incident on $v$.
    If $V(F) = \{u,v\} \subseteq K_{2,2}$, then as $(V,OPT)$ is connected, $OPT$ has at least two edges incident on at least one vertex in $\{u,v\}$.
    If $V(F) = \{u,v\} \subseteq K_{2,3}$, then as $(V,OPT)$ is connected, $OPT$ has at least two edges incident on at least one vertex in $\{u,v\}$. However as $(V,OPT)$ has no unsafe cut-vertex then by definition of $K_{2,3}$, $OPT$ must have at least three such edges.
\end{proof}

\subsection{Proof of Lemma~\ref{lem:5/3apx}}
\label{sec:5/3apx}
\begin{proof} 
    Our approximation factor is upper-bounded by
    \begin{align*}
        S&:=\frac{ \frac{4}{3}(|V(D)|-1) + |K_{1,1}| + 2|K_{1,2}| + |K_{2,2}| + \frac{3}{2}|K_{2,3}|}{\max\{ n, |K_{1,1}| + 2|K_{1,2}| + |K_{2,2}| + \frac{3}{2}|K_{2,3}|\}}\\ 
        &= \frac{ \frac{4}{3}(|V(D)|-1) + |K_{1,1}| + 2|K_{1,2}| + |K_{2,2}|+ \frac{3}{2}|K_{2,3}| }{\max\{ |V(D)|+|K_{1,1}| + |K_{1,2}| + |K_{2,2}| + |K_{2,3}|, |K_{1,1}| + 2|K_{1,2}| + |K_{2,2}| + \frac{3}{2}|K_{2,3}|\}}\\
        &\le \frac{\frac{4}{3}|V(D)|-\frac{4}{3} + 2|K_{1,2}| + \frac{3}{2}|K_{2,3}|}{\max\{|V(D)|+  |K_{1,2}|  + |K_{2,3}|, 2|K_{1,2}| + \frac{3}{2}|K_{2,3}|\}}.
    \end{align*}
    Now we consider two case:
    \begin{itemize}
        \item If $|K_{1,2}|+\frac{1}{2}|K_{2,3}|\ge |V(D)|$. In this case we have:
        $$S=\frac{\frac{4}{3}|V(D)|-\frac{4}{3} + 2|K_{1,2}| + \frac{3}{2}|K_{2,3}|}{ 2|K_{1,2}| + \frac{3}{2}|K_{2,3}|}.$$
        Therefore by our condition, we have: $$S\le \frac{\frac{4}{3}(|K_{1,2}|+\frac{1}{2}|K_{2,3}|)-\frac{4}{3} + 2|K_{1,2}| + \frac{3}{2}|K_{2,3}|}{ 2|K_{1,2}| + \frac{3}{2}|K_{2,3}|}<\frac{5}{3}.$$
         \item Otherwise we have:
        $$S=\frac{\frac{4}{3}|V(D)|-\frac{4}{3} + 2|K_{1,2}| + \frac{3}{2}|K_{2,3}|}{ |V(D)|+|K_{1,2}| + |K_{2,3}|}=\frac{4}{3}+\frac{-\frac{4}{3}+ 2|K_{1,2}| + \frac{3}{2}|K_{2,3}|-\frac{4}{3}(|K_{1,2}| + |K_{2,3}|)}{|V(D)|+|K_{1,2}| + |K_{2,3}|}$$\\
        $$\le \frac{4}{3}+\frac{-\frac{4}{3}+\frac{2}{3}|K_{1,2}| + \frac{1}{6}|K_{2,3}| }{|V(D)|+|K_{1,2}| + |K_{2,3}|}<\frac{4}{3}+\frac{\frac{2}{3}|K_{1,2}| + \frac{1}{6}|K_{2,3}| }{2|K_{1,2}| + \frac{2}{3}|K_{2,3}|}<\frac{5}{3}.$$
    \end{itemize}
    So the first claim is proven. So now we consider the case when $|K_{2,3}|+|K_{1,2}|\le 2$.
    \begin{align*}
        S&=\frac{ \frac{4}{3}(|V(D)|-1) + |K_{1,1}| + 2|K_{1,2}| + 2|K_{2,2}| + \frac{3}{2}|K_{2,3}|}{\max\{ n, |K_{1,1}| + 2|K_{1,2}| + |K_{2,2}| + \frac{3}{2}|K_{2,3}|\}}\\ 
        &\le \frac{ \frac{4}{3}(|V(D)|-1) + |K_{1,1}| + 2|K_{1,2}| + 2|K_{2,2}|+ \frac{3}{2}|K_{2,3}| }{ |V(D)|+ |K_{1,1}| + |K_{1,2}| + |K_{2,2}| + |K_{2,3}|}
        \le \frac{\frac{4}{3}|V(D)|-\frac{4}{3} + 2|K_{1,2}| + \frac{3}{2}|K_{2,3}|}{|V(D)|+  |K_{1,2}|  + |K_{2,3}|}\\
        &\leq\frac{\frac{4}{3}|V(D)|-\frac{4}{3} + 4}{|V(D)| + 2}=\frac{4}{3}.
    \end{align*}
        
\end{proof}

\subsection{Proof of Claim~\ref{lem:eardecompinpolytime}}
\label{sec:eardecompinpolytime}

\begin{proof}
    We can check in polynomial time for $x\in V(D)$ and $y_1,y_2,y_3\notin V(D)$ if there exists such a path that starts with $x, y_1, y_2$ and $y_3$ (in this order).
    
    To do this we first check if edges $xy_1, y_1y_2, y_2y_3 \in E$ (if not pick different vertices until all have been scanned). Then we remove $y_1$, $y_2$ and $x$ from $G$ and see if there is a path from $y_3$ to a vertex in $V(D)\setminus x$. As this can be done in polynomial time and there are at most $n^4$ such tuples, then the claim follows.
\end{proof}

\subsection{Proof of Lemma~\ref{lem:RainbowForest}}
\label{sec:RainbowForest}
\begin{proof}
    For graph $G=(V,E)$, we  first define two matroids with ground set $E$:
    \begin{itemize}
        \item Graph matroid: $M_{graphic}$ over the edges $E$ with independent sets $\mathcal{I}_{graphic} = \{F\subseteq E | F \text{ is acyclic}\}$.
        \item Partition matroid: partition edges $E = E_1 \cup \dots \cup E_k$ into $k$ colour classes, $M_{partition}$ over the edges $E$ with independent sets $\mathcal{I}_{partition} = \{F\subseteq E | |F\cap E_i|\leq 1, i=1,\dots k\}$.
    \end{itemize}
    The intersection matroid of $M_{graphic}$ and $M_{partition}$ has independent sets that are acyclic and contain each edge colour at most once. 
    Thus, the intersection matroid $M_{graphic}\cap M_{partition}$  can be solved in polynomial time to find a solution $P'$. 
    For each colour class that is not in $P'$, we select an arbitrary edge and add it to $P'$ to find $P$. 
    Note that $P'$ is a maximum forest that uses each colour class at most once, that is, $P'$ has a minimum number of components. Furthermore, by construction $P$ has the same set of components as $P'$ since if it decreased the number of components, $P'$ would not be optimal. 
    
    To find a minimal number of singletons with respect to replacing edges of a colour class, we consider the edges of $P$ and if there is an edge that can be swapped to reduce the number of singletons, we make that swap until there are no more improving swaps. 
    Formally, for each edge $e\in P$,  we can find all edges in $\tilde E$ with colour $c$, and identify if there is an edge $e'$ with colour $c$ such that $(V, P \cup \{e'\} \setminus \{e\} )$ has one fewer singleton. If there is such an edge,  we replace $P$ with $P\cup \{e'\} \setminus \{e\} $.
    Note that by adding $e'$ we reduce the number of connected components in $(V,P)$ by exactly one, and by removing $e$ we increase the number of connected components by at most one. Thus the number of connected components in this process will never increase.

    This process terminates in polynomial time, as it reduces the number of singletons by one in each step, and each step takes polynomial time to compute.
\end{proof}

    


\subsection{Proof of Lemma~\ref{lem:findgoodcycles}}
\label{sec:findgoodcycles}
\begin{proof}
    To prove this lemma, we use a similar construction due to  \cite{garg2023improved}, which they show can be found in polynomial time. 
    \begin{definition}[\cite{garg2023improved}]
        (Nice Cycle and Path). Let \(\Pi = \{ V_1, \ldots, V_k \}\), \(k \geq 2\), be a partition of the vertex-set of a graph \(G\). A \textit{nice cycle} (resp. \textit{nice path}) \(N\) of \(\Pi\) is a subset of edges with endpoints in distinct subsets of \(\Pi\) such that: (1) \(N\) induces one cycle of length at least 2 (resp., one path of length at least 1) in the graph obtained from \(G\) by by contracting each $V_i$ into a single vertex one by one; (2) given any two edges of \(N\) incident to some \(V_i\), these edges are incident to distinct vertices of \(V_i\) unless \(|V_i| = 1\).
    \end{definition}
    
    \begin{lemma}[\cite{garg2023improved}]
    \label{lem:findnicecycles}
        Let \(\Pi = (V_1, \ldots, V_k)\), \(k \geq 2\), be a partition of the vertex-set of a \(2VC\) simple graph \(G\). In polynomial time one can compute a nice cycle \(N\) of \(\Pi\).
    \end{lemma}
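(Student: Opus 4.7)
The plan is to start with the nice cycle produced by Lemma~\ref{lem:findnicecycles} and massage it to meet the two extra requirements that distinguish a good cycle from a nice one, namely that at least one incident part is big (condition (3)) and that a length-$2$ cycle has both incident parts big (condition (4)). Conditions (1) and (2) come for free from Lemma~\ref{lem:findnicecycles}. So applying that lemma to $(G,\Pi)$ returns a nice cycle $N$ in polynomial time; if it already touches some $V_i$ with $|V_i|\ge 2$ and, in the case $|N|=2$, both of its incident parts are big, we return $N$ and are done.

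Otherwise I would proceed by cases. First, if $N$ uses only edges between singletons of $\Pi$ (so (3) fails), pick a big part $V^*$ (which exists by hypothesis) and an edge $u_1u_2$ of $G[V^*]$ (which exists because $G[V^*]$ is connected and $|V^*|\ge 2$). Consider the refined partition $\Pi'$ obtained from $\Pi$ by splitting $V^*$ into $\{u_1\}$, $\{u_2\}$, and (if non-empty) $V^*\setminus\{u_1,u_2\}$, and rerun Lemma~\ref{lem:findnicecycles} on $(G,\Pi')$. Any resulting nice cycle $N'$ either already uses an edge incident to some big part of $\Pi$ (giving a cycle that, unchanged, touches a big part of $\Pi$), or $N'$ contains the edge $u_1u_2$ or crosses between the two singletons $\{u_1\},\{u_2\}$, in which case merging the sub-parts of $V^*$ back into $V^*$ collapses that crossing into a cycle of $\Pi$ that does touch $V^*$. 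Second, if $|N|=2$ with one endpoint a singleton $\{v\}$ and the other a big part $V^*$ (so (4) fails), I distinguish two sub-cases. If there is another big part $W$ with $|W|\ge 2$, apply the refinement trick above to $W$, forcing a nice cycle either of length $\ge 3$ or of length $2$ between two big parts. If $V^*$ is the unique big part, the hypothesis supplies two adjacent singletons $\{a\},\{b\}$ in $G$; $2$VC of $G$ then forces $V^*$ to have at least two distinct neighbors in the contracted multigraph $G/\Pi$ (otherwise a single vertex would separate $V^*$ from the rest of $G$), so a cycle of length $\ge 3$ in $G/\Pi$ through $V^*$ exists, and choosing a minimal such cycle gives condition (2) by the same exchange argument that proves Lemma~\ref{lem:findnicecycles}.

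The main obstacle will be rigorously verifying that the splitting-then-merging trick is well-defined and actually terminates with a good cycle of $\Pi$, in particular that after merging the sub-parts of $V^*$ back together the resulting object still satisfies condition (2), i.e.\ that no two of its edges hit the same vertex of $V^*$. The key point is that if $N'$ visits $V^*$ (across its sub-parts) more than twice one can decompose the closed walk into shorter sub-cycles in $G/\Pi$ and pass to a minimal one; condition (2) then follows from the same minimality argument used to produce nice cycles in \cite{garg2023improved}. Polynomial running time is immediate because the whole procedure makes only polynomially many calls to Lemma~\ref{lem:findnicecycles} on inputs of size comparable to $(G,\Pi)$.
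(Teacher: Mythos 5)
The statement you were asked to prove is Lemma~\ref{lem:findnicecycles} itself --- the result of \cite{garg2023improved} that a \emph{nice} cycle of an arbitrary partition of a 2VC simple graph can be computed in polynomial time. Your proposal does not prove this: it \emph{invokes} Lemma~\ref{lem:findnicecycles} as its main black box and then tries to upgrade its output to a \emph{good} cycle, i.e.\ you have written a proof sketch for Lemma~\ref{lem:findgoodcycles} instead (note that the hypotheses you use, such as the existence of a large part with $G[V_i]$ connected and of two adjacent singletons, belong to that lemma, not to this one). As a proof of the stated lemma the argument is circular. The paper itself does not reprove the nice-cycle lemma; it imports it from \cite{garg2023improved}, where it rests on an ear-decomposition/exchange argument that your proposal never supplies. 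A correct answer here must either reproduce that construction or explicitly reduce to it --- starting from the conclusion is not an option.

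Even read as a proof of Lemma~\ref{lem:findgoodcycles}, the central ``split-then-merge'' step has a gap. After you split a big part $V^*$ into $\{u_1\},\{u_2\},V^*\setminus\{u_1,u_2\}$ and rerun the nice-cycle algorithm, the returned cycle $N'$ is \emph{some} nice cycle of the refined partition; nothing forces it to come anywhere near $V^*$ or its sub-parts. For instance $N'$ could be a triangle through three singletons of $\Pi$ disjoint from $V^*$, so your claimed trichotomy (``$N'$ touches a big part of $\Pi$, or contains $u_1u_2$, or crosses between $\{u_1\}$ and $\{u_2\}$'') is false, and the procedure can loop without progress. The paper's actual proof of Lemma~\ref{lem:findgoodcycles} goes in the opposite direction: it \emph{coarsens} $\Pi$ to a partition $\Pi''$ by merging adjacent singletons into connected blobs and absorbing singletons that are doubly linked to a large component into that component, so that \emph{every} nice cycle of $\Pi''$ is forced to meet a set containing a large component of $\Pi$; it then expands the nice cycle back into a good cycle of $\Pi$ by inserting connecting paths and extra edges inside the merged sets. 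That direction is what makes the ``touches a large part'' condition unavoidable rather than something you hope the black box happens to deliver.
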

    
    Each set $V_i\in \Pi$ induces a connected graph in $G$, and we call these sets \textit{components} for simplicity. For $V_i\in \Pi$ with $|V_i| \geq 2$, we say that $V_i$ is a large component. For $V_i\in \Pi$ with $|V_i| = 1$, we say that $V_i$ is a singleton of $\Pi$. Using the sets of $\Pi$, we create a new vertex partitioning $\Pi''$ in the following way.

    We define $A_1 \subseteq \Pi$ as the singletons of $\Pi$ that are adjacent to another singleton in $\Pi$. Denote by $F_1$ the set of maximal connected components of $A_1$ with edges in $E$. Define $\Pi' \coloneqq \Pi \cup F_1 \backslash A_1$. So basically $\Pi'$ is a partition obtained from $\Pi$ by replacing  singletons of $\Pi$ by the vertex sets of the connected components of graph $G[A_1]$.

    We define $A_2\subseteq \Pi'$ as the singletons of $\Pi'$ that are incident to two distinct edges with endpoint in the same large component of $\Pi$. Starting with $F_2 \coloneqq \emptyset$ and $F_2^- \coloneqq \emptyset$. For each $D\in \Pi$, that is adjacent to a singleton in $A_2$, let $N_{A_2}(D)$ denote the vertices of $A_2$ that are adjacent to $D$. Add $D\cup N_{A_2}(D)$ to $F_2$, add $D$ and every vertex of $N_{A_2}(D)$ to $F_2^-$, and finally remove $N_{A_2}(D)$ from $A_2$. When $A_2$ is empty, define $\Pi'' \coloneqq \Pi' \cup F_2 \backslash F_2^-$. So basically we replace large components and any singletons (that haven't already been processed) that have two edges between them with a single large component.
    

    We define $F_0 \coloneqq \Pi \backslash F_2^-$, the large components that are not subsets of components in $F_1$. Finally, we define $A_0\subseteq \Pi''$ the set of singletons of $\Pi''$.
    The following observation is clear from the definition of these sets.
    \begin{observation}
        $V(F_0)\cup V(F_1) \cup V(F_2) \cup V(A_0) = V$, and $V(F_0), V(F_1) , V(F_2) , V(A_0)$ are pairwise disjoint.
    \end{observation}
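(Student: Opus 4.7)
The plan is to carry out a simple case analysis on where each vertex $v \in V$ originates in the initial partition $\Pi$ and then track how the construction reassigns it. Since $\Pi$ partitions $V$, each $v$ lies in exactly one $V_i \in \Pi$, and $V_i$ is either a singleton $\{v\}$ or a large component ($|V_i| \geq 2$). I will verify that each $v$ lands in exactly one of $V(F_0), V(F_1), V(F_2), V(A_0)$.

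For the case $|V_i| \geq 2$, the large component $V_i$ either becomes absorbed into some element of $F_2$ (so $V_i \subseteq F_2^-$ and $v \in V(F_2)$), or it does not, in which case $V_i \in F_0$ by the definition $F_0 := \Pi \setminus F_2^-$, giving $v \in V(F_0)$. These two outcomes are mutually exclusive by construction. Moreover, vertices of a large component of $\Pi$ never appear in $V(F_1)$, whose elements are built exclusively from singletons of $\Pi$ lying in $A_1$, nor in $V(A_0)$, since no step of the construction ever breaks a large component of $\Pi$ into smaller pieces, and so its vertices cannot appear as singletons of $\Pi''$.

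For the case $V_i = \{v\}$, I would observe that exactly one of three disjoint conditions holds: either $V_i \in A_1$ (so $v$ sits in some maximal connected component of $G[A_1]$ and hence $v \in V(F_1)$); or $V_i \notin A_1$ but $V_i \in A_2$ (so $V_i$ is merged into an element of $F_2$ together with some large component $D$ of $\Pi$, giving $v \in V(F_2)$); or $V_i \notin A_1$ and $V_i \notin A_2$, in which case $V_i$ survives as a singleton in $\Pi''$ and so $V_i \in A_0$, yielding $v \in V(A_0)$. Since the definition of $A_2$ restricts to singletons of $\Pi'$, which are precisely the singletons of $\Pi$ not in $A_1$, the three cases are genuinely disjoint.

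The only mild subtlety, which I expect to state explicitly, is the disjointness between $V(F_1)$ and $V(F_2)$: any vertex of $V(F_1)$ comes from a singleton of $\Pi$ in $A_1$, while any singleton contributing to $V(F_2)$ comes from $A_2 \subseteq \Pi'$, and singletons in $A_1$ are not singletons in $\Pi'$. Combined with the already-noted disjointness of $F_0$ and $F_2^-$, this gives pairwise disjointness of all four sets; together with the exhaustive case analysis above, the union is all of $V$.
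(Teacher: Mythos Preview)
Your proposal is correct. The paper does not actually prove this observation; it simply asserts that it is ``clear from the definition of these sets'' and moves on. Your case analysis---splitting on whether $v$ lies in a large component or a singleton of $\Pi$, and in the latter case on whether the singleton belongs to $A_1$, to $A_2$, or to neither---correctly verifies both the covering and the pairwise disjointness claims, and in particular your handling of the only nontrivial point (that singletons absorbed into $F_1$ cannot reappear in $F_2$ because $A_2$ is drawn from singletons of $\Pi'$) is exactly right. You supply considerably more detail than the paper deems necessary.
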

    As we are always merging components of $\Pi$ to create $\Pi'$ the following observation is clear.
    \begin{observation}
        Every component of $\Pi$ is a (not necessarily strict) subset of a component of $\Pi'$.
    \end{observation}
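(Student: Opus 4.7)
The plan is to verify this observation directly by unpacking the construction $\Pi' = \Pi \cup F_1 \setminus A_1$. The argument naturally splits into two cases depending on whether a given component $V_i \in \Pi$ is removed in passing from $\Pi$ to $\Pi'$ (i.e., whether $V_i \in A_1$) or not.

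First, I would fix an arbitrary component $V_i \in \Pi$. If $V_i \notin A_1$, then $V_i$ is untouched by the construction and remains a component of $\Pi'$, so the required containment holds trivially with $V_i \subseteq V_i$. Otherwise $V_i \in A_1$, and since $A_1$ by definition consists only of singletons of $\Pi$, the ``component'' $V_i$ is really a single vertex lying in $A_1$. Recalling that $F_1$ is the collection of vertex sets of the maximal connected components of $G[A_1]$, the unique vertex in $V_i$ lies in exactly one such component $C \in F_1$; and $C \in \Pi'$ by construction, so $V_i \subseteq C$, as claimed.

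The main ``obstacle'' is really just bookkeeping: one must carefully invoke the fact that $A_1$ contains only singletons of $\Pi$, so that a component of $\Pi$ sitting inside $A_1$ is a single vertex and thus belongs to exactly one maximal connected component of $G[A_1]$. Since no delicate combinatorial argument is needed, this observation serves as a setup step — it lets later parts of the proof of Lemma~\ref{lem:findgoodcycles} reason about how nice cycles in the coarser partition $\Pi''$ pull back to good cycles with respect to the original partition $\Pi$, by guaranteeing that each $V_i \in \Pi$ is contained in some component of $\Pi'$ (and, by iterating the same reasoning on the passage $\Pi' \to \Pi''$, of $\Pi''$ as well).
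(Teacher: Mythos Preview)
Your proof is correct and follows the same idea as the paper, which dispatches the observation with the single remark ``as we are always merging components of $\Pi$ to create $\Pi'$''. You simply spell out that merging in detail: non-singleton components (and singletons not in $A_1$) are untouched, while singletons in $A_1$ get absorbed into the connected component of $G[A_1]$ containing them.
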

    Thus, $\Pi''$ partitions $V$. To see that $|\Pi''| \geq 2$, we consider cases. If there are at least two large components of $\Pi$ then $|\Pi''| \geq 2$ since large components of $\Pi$ are not part of the same component of $\Pi''$. If there is exactly one large component of $\Pi$ then at least one of $F_0$ or $F_2$ is non-empty, and by our assumption on $\Pi$, there are at least two singletons that are adjacent in $G$, which shows that $F_1$ is non-empty. 
    Therefore, $\Pi''$ satisfies the conditions of Lemma~\ref{lem:findnicecycles}. 
    
    We compute a nice cycle $C$ using Lemma~\ref{lem:findnicecycles} on $\Pi''$. It remains to find a good cycle $C'$ on $\Pi$. We will find sure $C'$ by using the edges of $C$, as well as additional edges depending the components that are incident to $C$. In particular, if $C$ incident to a component $D$ of $F_0$ or $F_2$, then $C'$ will be incident to the large component that is contained in $D$. If $C$ is incident to a component $D \in F_1$, then we guarantee that $C'$ will be incident to at least two singletons of $\Pi$ that are contained in $D$. We begin by adding every edge of $C$ to $C'$. For every component $D\in \Pi''$ that has endpoints of edges of $C$, we consider by cases which set among $F_0, F_1 ,F_2$, and $A_0$ contain $D$.

    First, if $D\in A_0$ or $D\in F_0$, we do nothing.

    Next, if $D\in F_1$. Let $e_1,e_2\in E(C)$ be the edges of $C$ incident to $D$.  Observe, by definition of a nice cycle, $e_1$ and $e_2$ have distinct endpoints $v_1,v_2\in V(D)$ respectively. By construction, $G[D]$ is connected, so there is a path in $G[D]$ between $v_1$ and $v_2$. We add the edges of this path to $C'$.  

    Finally, if $D\in F_2$. Let $e_3,e_4\in E(C)$ be the edges of $C$ incident to $D$. By definition of a nice cycle $e_3$ and $e_4$ have distinct endpoints $v_3,v_4\in V(D)$ respectively. We consider cases for $v_3$ and $v_4$. By construction, there is exactly one large component $L\in \Pi$ such that $L \subsetneq D$. 
    
    \textbf{Case:} If $v_3, v_4\in V(L)$. We are done, as $C'$ now has two edges with distinct endpoints in a large component of $\Pi$.

    \textbf{Case:} If $v_3 \in V(L)$ but $v_4 \notin V(L)$. ( the opposite case where only $v_4\in V(L)$ is similar). By definition of $F_2$, $v_3$ is incident to at least two distinct edges, with endpoints in $L$, say $e_3',e_3''$. At most one of these two may  have $v_4$ as an endpoint, without loss of generality say $e_3''$ has $v_4$ as an endpoint in this case. We add $e_3'$ to $C'$ and we are done, as $C'$ now has two edges with distinct endpoints in $L$, a large component of $\Pi$.

    \textbf{Case:} If $v_3, v_4\notin V(L)$. By definition of $F_1$, $v_3$ and $v_4$ are incident to at least two distinct edges, with endpoints in $L$. Denote edges incident to $v_3$ by $e_3',e_3''$ and the edges incident to $v_4$ by $e_4',e_4''$. We add $e_3'$ to $C'$. At most one of $e_4'$ and $e_4''$ can share an endpoint with $e_3'$, without loss of generality say $e_4'$. Add $e_4''$ to $C'$, and we are done as $C'$ now has two edges with distinct endpoints in $L$, a large component of $\Pi$.

    By construction, we have the following: (1) $C'$ induces a cycle of length at least 2 in the graph obtained from $G$ by collapsing each $V_i\in \Pi$ into a single vertex; (2) given any two edges of $C'$ incident to some $V_i$, these edges are incident to distinct vertices of $V_i$ unless $|V_i|=1$.
    
    It remains to show the remaining conditions of a good cycle: (3) $C'$ has an edge incident to at least one $V_i \in \Pi$ with $|V_i| \geq 2$, and;  (4) $|C'|=2$ only if both $V_i$ and $V_j$ incident to $C'$ have $|V_i|,|V_j| \geq 2$. To see point (3), by construction of the sets $F_0,F_1,F_2$ and $A_0$, any component in $F_1$ and $A_0$ are adjacent only to components in $F_0$ and $F_2$. Thus, since a nice cycle has at least two edges, it will contain a component from $F_0$ or $F_2$. And by construction of $C$, $C$ will be incident to a large component of $\Pi$.

    To see point (4), observe that since $C \subseteq C'$, so if $|C'|=2$ then $|C|=2$. So we consider the cases where $|C|=2$. If $C$ is incident to a component $D$ in $F_1$, then $C'$ will have at least three edges by construction, as $C'$ also contains edges of a path in $G[D]$. If $C$ is incident to a component in $A_0$, then by definition of $A_0$, $C$ is incident to at least 2 other components in $\Pi''$, hence $|C| \geq 3$. Therefore, $C$ can be adjacent only to components in $F_0$ and $F_2$. If $C$ is incident to a component in $F_2$ at a vertex that is a singleton in $\Pi$, then $|C'| \geq 3$. Thus, $C'$ is incident to vertices of exactly two large components of $\Pi$. And $C'$ satisfies the conditions of a good cycle.
\end{proof}

\subsection{Proof of Lemma~\ref{lem:FVCupperbound}}
\label{sec:FVCupperbound}
\begin{proof}
Let us begin by the following simple claim.
    \begin{claim}
    \label{lem:s0atmostalpha}
        $(V(D), P)$ has at most $|V(D)|-\alpha$ blocks.
    \end{claim}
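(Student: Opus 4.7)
The plan is to apply property 4 of Lemma~\ref{lem:blockbound} componentwise to $(V(D), P)$, exploiting the fact that this graph has exactly $\alpha$ connected components by definition of $\alpha$. I will bound the number of blocks contributed by each component in terms of its vertex count and then sum.

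More concretely, let $C_1, \ldots, C_\alpha$ denote the connected components of $(V(D), P)$, and set $n_i \coloneqq |V(C_i)|$, so that $\sum_{i=1}^{\alpha} n_i = |V(D)|$. Each $C_i$ is connected, so property 4 of Lemma~\ref{lem:blockbound} bounds its number of blocks by $n_i - 1$. The singleton case $n_i = 1$ deserves a one-line check: such a component contains no edge, hence no block at all (the definition requires a block to have at least one edge), which is consistent with the bound $n_i - 1 = 0$.

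Since every block is a connected subgraph, every block of $(V(D), P)$ lies inside a single $C_i$, so the blocks of $(V(D), P)$ are exactly the disjoint union of the blocks of the components. Summing the per-component bounds yields
\[
\#\text{blocks of }(V(D), P) \;\le\; \sum_{i=1}^{\alpha} (n_i - 1) \;=\; |V(D)| - \alpha,
\]
which is the desired inequality. No real obstacle arises here; the only subtle point is the treatment of isolated vertices, which is handled by the convention that a singleton contributes no block.
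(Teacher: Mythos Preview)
Your proof is correct and follows essentially the same approach as the paper: decompose $(V(D),P)$ into its $\alpha$ connected components of sizes $n_1,\dots,n_\alpha$, bound the blocks in each by $n_i-1$ via Lemma~\ref{lem:blockbound}(4), and sum. If anything, your version is slightly more careful than the paper's, since you explicitly handle the singleton case and cite the correct lemma (the paper's proof references Lemma~\ref{lem:reduceblocks}, which appears to be a typo for Lemma~\ref{lem:blockbound}).
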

    \begin{proof}
        Assume that the initial connected components are of size $a_1,...,a_\alpha$. Then the total number of blocks by Lemma~\ref{lem:reduceblocks} is at most $\sum_{i=1}^\alpha (a_i-1)=n-\alpha$.
    \end{proof}
    Now we will use the next  observation for the analysis of Algorithm~\ref{alg:phase1}.
    \begin{observation}\label{obs:goodCycleBlockDecrease}
        Consider good cycle $C$ computed by line $4$ in Algorithm~\ref{alg:phase1} and suppose $C$ contains $l$ large connected components. By adding $C$ to $S_1$ the number of blocks in $(V(D),P\cup S_1)$ is reduced by at least $l-1$.
    \end{observation}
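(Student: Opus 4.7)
The plan is to identify a single block $B^*$ of the new graph $H':=(V(D), P\cup S_1\cup E(C))$ that absorbs at least $l$ distinct blocks of the old graph $H:=(V(D), P\cup S_1)$, and then to show that no other block merging or splitting occurs, so the count drops by exactly (absorbed)$\,-\,1\ge l-1$. Concretely, for each of the $l$ large components $V_{i_1},\dots,V_{i_l}$ of $H$ that $C$ is incident to, condition~(2) of the good cycle definition yields two distinct attachment vertices $x_j,y_j\in V_{i_j}$ at which the two edges of $C$ meeting $V_{i_j}$ are anchored. Since $V_{i_j}$ is a connected component of $H$, I pick a path $P_j$ in $H[V_{i_j}]$ from $x_j$ to $y_j$ (for singletons no path is needed). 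Splicing these paths into the cyclic edge-sequence of $C$ produces an actual cycle $\tilde C$ in $H'$, and Lemma~\ref{lem:blockbound}(3) then forces every edge of $\tilde C$ to lie in a common block of $H'$, which I call $B^*$.

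Next, any block $B$ of $H$ containing an edge of some $P_j$ must be entirely absorbed into $B^*$: by Lemma~\ref{lem:blockbound}(3) every other edge of $B$ shares a cycle with that witness edge in $H$, and this cycle persists in $H'$, so every edge of $B$ ends up in the same block of $H'$ as $E(C)$, namely $B^*$. Because each $P_j$ ($j\le l$) has at least one edge and the components $V_{i_j}$ are vertex-disjoint, the blocks of $H$ they witness belong to disjoint edge-sets (Lemma~\ref{lem:blockbound}(2)), giving at least $l$ pairwise distinct blocks of $H$ absorbed into $B^*$.

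To finish, I need to show that the blocks of $H'$ are exactly $\{B^*\}$ together with the blocks of $H$ not absorbed into $B^*$. Adding edges can never split a block, so unabsorbed $H$-blocks survive as blocks of $H'$; conversely, if two unabsorbed $H$-blocks $B\neq B'$ became part of a common block of $H'$, a witnessing cycle $Z$ must use some edge $e\in E(C)$ (otherwise $Z$ would already witness merging in $H$, contradicting Lemma~\ref{lem:blockbound}(2)–(3)), but then $e\in B^*$ and $Z$ drags both $B$ and $B'$ into $B^*$, contradicting the assumption that they were unabsorbed. This ``no accidental merging'' is the one nontrivial step and rests squarely on the cycle characterization of blocks; everything else is edge-partition bookkeeping, which yields the desired drop of at least $l-1$ blocks.
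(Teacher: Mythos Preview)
Your proof is correct and follows essentially the same approach as the paper: splice paths $P_j$ through the $l$ large components into the good cycle $C$ to obtain an actual cycle $\tilde C$ in $H'$, use Lemma~\ref{lem:blockbound}(3) to place all of $\tilde C$ in one block $B^*$, and observe that the $l$ paths (lying in $l$ distinct components, hence touching $l$ distinct old blocks) are all absorbed into $B^*$. The paper's proof is terser and simply asserts that merging $l$ blocks into one drops the count by $l-1$; your ``no accidental merging'' paragraph makes this bookkeeping explicit, which is fine (and in fact more rigorous) but not strictly necessary for the $\ge l-1$ bound, since once all of $E(C)$ lies in $B^*$ every other $H'$-block is a union of old blocks, giving the inequality directly.
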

    \begin{proof}
        Let $T_1,\dots ,T_{l}$ be the set of large connected components of $(V(D),P\cup S_1)$ that are part of cycle $C$, and let $u_i$ and $v_i$ be the vertices of $T_i$ that are incident on edges of $C$ for $i=1,\dots,l$. Now, consider all the edges that belong to paths from $u_i$ to $v_i$ in $T_i$, which we denote by $F_i$. Let   $F'_i\supseteq F_i$ denote the union of the edge sets of blocks in $T_i$ that contain at least one edge of $F_i$.
        Observe that by adding $C$, the set $E(C)\cup F_1' \cup F_2' \dots \cup F_{l}'$ merge into a single block $B$. Thus at least $l$ blocks merge into a single block $B$ and hence the number of blocks decreases by at least $l-1$.
    \end{proof}
    In the next two observations we analyze the running time of Algorithm~\ref{alg:phase1} and the cost of the solution obtained by this algorithm.
    \begin{observation}\label{obs:goodcycleFeasibility}
        Algorithm~\ref{alg:phase1} terminates in polynomial time. Furthermore, upon termination of  Algorithm~\ref{alg:phase1}, $(V(D),P\cup S_1)$ has one large connected component, namely $A$, and $V(D)\setminus V(A)$ forms an independent set of vertices in graphs $(V(D),P\cup S_1)$ and $G[V(D)]$.
    \end{observation}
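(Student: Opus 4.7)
My plan is to handle polynomial termination and the structural claim separately.

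For polynomial termination, I would bound the number of loop iterations by $|E(G[V(D)])|/2$. The key observation is that every good cycle has length at least $2$ and, by definition, consists of edges between distinct components of the current partition induced by $(V(D), P \cup S_1)$; such edges cannot already lie in $S_1$, since endpoints of any edge already in $S_1$ share a component. Hence each iteration strictly increases $|S_1|$ by at least $2$, giving a polynomial bound on iterations. Each iteration also runs in polynomial time: Lemma~\ref{lem:findgoodcycles} provides a polynomial-time procedure to produce a good cycle whenever its hypotheses hold, and we can detect the absence of a good cycle by verifying that these hypotheses fail.

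For the structural claim, the first step is to observe that $P$ contains at least one pseudo-edge, so $(V(D), P)$ already has at least one large component. This is because we may assume (via Lemma~\ref{lem:5/3apx}) that $K_{1,2} \cup K_{2,3} \neq \emptyset$, providing at least one color class, and the rainbow forest $P$ must contain exactly one edge per color whenever one exists. Since the algorithm only merges components, at least one large component persists throughout. To show that exactly one large component exists at termination, I would argue by contradiction: if two or more large components remained, the partition together with the graph $G[V(D)]$, which is 2VC because it contains $D$, would satisfy the hypotheses of Lemma~\ref{lem:findgoodcycles}, yielding a good cycle and contradicting the exit condition of the while loop. Calling the resulting unique large component $A$, the set $V(D) \setminus V(A)$ consists precisely of the singletons of $(V(D), P \cup S_1)$, which is trivially an independent set in that graph. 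For independence in $G[V(D)]$, I would again argue by contradiction: if two singletons $u, v \in V(D) \setminus V(A)$ were adjacent in $G[V(D)]$, the partition would satisfy the hypotheses of Lemma~\ref{lem:findgoodcycles} (one large component $A$ plus two adjacent singletons), producing a good cycle and contradicting termination.

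The step I expect to be most subtle is verifying the connectivity hypothesis of Lemma~\ref{lem:findgoodcycles}, namely that each part $V_i$ of the partition induces a connected subgraph of $G[V(D)]$. This is not automatic, since pseudo-edges in $P$ correspond to real paths in $G$ that traverse vertices of $K_{1,2} \cup K_{2,3}$ lying outside $V(D)$. I would address this either by arguing that the algorithm maintains this condition as an invariant through the effect of each good-cycle augmentation on connectivity inside $V(D)$, or by working with a finer partition obtained by splitting each $V_i$ into the connected components of $G[V(D)][V_i]$, applying Lemma~\ref{lem:findgoodcycles} to this refinement, and lifting the resulting good cycle back to the coarser partition.
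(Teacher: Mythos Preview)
Your argument mirrors the paper's: both derive the structural conclusion by contraposition through Lemma~\ref{lem:findgoodcycles}, the only difference being that the paper bounds the number of iterations by observing that each good cycle merges at least two components of $(V(D),P\cup S_1)$ (hence at most $|V(D)|$ iterations), whereas you bound it via the growth of $|S_1|$. Regarding the connectivity hypothesis of Lemma~\ref{lem:findgoodcycles} that you flag as subtle: the paper simply invokes the lemma without verifying this hypothesis, and an inspection of that lemma's proof shows the hypothesis is never actually used (the only place connectivity is invoked is for sets in $F_1$, which are connected by construction), so neither of your proposed workarounds is needed.
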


    \begin{proof}
        Note that initially $P\neq \emptyset$ as we are applying Lemma~\ref{lem:5/3apx} and assuming at least one of $K_{1,2}$ and $K_{2,3}$ is not an empty set.  Furthermore at each step of Line 4 to 5 in Algorithm~\ref{alg:phase1}, we are adding edges of a good cycle to $S_1$ to merge at least two connected components of  $(V(D),P\cup S_1)$ into a single component. 
        
        Therefore (1) Line 4 to 5 iterates at most $|V(D)|$ times, and; (2) once the algorithm reaches line 6, we must have at least one large connected component. Furthermore, using Lemma~\ref{lem:findgoodcycles} we can always find a good cycle as long as we have more than one large connected components in $(V(D),P\cup S_1)$, or at least two singletons are adjacent.
        Thus, once we reach line 6 of Algorithm~\ref{alg:phase1}, we have precisely one large connected component, $A$ in $(V(D),P\cup S_1)$. Additionally, at this stage all the vertices of $V(D)\setminus V(A)$ form an independent set of vertices in $G\cup P$, as otherwise there is another good cycle.

        To see that Algorithm~\ref{alg:phase1} terminates in polynomial time, observe that it has at most $|V(D)|$ iterations, and in each iteration we compute a good cycle, which can be done in polynomial time by Lemma~\ref{lem:findgoodcycles}.
    \end{proof}
    
    \begin{observation}\label{obs:goodcycleanalysis}
         At the end of Algorithm~\ref{alg:phase1},
        (1) $ (V(D), S_1  \cup P)$ has at most $|V(D)|-\alpha - \alpha_{large}+1$ blocks, and;
        (2) $|S_1|\le 2\alpha_{large} + \frac{3}{2}|X_1|-2$.
    \end{observation}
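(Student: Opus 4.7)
The plan is to track two quantities simultaneously across iterations of Algorithm~\ref{alg:phase1}: the total edges purchased and the block count of $(V(D), P \cup S_1)$. Let $T$ denote the number of iterations, and for iteration $i$ let $l_i$ and $s_i$ be the numbers of large components and singletons incident to the good cycle $C_i$ added in that iteration. Conditions (3) and (4) in the definition of a good cycle will be the essential structural inputs: (3) gives $l_i \ge 1$ always, and (4) gives $s_i \ge 2$ whenever $l_i = 1$ (since a length-$2$ good cycle must be incident to two large components, so $l_i = 1$ forces $|C_i| \ge 3$).

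For part (1), I would start from Claim~\ref{lem:s0atmostalpha}, which bounds the initial block count of $(V(D), P)$ by $|V(D)| - \alpha$. Observation~\ref{obs:goodCycleBlockDecrease} says iteration $i$ reduces the block count by at least $l_i - 1$. Since each good cycle merges all incident large components and singletons into a single new large component, the large-component count decreases from $\alpha_{large}$ to $1$, yielding $\sum_i (l_i - 1) = \alpha_{large} - 1$. Plugging this into the telescoped bound on blocks immediately gives $|V(D)| - \alpha - \alpha_{large} + 1$.

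For part (2), each good cycle $C_i$ contributes exactly $l_i + s_i$ edges to $S_1$, so $|S_1| = \sum_i (l_i + s_i)$. Using $\sum_i s_i = |X_1|$ (the total singletons absorbed into $A$) and $\sum_i l_i = (\alpha_{large} - 1) + T$ from the accounting above, this rewrites as $|S_1| = (\alpha_{large} - 1 + T) + |X_1|$. The remaining task is to bound $T$. I would split the iterations into those with $l_i \ge 2$ and those with $l_i = 1$: the first group has size at most $\alpha_{large} - 1$, because each such iteration contributes at least $1$ to $\sum_i (l_i - 1) = \alpha_{large} - 1$; the second group has size at most $|X_1|/2$, because each such iteration contributes $s_i \ge 2$ to $\sum_i s_i = |X_1|$. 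Adding these gives $T \le \alpha_{large} - 1 + |X_1|/2$, and substituting produces exactly $|S_1| \le 2\alpha_{large} + \tfrac{3}{2}|X_1| - 2$.

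The main obstacle is simply isolating the correct structural constraints from the definition of a good cycle and matching them to the accounting. The only nonobvious step is invoking condition (4) of a good cycle to exclude length-$2$ cycles with only one incident large component, which is precisely what forces $s_i \ge 2$ when $l_i = 1$ and drives the $\tfrac{3}{2}$ coefficient on $|X_1|$ in part (2); without it the second-type iterations could be arbitrarily many.
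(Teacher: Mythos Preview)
Your proposal is correct and follows essentially the same approach as the paper's proof. Part (1) is identical. For part (2), the paper tracks the cycle lengths $a_i$ and the number $t'$ of length-$2$ cycles, bounding $T$ via $t' + 2(T - t') \le \sum_i (a_i - 1) = \alpha_{large} + |X_1| - 1$ together with $t' \le \alpha_{large} - 1$; your decomposition by whether $l_i \ge 2$ or $l_i = 1$ is a slightly cleaner repackaging of the same accounting that lands on the same bound $T \le \alpha_{large} - 1 + |X_1|/2$.
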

    \begin{proof}
        We started with at most $|V(D)|-\alpha$ blocks by Claim~\ref{lem:s0atmostalpha}. Now let $a_1,\ldots,a_t$ be the size of these good cycles, and let $l_i$ be the number of large connected components that appeared in $i^{th}$ good cycle. By Observation~\ref{obs:goodCycleBlockDecrease}, the total decrease on the number of blocks after these iterations is at least $\sum_{i=1}^{t}(l_i-1)$. However, $l_i-1$ is equal to the decrease in the number of large connected components of $(V(D),S_1\cup P)$ when we add the $i^{th}$ cycle to $S_1$. Hence, as we start with $\alpha_{large}$ many large connected components then $\sum_{i=1}^{t}(l_i-1)=\alpha_{large}-1$. Altogether, these imply that the total number of the blocks at the end is at most $|V(D)|-\alpha-(\alpha_{large}-1)$, and claim (1) of the observation is proven.
        
        Furthermore, we observe that $\sum_{i=1}^t (a_i-1)=\alpha_{large}+|X_1|-1$ (This is due to two facts: 1) adding the $i^{th}$ good cycle to $S_1$ decreases the number of connected components of $(V(D),S_1\cup P)$  by $a_i-1$, and; 2) adding all these cycles merge the initial $\alpha_{large}$ many large connected components and $|X_1|$ many singleton components into a single (large) component).
        
        Therefore the total number of edges (i.e. $\sum_{i=1}^t a_i$) is equal to $\alpha_{large}+|X_1|-1+t$. Assume that among these $t$ good cycles, we have $t'\le t$ cycles of length $2$. 
        Then  since each $a_i$ is at least $2$, we we can see that $t'+2(t-t')\le \sum_{i=1}^t (a_i-1)=\alpha_{large}+|X_1|-1$.
        Furthermore, since every cycle of length two merges two large connected components into one connected component, we have $t'\le \alpha_{large}-1$,. 
        
        Thus,  $t=\frac{(t'+2(t-t'))+t'}{2}\le \frac{1}{2}(\alpha_{large} + |X_l|-1+\alpha_{large}-1)$. Hence $\sum_{i=1}^t a_i= \sum_{i=1}^t (a_i-1) +t\le (\alpha_{large} + |X_1|-1) + \alpha_{large} -1+ \frac{|X_l|}{2} = 2\alpha_{large} +\frac{3}{2}|X_1|-2$, hence the claim.
    \end{proof}
    Now we analyze the second phase of our algorithm (Algorithm~2).
    \begin{observation}\label{obs:SecondPhaseComponentsAnd blocks}
        At the end of Algorithm~\ref{alg:phase2}, $(V(D) ,S_1\cup S_2 \cup P)$ has $\alpha'$ connected components that are singletons (which is set $X_3$) and one large connected component that has exactly one block. Furthermore, Algorithm~\ref{alg:phase2} terminates in polynomial time.
    \end{observation}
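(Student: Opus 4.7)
The plan is to maintain, by induction on the iterations of Algorithm~\ref{alg:phase2}, two invariants: (i) $(V(A) \cup X_2, P \cup S_1 \cup S_2)$ is connected, and (ii) $V(D) \setminus (V(A) \cup X_2)$ is an independent set in $G[V(D)]$. Both hold initially by Observation~\ref{obs:goodcycleFeasibility}. In the first loop, adding $v$ together with the two edges $e_1, e_2$ linking it to $V(A) \cup X_2$ preserves (i), while moving $v$ into $X_2$ only shrinks the outside set, preserving (ii). The second loop adds only edges between vertices of $V(A) \cup X_2$, so both invariants persist.

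The main obstacle is to verify that the first while loop can always find an admissible vertex $v$ whenever $G[V(A) \cup X_2] \cup P$ has more than one block. The key fact here is that $G[V(D)] \cup P$ is 2VC, since it contains $D$ which is 2VC by Lemma~\ref{lem:2VCandOpenEars}. Let $c$ be a cut-vertex of $G[V(A) \cup X_2] \cup P$ and pick $u, w$ in two distinct blocks both containing $c$. By 2VC there is a $u$-$w$ walk $v_0, v_1, \ldots, v_k$ in $G[V(D)] \cup P$ that avoids $c$; since $u$ and $w$ lie in different components of $G[V(A) \cup X_2] \cup P \setminus \{c\}$, this walk must leave $V(A) \cup X_2$, and by invariant (ii) each such excursion is a single vertex $v_{i+1}$ whose two neighbors $v_i, v_{i+2}$ both lie in $V(A) \cup X_2$. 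Every direct-edge step of the walk stays within one component of $G[V(A) \cup X_2] \cup P \setminus \{c\}$, so at least one excursion vertex must have its two neighbors in different components of that graph. Because any block of $G[V(A) \cup X_2] \cup P$ remains connected after deleting any single vertex, these two neighbors then lie in different blocks of $G[V(A) \cup X_2] \cup P$. Setting $v = v_{i+1}$, $e_1 = v v_i$, $e_2 = v v_{i+2}$, the addition merges these blocks and strictly reduces the block count, fulfilling the loop's search condition.

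When the first loop terminates, $G[V(A) \cup X_2] \cup P$ consists of a single block. The second loop is then a direct application of Lemma~\ref{lem:reduceblocks}: with $G[V(A) \cup X_2] \cup P$ in the role of the reference graph and $H = (V(A) \cup X_2, P \cup S_1 \cup S_2)$ as the connected spanning subgraph, whenever $H$ has more than one block the lemma supplies an edge in $E(G[V(A) \cup X_2]) \setminus (S_1 \cup S_2)$ whose addition reduces $H$'s block count; this is precisely the edge $e_3$ sought by the loop. Hence, on termination, $(V(A) \cup X_2, P \cup S_1 \cup S_2)$ has exactly one block.

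Finally, $X_3 = V(D) \setminus (V(A) \cup X_2)$ is incident to no edge of $P$ or $S_1$ by Observation~\ref{obs:goodcycleFeasibility}, and every edge that either loop adds to $S_2$ has both endpoints in $V(A) \cup X_2$; hence each vertex of $X_3$ is isolated in $(V(D), P \cup S_1 \cup S_2)$, and $|X_3| = \alpha'$ by the definition used in Algorithm~\ref{alg:phase3}. Polynomial running time follows immediately: the first loop runs at most $|V(D)|$ iterations (each adds a vertex to $X_2$), the second loop runs at most $|V(D)|-1$ iterations (each strictly reduces the block count of $H$), and each step's searches (for $v$, $e_1, e_2$, or $e_3$) can be carried out in polynomial time by enumeration.
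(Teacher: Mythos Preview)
Your proof is correct and takes a more constructive route than the paper's. The paper argues by contradiction twice: first for line~4 (if no admissible $v$ existed, every outside vertex would have all its neighbors in one block, forcing $G[V(D)]\cup P$ not to be 2VC), and then separately for line~5 (for the chosen $v$, a failure of every candidate pair $(e_1,e_2)$ is shown to contradict the block structure of $G[V(A)\cup X_2\cup\{v\}]\cup P$). You instead exhibit a witness directly: a $u$--$w$ path in $G[V(D)]\cup P$ avoiding the cut-vertex $c$ must detour through the outside independent set, and some detour vertex has its two neighbors on opposite sides of $c$, delivering $v,e_1,e_2$ at once; this is cleaner. One point you should make explicit: the block-count decrease you argue is for line~4's graph $G[V(A)\cup X_2\cup\{v\}]\cup P$, whereas line~5 concerns $(V(A)\cup X_2\cup\{v\},P\cup S_1\cup S_2\cup\{e_1,e_2\})$. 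The fix is immediate, since $v_i,v_{i+2}$ being separated by $c$ in the larger graph implies the same in its spanning subgraph $(V(A)\cup X_2,P\cup S_1\cup S_2)$, so adjoining $v$ with $e_1,e_2$ merges at least two blocks there too. Likewise, since the algorithm commits to $v$ in line~4 before seeking $e_1,e_2$ in line~5, you should note that \emph{any} $v$ passing line~4 must have two neighbors in distinct blocks of the larger graph (else adding it would not drop the block count), hence of the smaller one, so line~5 succeeds regardless of which $v$ the algorithm picks.
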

    \begin{proof}
        First recall that since $D$ is defined as an open ear decomposition, $G[V(D)]$ is $2$VC by Lemma~\ref{lem:2VCandOpenEars}. Second, by Observation~\ref{obs:goodcycleFeasibility}, $V(D)\setminus V(A)$ is an independent set.
        
        We show that, as long as $G[V(A)\cup X_2]\cup P$ has more than one block, then there exists a vertex $v\in V(D)\setminus(V(A)\cup X_2)$ such that $G[V(A)\cup X_2]\cup P$ has more blocks than $G[V(A)\cup X_2 \cup \{v\}]\cup P$.
        
        Assume otherwise for sake of contradiction. That is, for every vertex $v\in V(D)\setminus(V(A)\cup X_2)$ there exists a block $B_v$ of $G[V(A)\cup X_2]\cup P$ such that every neighbor of $v \in V(D)$ belongs to $V(B_v)$. 
        Hence any two edges of $G[V(A)\cup X_2]\cup P$ that are not in the same block in $G[V(A)\cup X_2]\cup P$ are not in the same block $G[V(D)]\cup P$. Therefore $G[V(D)]\cup P$ is also not $2$VC, a contradiction. 
        
        So Line 4 of Algorithm~\ref{alg:phase2} can find such a $v$. Moreover, $v$ can be found in polynomial time.
        
        Now we show that we can find edges $e_1$ and $e_2$ as indicated by line $5$ of Algorithm~\ref{alg:phase2}. Assume for contradiction this is not the case. Consider any $x\neq y\in V(A) \cup X_2$,  such that $vx$ and $vy$ are edges of $G$. Let $E_{xy}$ denote the edges of $(V(A)\cup X_2 \cup \{v\},P\cup S_1\cup S_2)$ that belong to a path from $x$ to $y$ in $(V(A)\cup X_2 \cup \{v\},P\cup S_1\cup S_2)$. Then as $vx$ and $vy$ do not satisfy conditions of line $5$ of Algorithm~\ref{alg:phase2} then all edges of $E_{xy}$ are in a unique block $B_{xy}$ of $G[V(A)\cup X_2 \cup \{v\}]\cup P$. 
        
        There must exist a pair  $x',y'\in V(A) \cup X_2$, $x'\neq y'$, such that $vx'$ and $vy'$ are edges of $G$. Otherwise, every neighbour of $v$ is in a single block, thus $G[V(A)\cup X_2 \cup \{v\}]\cup P$ has the same number of blocks as $G[V(A)\cup X_2 ]\cup P$, which is a contradiction. 
        
        Therefore, there must exist $x,y,x',y' \in V(A) \cup X_2$ such that $B_{xy}\neq B_{x'y'}$. Furthermore, we can assume that $y=y'$, since if $y\neq y'$, then $B_{xy'}$ can be equal to at most one of $B_{xy}$ and $B_{x'y}$, and without loss of generality we can assume that $B_{xy} \neq B_{x'y}$. There is a path in $B_{xy}$ from $x$ to $y$ with at least one edge $e$, and a path $B_{x'y}$ from $x'$ to $y$ with at least on edge $e'$. Therefore, there is a cycle in $G[V(A)\cup X_2]\cup P \cup\{vx, vx'\}$ that contains $e$ and $e'$, and by Lemma~\ref{lem:blockbound}, $e$ and $e'$ are in the same block, contradicting our assumption.

        Note that as $A$ is the only large connected component of $(V(D),P\cup S_1)$, then the vertices of $A\cup X_2$ is the vertex set of the only connected component of $(V(D),P\cup S_1\cup S_2)$.

        Finally using Lemma~\ref{lem:reduceblocks} while there is an edge $e_3 \in E\backslash S$ such that $( V(A)\cup X_2, P\cup S_1 \cup S_2 \cup\{e_3\} )$ that has fewer blocks than $( V(A)\cup X_2, P\cup S_1 \cup S_2 )$, then it can be found in polynomial time. 

        Therefore we have one connected component that is $2$VC. Notice that all the vertices not in this component, i.e. the set $X_3$, are isolated vertices in this graph and hence we have $\alpha'$ small components. 

        Observe that the while loop on line 3 iterates at most $|V(D)|$, since in every iteration of this loop we increase the size of $X_2$ and since $X_2\subseteq V(D)$. Furthermore at every iteration of the while loop on line 8, we increase the size of $S_2$ by adding a new edge of $E(G)$. Thus we will have at most $|E(G)|$ iteration of this while loop. Therefore the algorithm runs in poly-time.
    \end{proof}

    \begin{claim}\label{claim:SecondPhaseEdges}
         $|S_2|\le |X_2|+|V(D)|-\alpha- \alpha_{large}$.
    \end{claim}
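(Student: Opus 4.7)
The plan is to track, throughout Algorithm~\ref{alg:phase2}, the number of blocks of the graph $H := (V(A)\cup X_2, P\cup S_1\cup S_2)$, and amortize the edges added to $S_2$ against the drops in the block count.

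Initially $X_2 = S_2 = \emptyset$, so $H = (V(A), P\cup S_1)$. Since $A$ is the unique large component of $(V(D), P \cup S_1)$, every pseudo-edge of $P$ has both endpoints inside $V(A)$; hence the initial block count $B_0$ of $H$ equals the block count of $(V(D), P\cup S_1)$, which by claim~(1) of Observation~\ref{obs:goodcycleanalysis} is at most $|V(D)| - \alpha - \alpha_{large} + 1$. At termination, by Observation~\ref{obs:SecondPhaseComponentsAnd blocks}, the large component of $H$ is a single block, so the block count ends at $1$.

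The key step I would prove is that each iteration of the outer while-loop (Phase A, lines~3--7) reduces the block count of $H$ by at least $1$ while adding exactly $2$ edges to $S_2$, and each iteration of the inner while-loop (Phase B, lines~8--9) reduces it by at least $1$ while adding exactly $1$ edge. For Phase B this is immediate from the loop condition. For Phase A I would do a short case analysis on the two chosen neighbours $u, w$ of $v$: if $u$ and $w$ lie in two different connected components of $H$, then adding $v$ together with $vu, vw$ creates two new two-vertex blocks, \emph{increasing} the block count by~$2$; if $u$ and $w$ lie in the same block of $H$, adding $v$ just enlarges that block and leaves the count unchanged; and if $u, w$ lie in the same connected component but in distinct blocks, then $vu, vw$ together with the $u$-$w$ path in $H$ form a cycle that merges every block on that path together with $v$ into one block, dropping the count by at least~$1$. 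Since line~5 forces the block count to strictly decrease, only the third case is consistent with the algorithm's choice, and the drop is at least $1$.

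Writing $a = |X_2|$ for the number of Phase A iterations and $b$ for the number of Phase B iterations, the per-iteration analysis gives a total drop of at least $a + b$, while the overall drop equals $B_0 - 1$. Hence $a + b \le B_0 - 1$, and since $|S_2| = 2a + b$ we conclude
\[
|S_2| \;=\; a + (a + b) \;\le\; a + B_0 - 1 \;\le\; |X_2| + |V(D)| - \alpha - \alpha_{large},
\]
which is the claimed bound. The main (essentially only) obstacle is verifying the Phase A case analysis above; the remaining steps are bookkeeping and direct appeals to Observations~\ref{obs:goodcycleanalysis} and~\ref{obs:SecondPhaseComponentsAnd blocks}.
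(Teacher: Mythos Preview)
Your argument is correct and matches the paper's proof essentially line for line: both amortize $|S_2|$ against the drop in the block count of $(V(A)\cup X_2,\,P\cup S_1\cup S_2)$, using Observation~\ref{obs:goodcycleanalysis}(1) for the initial bound $B_0\le |V(D)|-\alpha-\alpha_{large}+1$ and Observation~\ref{obs:SecondPhaseComponentsAnd blocks} for the terminal value~$1$, and then counting $2$ edges per Phase~A iteration and $1$ per Phase~B iteration against at least one block-drop each. Your Phase~A case analysis is a bit more than the paper spells out (it simply cites the condition in line~5), but the logic and the final inequality $|S_2|=a+(a+b)\le |X_2|+(B_0-1)$ are identical.
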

    \begin{proof}
        We start with  $b$ blocks at the start of the Algorithm~\ref{alg:phase2}, where $b\le |V(D)|-\alpha-\alpha_{large} +1$ by Observation~\ref{obs:goodcycleanalysis}. We will consider what happens to the number of blocks of $V(A\cup X_2, P\cup S_1\cup S_2)$ as we add each edge. Let us partition the edges of $S_2$ into $F_1,...,F_{|X_2|},F'$, where  $F'$ is the edges added by line of Algorithm~\ref{alg:phase2} and $F_i$ is the pair of edges added as we add the $i$-th vertex of $X_2$ by line 7 of the algorithm. 
        In that case, every time we add an edge of $F'$, we decrease the number of blocks of $V(A\cup X_2, P\cup S_1\cup S_2)$ by at least one. Furthermore the edges of $F_i$ decrease the number of blocks by at least one. As we have precisely one block in the end, then the total number of edges is:
        $$
            |S_2|=\sum_{i=1}^{|X_2|}|F_i|+|F'|\le (b-1)+|X_2|\le |X_2|+|V(D)|-\alpha-\alpha_{large}.
        $$
    \end{proof}

    By construction in Algorithm~\ref{alg:phase3} we add $|S_3|=\alpha'_{1}+2\alpha'_{2}$ edges. Hence $|S_1|+|S_2|+|S_3|$ is upper-bounded by: 
    \begin{align*}
        &(2\alpha_{large} +\frac{3}{2}|X_1|-2)+(|X_2|+|V(D)|-\alpha-\alpha_{large} )+(\alpha'_1+2\alpha'_2)\\
        =&|V(D)|+\alpha_{large} +|X_2|+\frac{3}{2}|X_1|-2-\alpha+2\alpha'-\alpha'_1 \\
        =&|V(D)|+\frac{|X_1|}{2}+(\alpha_{large} +|X_2|+|X_1|+\alpha')-2-\alpha+\alpha'-\alpha'_1 \\
        =&|V(D)|+\frac{|X_1|}{2}-2+\alpha'-\alpha'_1.
    \end{align*}

    Therefore the total number of edges that we use is $|S_1|+|S_2|+|S_3| +|S_P|$, which is at most:
    \begin{align*}
        &|S_P| + |V(D)|+\frac{|X_1|}{2}+\alpha'-\alpha'_1-2 \\
        =& |S_P| +|V(D)|-2-\alpha'_1+\alpha'+ \frac{\alpha-\alpha'-\alpha_{large} -|X_2|}{2} \\
        \le& |V(D)|+|S_P|-2+\alpha -\frac{\alpha-\alpha'}{2} -\frac{\alpha_{large}}{2}-\alpha'_1.
    \end{align*}
    Furthermore as $(V(D)\setminus X_3, S_1\cup S_2\cup P)$ is $2$VC, then $(V(D), S_1\cup S_2\cup S_3\cup P)$ is connected and has no unsafe vertex that is a cut-vertex.
    Finally we show that $S_1\cup S_2\cup S_3\cup S_P$ is feasible. 
    Let us start by showing that  $H:=(V,S_1\cup S_2\cup S_3\cup S_P)$ is connected. Note that by our choice of $S_P$, every vertex $v$ in $K:=K_{1,1}\cup K_{1,2}\cup K_{2,1}\cup K_{2,2}$ is either an endpoint of an edge $uv \in S_P$ such that $u\in V(D)$ or is incident on edges $wv',v'v\in S_P$ such that $w\in V(D)$ and $v'\in K$. 
    
    Thus, it suffices to show that any two vertices in $V(D)$ are in the same connected component of $H$. Now, $(V(D), S_1\cup S_2\cup S_3\cup P)$ is connected, and every edge $xy\in P$, $S_P$ contains a path from $x$ to $y$ then all the vertices of $V(D)$ are in the same connected component of $H$ and hence $H$ is connected.
    
    Now we show that $H$ has no unsafe cut-vertex. Assume this is not true and such a cut-vertex $u$ exists. If $u\in K_{1,1}$, it's a leaf of $H$ adjacent to a safe vertex, thus, it cannot be a cut-vertex. If $u \in K_{2,2}$, since $u$ is a cut-vertex, then the degree of $u$ is at least 2 in $H$, then by our choice of $S_P$, $u$ is safe. 

    If $u\in K_{1,2}$, the degree of $u$ in $H$ is two, thus $H\backslash\{u\}$ has two components. Let $w,z \in V$ be the vertices $u$ is adjacent to. Observe by definition of $K_{1,2}$ that $w,z\in V(D)$. Since $u$ is a cut vertex, there is no path in $H\backslash\{u\}$ from $w$ to $z$. (Note that this implies that there is exactly one copy of the edge $wz$ in $P$).
    Therefore, there is no path from $w$ to $z$ in $(V(D), S_1\cup S_2\cup S_3\cup P\setminus\{wz\})$. So at least one of $w$ and $z$ are cut-vertices of $(V(D), S_1\cup S_2\cup S_3\cup P)$. But since $u\in K_{1,2}$, $u$ is not adjacent to any safe vertices, therefore, $w$ and $z$ are unsafe, contradicting our conclusion that $(V(D), S_1\cup S_2\cup S_3\cup P)$ has no unsafe cut-vertices.

    If $u\in K_{2,3}$, the degree of $u$ in $H$ is at least 2. There are two cases to consider by our choice of $S_P$: 
    (1) $u$ is degree 3, and 
    (2) $u$ is degree 2. If $u$ is degree 3 then by our choice of $S_P$, $u$ is safe vertex, which is a contradiction. Therefore, we assume $u$ is degree 2. 
    
    Let $v$ be the unique vertex in $K_{2,3}$ such that $uv\in E$.
    If $uv\notin S_P$, then there are vertices $u_1,u_2\in V(D)$ such that $uu_1,uu_2\in S_P$, and a safe vertex $v_1\in V(D)$ such that $vv_1\in S_P$. This follows a similar argument to the previous case, where $u\in K_{1,2}$.
    
    Lastly, if $uv\in S_P$, there exist vertices $u',v'\in V(D)$ such that $uu',vv'\in S_P$. Since $u$ is a cut-vertex of $H$, there is no path from $u'$ to $v'$ in $H\setminus \{u\}$. Again this implies that there is exactly one copy of the edge $u'v'$ in $P$.
    Therefore there is no path from $u'$ to $v'$ in $(V(D), S_1\cup S_2\cup S_3\cup P\setminus\{u'v'\})$.  Recall that by Observation~\ref{obs:SecondPhaseComponentsAnd blocks}, $(V(D),S_1\cup S_2 \cup P)$ has a unique large connected component that has exactly one  block $B$. Furthermore as we assume $|K_{1,2}|+|K_{2,3}|> 2$ by Lemma~\ref{lem:5/3apx}, then $|P|=|K_{1,2}|+\frac{1}{2}|K_{2,3}|>1$. Therefore, as $B$ is a block that has at least two edges (as $|P|\ge 2$), there is a path from $u'$ to $v'$ in $(V(D), S_1\cup S_2\cup S_3\cup P\setminus\{u'v'\})$, a contradiction. 
    
    


    Therefore we can assume that $u\in V(D)$. Now let $H'$ be the graph obtained from $H$ by removing $u$ and let $C$ be a connected components of $H'$. Observe that by our choice of $S_P$ if $V(C)\cap V(D)=\emptyset$, then $u$ must be a safe vertex, a contradiction. Therefore $V(C)$ has a vertex of $D$. Now as $u$ is a cut-vertex of $H$, then by the above argument there must be two vertices $v$ and $v'$ of $D$ such that there is no path in $H'$ from $v$ to $v'$. Let $R$ be a path from $v$ to $v'$ in $(V(D),S_1\cup S_2 \cup P)$ that does not contain $u$. Now we obtain a path $R'$  from $v$ to $v'$ using $R$, which is a contradiction. We start by setting $R':=R$. Then for any pseudo-edge $xy$ in $R'$ we remove $xy$ from $R'$ and add the unique path of length maximum three from $x$ to $y$ that contains edges of $S_P$. 
\end{proof}

\subsection{Proof of Lemma~\ref{lem:FVC3lowerbounds}}
\label{sec:FVC3lowerbounds}
\begin{proof}
    We are given graph $G=(V,E)$, for every subset of edges $F\subseteq \binom{V}{2}$, we let $\kappa(F)$ denote the number of connected components of graph $(V,F)$.
\begin{itemize}
    \item We start by showing $|OPT|\ge |K_{1,2}|+2|K_{1,2}|+|K_{2,2}|+\frac{3}{2}|K_{2,2}|$.\\ 
    
    Consider the set of edges $E^*\subseteq OPT$ that have an endpoint in $K_{1,2} \cup K_{2,3}$.
    We show the following useful claim.
    \begin{claim}
    \label{lem:eprimebound}
        $(V,E^*)$ has at least $\alpha + 2K_{1,2}+\frac{3}{2}K_{2,3} - |E^*| + K_{2,2} + K_{1,1}$ many connected components.
    \end{claim}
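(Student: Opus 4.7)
The plan is to lower bound $\kappa(V,E^*)$ via a cyclomatic-number argument leveraging the optimality of $P$. First I would observe that every vertex in $K_{1,1}\cup K_{2,2}$ is isolated in $(V,E^*)$: by Lemma~\ref{lem:matching}, $G[V\setminus V(D)]$ is a matching, so no vertex of $K_{1,1}\cup K_{2,2}$ is adjacent in $G$ to any vertex of $K_{1,2}\cup K_{2,3}$, while every edge of $E^*$ has an endpoint in $K_{1,2}\cup K_{2,3}$ by definition. Letting $\mu^*$ denote the cyclomatic number of $(V,E^*)$ and using the standard identity $\kappa(V,E^*)=n-|E^*|+\mu^*$ together with the partition $n=|V(D)|+|K_{1,1}|+|K_{1,2}|+|K_{2,2}|+|K_{2,3}|$, the claim reduces algebraically to
\[
\mu^* \;\geq\; \alpha + |K_{1,2}| + \tfrac{1}{2}|K_{2,3}| - |V(D)|.
\]
Since $|P|=|K_{1,2}|+\tfrac{1}{2}|K_{2,3}|$ (one pseudo-edge per color), the right-hand side equals the cyclomatic number $\mu_P$ of $(V(D),P)$, so it suffices to show $\mu^*\geq\mu_P$.

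Next, I would extract from $E^*$ a pseudo-edge set $\hat P\subseteq\tilde E$ with exactly one valid pseudo-edge per color, chosen so that each pseudo-edge is the ``contraction'' of a specific path in $E^*$ whose internal vertices lie entirely in $K_{1,2}\cup K_{2,3}$. For each $v\in K_{1,2}$, any two of its $\geq 2$ edges in $E^*$ yield such a pseudo-edge of color $c_v$, supported on the length-$2$ path $v_1 v v_2$. For each $K_{2,3}$ pair $\{u,v\}$ with $\geq 3$ edges of $E^*$ incident, a case analysis on whether $uv\in E^*$, the safe/unsafe status of $u$ and $v$, and their $V(D)$-neighborhoods produces a valid pseudo-edge of color $c_{uv}$ realized by a length-$2$ or length-$3$ path in $E^*$; the degenerate configurations where no such pseudo-edge exists are ruled out by the definition of $K_{2,3}$ (as the complement of $K_{2,2}$) combined with feasibility of $OPT$. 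Since $P$ is an optimal Rainbow Connection solution, $\kappa(V(D),\hat P)\geq\alpha$, and therefore $\mu_{\hat P}\geq\mu_P$.

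Finally, to prove $\mu^*\geq\mu_{\hat P}$ I would use the expansion map that replaces each pseudo-edge of $\hat P$ by its associated path in $E^*$. Since the internal vertices of distinct pseudo-edges of $\hat P$ are indexed by distinct colors and hence lie in disjoint singletons of $K_{1,2}$ or disjoint pairs of $K_{2,3}$, the expansion extends linearly to the cycle spaces and sends any simple cycle of $(V(D),\hat P)$ to a simple cycle of $(V,E^*)$; injectivity on nonzero elements yields $\mu^*\geq\mu_{\hat P}$. Chaining the inequalities gives $\mu^*\geq\mu_{\hat P}\geq\mu_P$, which combined with the reduction above proves the claim. The main obstacle is the case analysis in the construction of $\hat P$, especially for $K_{2,3}$ pairs whose three incident edges of $E^*$ are concentrated on a single vertex of the pair or share a common $V(D)$-neighbor; these are precisely the configurations one must eliminate using the defining property of $K_{2,3}$.
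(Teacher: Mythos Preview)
Your cyclomatic-number reformulation is correct and elegant: the identity $\kappa(V,E^*)=n-|E^*|+\mu^*$ together with $n=|V(D)|+|K|$ and $|P|=|K_{1,2}|+\tfrac12|K_{2,3}|$ does reduce the claim exactly to $\mu^*\ge\mu_P$, and the step $\mu_{\hat P}\ge\mu_P$ via optimality of $P$ is fine. This is essentially the paper's argument recast in terms of cycle-space dimensions rather than direct component counting.

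However, the assertion that for every $K_{2,3}$ pair one can always extract a pseudo-edge of colour $c_{uv}$ realized by a length-$2$ or length-$3$ path entirely inside $E^*$ is not true, so your expansion map (and hence the inequality $\mu^*\ge\mu_{\hat P}$) is not established. The bad configuration is: $uv\notin OPT$, neither $u$ nor $v$ is safe, and neither is adjacent in $G$ to a safe vertex of $V(D)$. This is \emph{not} ruled out by the definition of $K_{2,3}$ (it only excludes ``both adjacent to safe'' and ``one safe and adjacent to safe''). In this situation the only pseudo-edges of colour $c_{uv}$ in $\tilde E$ are of type (1), i.e.\ $u'v'$ with $uu',vv'\in E$; their natural realization $u'\,u\,v\,v'$ needs the edge $uv$, which is absent from $E^*$. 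No length-$2$ realization through $u$ alone (or $v$ alone) is available either, because the corresponding pseudo-edge $u_1u_2$ would have to be of type (2) or (3), both of which are excluded here. Consequently the expanded graph is not a subgraph of $(V,E^*)$, and the cycle-space injection argument breaks.

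The paper's proof faces exactly this obstruction and resolves it by a genuinely different, non-local step: it \emph{adds} the edge $uv$ to the auxiliary set $E_1^*$ even though $uv\notin E^*$, and then bounds the resulting change in $\kappa$ (equivalently in $\mu$) by splitting into two sub-cases according to whether $u$ and $v$ are already connected in $E_1^*$ through the edges of \emph{other} colours. That global connectivity information is what makes the bookkeeping go through; your purely local ``one path per colour'' scheme cannot see it. Your proposal would be completed by reproducing this modification step in the cyclomatic language, but as written it has a real gap.
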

    \begin{proof}

    
        The primary tool for proving this claim is the fact that any subset of pseudo-edges of $\tilde E$ that is a feasible solution to the Maximum Rainbow Connection problem, has at least $\alpha$ connected components.
        To use this fact we first use $E^*$ to create a set $E_1^*\subseteq \binom{V}{2}$, and a set of pseudo-edges. We will guarantee that $|E_1^*| = 2|K_{1,2}|+\frac{3}{2}|K_{2,3}|$ and $\kappa(E_1^*)-\kappa(E^*)\le |E^*|-|E_1^*|$, and every pseudo-edge of $P'$ has a unique colour. We will use $P'$, and the above fact to show that $\kappa(E_1^*)\le \alpha+|K_{1,1}|+|K_{2,3}|$, and then show the claim holds.


        Note, every time we decrease $|E_1^*|$ by $k$, $\kappa(E_1^*)$ increases by at most $k$. This is clear since every time we remove an edge from $E_1^*$, we increase $\kappa(E_1^*)$ by at most one.
        
        We begin $P'\leftarrow \emptyset$ and $E_1^* \leftarrow E^*$. 
        For every $v\in K_{1,2}$, we remove all but two edges incident to $v$ from $E_1^*$. 
        For every $uv\in E(G[K_{2,3}])$, if there are at least three edges between $u$ (or $v$) and $V(D)$, but $uv\notin E_1^*$, then we remove all but two of these edges, namely $vv_1,vv_2$ from $E_1^*$. Thus, there exists a pseudo-edge $v_1v_2$ in $\tilde E$ of colour $c_v$. We add pseudo-edge $v_1v_2$ to $P'$. 
        

        For every $u,v\in K_{2,3}$ such that $uv\in E$, let $E_{uv}$ be the set of edges of $E^*$ that have at least one endpoint in $\{u,v\}$. We have one of the following cases:
        \begin{itemize}
            \item If $uv\in E^*$;
            We distinguish two sub-cases:\\
            \textbf{Case:} there exists three edges of $E_{uv}$ that forms a path $R=u'uvv'$, then we remove all the edges of $E_{uv}\setminus R$. We add the pseudo-edge $u'v'\in \tilde{E}$ with colour $c_{uv}$ to $P'$.

            \textbf{Case:} Assume there is no such path in $E_{uv}$. Recall $OPT$ is a feasible solution, and, by definition of $K_{2,3}$ at least one of $u$ or $v$ does not have a safe neighbor in $V(D)$. Then, (w.l.o.g.) $u$ is not incident on any edge of $E_{uv}\setminus\{uv\}$. By feasibility of $OPT$, $v$ must be safe and $E_{uv}$ must have at least two edges $vv_1$ and $vv_2$ such that $v_1,v_2\in V(D)$. We remove all edges of $E_{uv}\setminus \{uv,vv_1,vv_1\}$ from $E^*_1$. Finally we observe that by construction $\tilde{E}$ has an edge $v_1v_2$ with colour $c_{uv}$. We add this pseudo-edge to $P'$.
            
            \item If $uv\notin E^*$; Again we distinguish following sub-cases:\\ 
            \textbf{Case:} If one of $u$ or $v$ is incident on exactly one of the edges of $E_{uv}$, w.l.o.g. let $uu'$ be that edge. Since $OPT$ is a feasible FVC solution, $u'$ must be safe. Furthermore as $uv\in E$ and $u,v\in K_{2,3}$, then the other endpoint of the edges of $E_{uv}$ that are incident on $v$ must be unsafe. Hence there are at least two edges incident on $v$. We fix edges $vv_1,vv_2\in E_{uv}$. Now we remove all the edges of $E_{uv}$ from $E_1^*$ except for $\{uu',vv_1vv_2\}$. Finally observe that by construction $\tilde{E}$ has the pseudo-edge $v_1v_2$ with colour $c_{uv}$, which we  add to $P'$.

            \textbf{Case:} Now assume each of $u$ and $v$ is incident to at least two edges of $E_{uv}$.  If there is a path from $u$ to $v$ that only contains edges of $E_1^*$, let $uu_1$ and $vv_1$ be the first and last edge of this path. We update $E_1^*$ by removing all the edges of $E_{uv}$ from it except for two edge $uu_2,vv_2\in E_{uv}$ such that $u_2\neq u_1$ and $v_2\neq u_2$. Then we add $uv$ to $E_1^*$. Note that by this operation we decrease the number of edges of $E_1^*$ by $|E_{uv}|-3$ units and the number of connected components of $(V,E_1^*)$ increases by at most $|E_{uv}|-3$. Furthermore we have a  pseudo-edge $u_2v_2$ of colour $c_{uv}$ in $\tilde{E}$ by construction, we add $u_2v_2$ to $P'$. 
            
            If such a path does not exist then we update $E_1^*$ by removing all the edges of $E_{uv}$ from it except for the two edges $uu_1,vv_1\in E_{uv}$. Then we add the edge $uv$ to $E_1^*$. Observe that $|E_1^*|$ decreased by $|E_{uv}|-3$ and  the number of connected components of $(V,E_1^*)$ increases by at most $|E_{uv}|-3$. This time we add the pseudo-edge $u_1v_1$ of colour $c_{uv}$ to $P'$.

        \end{itemize}
        Observe that by construction $|E_1^*|=2|K_{1,2}|+\frac{3}{2}|K_{2,3}|$. Furthermore by the above argument: 
        $$
            \kappa(E_1^*)-\kappa(E^*)\le |E^*|-|E_1^*|.
        $$
        Furthermore, by construction for every pseudo-edge $p=xy\in P'$ of colour $c_v$, the edges $vx$ and $vy$ are in $E_1^*$, and for every pseudo-edge $p=xy \in  P'$ of colour $c_{uv}$, there exists a path consisting only of edges of $E_1^*$ from $x$ to $y$ with inner vertices in $\{u,v\}$. 
        Therefore, any two vertices in $D$ that are in the same connected component of $(V,E_1^*)$ are also in the same connected component of $(V,P')$. Moreover any vertex $v\in K_{1,2}\cup K_{2,3}$ has a path to at least one vertex of $(V,E_1^*)$, therefore:
        $$
            \kappa(P')=\kappa(E_1^*)-(|K_{1,2}|+|K_{1,2}|).
        $$ 
        Altogether, we have:
        \begin{align*}
            &\kappa(E^*)
            \ge \kappa(E_1^*)-|E^*|+|E_1^*|
            \ge \kappa(P')-(|K_{1,2}|+|K_{1,2}|)-|E^*|+|E_1^*|\\
            =& \kappa(P')+(|K_{1,2}|+\frac{1}{2}|K_{1,2}|)-|E^*|\\
            \ge& \alpha+|K_{1,1}|+|K_{1,2}|+|K_{2,2}|+|K_{2,3}|+(|K_{1,2}|+\frac{1}{2}|K_{1,2}|)-|E^*|.
        \end{align*}
    \end{proof}
    By Claim~\ref{lem:eprimebound}, as $OPT$ is connected, then 
    \begin{align*}
        &|OPT| - |E^*| \geq \alpha + 2K_{1,2}+\frac{3}{2}K_{2,3} - |E^*| + K_{2,2} + K_{1,1} - 1,\\
         \Rightarrow &|OPT| \geq \alpha + 2K_{1,2}+\frac{3}{2}K_{2,3} + K_{2,2} + K_{1,1} - 1.
     \end{align*}
    
    \item $|OPT| \geq n$ follows by our assumption from applying Lemma~\ref{lem:optnottree}.
    
    \item
    It remains to show that $|OPT|\ge 2K_{1,2}-2\alpha_{large}+ \alpha_{1}' + 2\alpha_{2}'$.
    
    Recall that in Algorithm~\ref{alg:phase3}, we compute a subset of singletons in $(V(D), P)$, which we denote by $X_3$, and that $|X_3| = \alpha_1' + \alpha_2'$. Let $K_{1,2}'\subseteq K_{1,2}$ be subset of all vertices in $K_{1,2}$  that have at least one neighbor in $X_3$. Consider $I\coloneqq K_{1,2} \cup X_3\setminus K_{1,2}'$. As $X_3$ and $K_{1,2}$ are independent sets, by definition of $K_{1,2}'$, we see that $I$ is independent.
    
    In any feasible solution, for any $v\in I$, $v$ either has a safe neighbor, or at least two neighbors. Note, there are precisely $\alpha'_1$ vertices in $I$ that have a safe neighbor in $G$, since no vertices of $K_{1,2}$ are adjacent to safe vertices, by definition. Thus, we see 
    $|OPT|\ge 2|I| - \alpha'_1 = 2|K_{1,2}| + \alpha'_1 + 2\alpha'_2 -2|K_{1,2}'|$.

    Thus it suffices to prove that $|K_{1,2}'|\le \alpha_{large}$. If $|K_{1,2}'| = 0$, we are done. We assume that $|K_{1,2}'| > 0$. 
    Consider a vertex $v'\in K_{1,2}'$ with neighbour $u\in X_3$. Let $P$ be the set of pseudo-edges computed by Algorithm~\ref{alg:phase1}. In $(V(D), P)$, the degree of $u$ is zero (recall that $X_1\cup X_2 \cup X_3$ were isolated vertices in $(V(D),P)$). 
    
    Note that $P$ has a pseudo-edge $xy$ with colour $c_{v'}$. Observe that as $x$ and $y$ are neighbors of $v'$, then we have the option to replace pseudo-edge $xy$ with either $xu$ or $uy$ (as both will exist since pseudo-edges whose  colour is represented by a vertex in $K_{1,2}$ form a clique). 
    Note that this would not change $\kappa(P)$, however if $x$ (or $y$) has degree two or more in $P$, then replacing $xy$ with $uy$ (or $ux$) decreases the number of singletons, a contradiction to the definition of $P$ due to Lemma~\ref{lem:RainbowForest}.
    Therefore as such a swap is not possible, then $x$ and $y$ are of degree one in $(V(D),P)$ (i.e. $x$ and $y$ are only incident on the edge $xy$ in $P$). Therefore $x$ and $y$ form a component of size $2$ in $P$ that is not a singleton (and hence large). We associate this component to $v'$, as the unique pseudo-edge of this component has color $c_{v'}$. Therefore every vertex in $K_{1,2}'$ has a unique large connected component of $P$ associated to it and thus $|K_{1,2}'| \leq \alpha_{large}$.

\end{itemize}
\end{proof}

\subsection{Proof of Lemma~\ref{lem:11/7}}
\label{sec:11/7}


The following Lemma will be a useful tool for proving our approximation factor. 
\begin{lemma}
\label{lem:nextlemma}
   $$S\coloneqq \frac{\min \{\frac{4|V(D)|-4}{3} + |S_P|, |S_P| + |V(D)| - 2 + \frac{ 3\alpha}{4} \}}{\max\{ |S_P| + \alpha - 1, |V(D)|+|K| \}}\le \frac{11}{7},$$
   where $\alpha,|V(D)|,|K|$ and $|S_P|$ are real numbers such that $0\le \alpha \le |V(D)|$, $1\le |K|\le |S_P| \le 2|K|$. 
\end{lemma}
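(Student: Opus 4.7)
The plan is to prove this inequality via a convex-combination argument: I will upper-bound the numerator by a weighted average of $A_1 := \frac{4(|V(D)|-1)}{3} + |S_P|$ and $A_2 := |S_P| + |V(D)| - 2 + \frac{3\alpha}{4}$ (since $\min\le$ any convex combination), and lower-bound the denominator by a weighted average of $B_1 := |S_P| + \alpha - 1$ and $B_2 := |V(D)| + |K|$ (since $\max\ge$ any convex combination). The only structural ingredient beyond this will be the constraint $|S_P|\le 2|K|$, equivalently $|K|\ge |S_P|/2$; the other bounds on $\alpha$ and $|K|$ will not be needed.

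The first step is to identify the right weights by coefficient-matching. Using weights $\tfrac{3}{7}$ on $A_1$ and $\tfrac{4}{7}$ on $A_2$, the $|V(D)|$-coefficient becomes $\tfrac{3}{7}\cdot\tfrac{4}{3}+\tfrac{4}{7}=\tfrac{8}{7}$, the $\alpha$-coefficient becomes $\tfrac{4}{7}\cdot\tfrac{3}{4}=\tfrac{3}{7}$, and the $|S_P|$-coefficient is $1$, so that
\[
\min\{A_1,A_2\}\;\le\;\tfrac{3}{7}A_1 + \tfrac{4}{7}A_2 \;=\; \tfrac{1}{7}\bigl(8|V(D)| + 7|S_P| + 3\alpha - 12\bigr).
\]

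For the denominator, I would use weights $\tfrac{3}{11}$ on $B_1$ and $\tfrac{8}{11}$ on $B_2$. After multiplying by $\tfrac{11}{7}$ this gives $\tfrac{1}{7}\bigl(3|S_P| + 3\alpha - 3 + 8|V(D)| + 8|K|\bigr)$. Invoking $|S_P|\le 2|K|$, i.e.\ $8|K|\ge 4|S_P|$, we obtain
\[
\tfrac{11}{7}\max\{B_1,B_2\}\;\ge\;\tfrac{1}{7}\bigl(8|V(D)| + 7|S_P| + 3\alpha - 3\bigr).
\]
Comparing the two displayed expressions, everything matches except for the constant term, and the inequality reduces to $-12\le -3$, which is immediate.

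The only real obstacle is finding the correct weights, but they are essentially forced: matching the $|V(D)|$-coefficients on the two sides forces the denominator weight $\tfrac{8}{11}$ on $B_2$ (so that $11\cdot\tfrac{8}{11}=8$), and then matching the $\alpha$-coefficients forces the numerator weight $\tfrac{4}{7}$ on $A_2$; the bound $|S_P|\le 2|K|$ is exactly what one needs to convert the extra $|K|$-contribution in the denominator into a matching $|S_P|$-contribution, after which the $|S_P|$-coefficients also align perfectly.
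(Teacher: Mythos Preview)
Your proof is correct and considerably cleaner than the paper's. The paper establishes the same inequality via a lengthy six-case analysis: it first treats the boundary situations $\alpha=0$ and $\alpha=|V(D)|$, then the case where both the two numerator terms and the two denominator terms coincide, and finally reduces the remaining regions to these base cases by perturbing $\alpha$ or $|V(D)|$ so as to equalize either the two upper bounds or the two lower bounds. Each case involves its own algebraic manipulations and sub-cases.

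Your convex-combination argument bypasses all of this: by taking $\tfrac{3}{7}A_1+\tfrac{4}{7}A_2$ and $\tfrac{3}{11}B_1+\tfrac{8}{11}B_2$ and invoking $|S_P|\le 2|K|$, the coefficients of $|V(D)|$, $|S_P|$, and $\alpha$ match exactly, and the inequality collapses to $-12\le -3$. This is genuinely different in structure—no case distinctions, no perturbation arguments—and has the further advantage of making transparent which constraints are actually used (only $|S_P|\le 2|K|$; the bounds $0\le\alpha\le|V(D)|$ and $|K|\le|S_P|$ are not needed, beyond ensuring the denominator is positive via $|K|\ge 1$). The paper's approach, by contrast, reveals more about where the ratio is tight, which could be useful if one wanted to sharpen the constant, but for the stated lemma your argument is both shorter and more illuminating.
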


\begin{proof}
    First suppose that $|S_P| \leq \frac{11}{7}|K|$, then $\frac{\frac{4|V(D)|-3}{3}+|S_P|}{|K|+|V(D)|}\le \frac{11}{7}$ and the claim holds.

    Therefore, we can assume that $|S_P| > \frac{11}{7}|K|$. We consider cases for the values of $|V(D)|,|K|,|S_P|,\alpha$ maximizing $S$. For simplicity we define the following $UB_1 \coloneqq \frac{4|V(D)|-4}{3} + |S_P| $, $UB_2 \coloneqq |S_P| + |V(D)| - 3 + \frac{ 3\alpha}{4}$, $LB_1 \coloneqq |S_P| + \alpha - 1$, and $LB_2 \coloneqq |V(D)|+|K|$. Let $r \coloneqq |S_P| / |K|$. Observe that $\frac{11}{7} < r \leq 2$.
    \begin{enumerate}
        \item \textbf{Case:} First consider the case that $LB_1=LB_2$ and $UB_1=UB_2$. Since $UB_1 = UB_2$, we can conclude that $|V(D)| = 2 + \frac{9}{4}\alpha$. Since $LB_1 = LB_2$, we can conclude that $|K|  = \frac{5\alpha + 12}{4(r-1)}$. Therefore,
        \begin{align*}
            S 
            &= \frac{UB_1}{LB_2} 
            = \frac{ \frac{4}{3}( \frac{9}{4}\alpha + 1) + r(\frac{5\alpha + 12}{4(r-1)})}{2 + \frac{9}{4}\alpha + \frac{5\alpha + 12}{4(r-1)} } 
            \leq \max \left\{ \frac{3\alpha  + \frac{5\alpha r}{4(r-1)}}{\frac{9}{4}\alpha + \frac{5\alpha}{4(r-1)}}, \frac{\frac{4}{3} + \frac{3}{r-1}}{2 + \frac{3}{r-1}} \right\}\\
            &\leq \max \left\{ \frac{12(r-1) + 5r}{9(r-1) + 5}, \frac{\frac{4}{3}(r-1) + 3r}{2r+1} \right\}
            = \max \left\{ \frac{17r-12}{9r-4}, \frac{\frac{13}{3}r - \frac{4}{3}}{2r+1}\right\}\\
            &\leq \max \left\{ \frac{34-12}{18-4}, \frac{\frac{26}{3} - \frac{4}{3}}{5}\right\} \leq \frac{11}{7},
        \end{align*}
        where the second to last inequality follows as $\frac{7}{11}\le r\le 2$, and both $\frac{17r-12}{9r-4}$ and $ \frac{\frac{13}{3}r - \frac{4}{3}}{2r+1}$  are increasing functions in terms of $r$.
        \item \textbf{Case:} If $\alpha=0$, then:
        \begin{align*}
            S= \frac{\min\{\frac{4|V(D)|-4}{3}+|S_P|,|S_P|+|V(D)|-2 \}}{\max\{ |S_P|-1, |V(D)| + |K| \}} < \frac{|V(D)|+|S_P|-1}{\max\{|S_P|-1, |V(D)| + |K| \}},
        \end{align*}
        Thus, by applying the definition of $r$ we can see
        $$
            S < \frac{|V(D)| + r|K|-1}{\max \{ r|K|-1, |V(D)| + |K| \} }
            \le \frac{|V(D)|+r|K|-1}{ \frac{4r}{11}(|K|-1) + \frac{7}{11}(|V(D)|+|K|) }
        $$
        $$
            =\frac{|V(D)|+r|K|-1}{\frac{7}{11}|V(D)|+(\frac{4r}{11}+1)|K|-\frac{4}{11}}\le \frac{7}{11}.
        $$ Again, the last inequality follows as $\frac{7}{11}\le r\le 2$.    
        \item \textbf{Case:}  If $\alpha=|V(D)|$, then:
        \begin{align*}
            S= \frac{ \min\{ \frac{4|V(D)|-4}{3}+|S_P|,|S_P|+|V(D)|-2+\frac{3|V(D)|}{4} \}}{ \max \{ |S_P|+|V(D)|-1,|V(D)|+|K| \}} \le \frac{\frac{4|V(D)|-4}{3}+|S_P|}{|S_P|+|V(D)|-1}\le \frac{4}{3}.
        \end{align*}
        \item \textbf{Case:} If $LB_1=LB_2$ but $UB_1\neq APX$. Assume for the sake of contradiction that $S > \frac{11}{7}$, and consider cases for $UB_1>UB_2$ or not.
        
        \textbf{Case 4.1)} if $UB_1>UB_2$. We define $\Delta\coloneqq (UB_1-UB_2)/(\frac{7}{4}-\frac{4}{3}) = \frac{12(UB_1 - UB_2))}{5}$, $D': = |V(D)| + \Delta$, and $\alpha' \coloneqq  \alpha + \Delta$. Define 
        \begin{align*}
            S' &\coloneqq \frac{\min\{ \frac{4D' - 4}{3} + |S_P|, |S_P| + D' - 2 + \frac{ 3\alpha'}{4}\}}{\max\{ |S_P| + \alpha' - 1, D'+|K| \}}\\
            &= \frac{\min\{ \frac{4D - 4}{3} + |S_P| + \frac{4\Delta}{3}, |S_P| + D - 2 + \frac{ 3\alpha}{4} + \frac{7\Delta}{4}\} }{\max\{ LB_1  + \Delta ,LB_2 + \Delta\}}\\
            & = \frac{\min\{ UB_1 + \frac{4}{3}\Delta, UB_2  + \frac{7}{4}\Delta\}}{\max\{LB_1 + \Delta, LB_2 + \Delta\}}.
        \end{align*}
        By definition of $\Delta$, we have $ UB_1 + \frac{4}{3}\Delta =  UB_2  + \frac{7}{4}\Delta$, and by the assumption of Case (4) we have $LB_1 + \Delta = LB_2 + \Delta$. Therefore, we can apply Case (1) to see that $S' \leq \frac{11}{7}$.
        
        \begin{align*}
            \frac{11}{7} 
            &\geq S' 
            =\frac{\min\{ UB_1 + \frac{4}{3}\Delta, UB_2  + \frac{7}{4}\Delta\}}{\max\{LB_1 + \Delta, LB_2 + \Delta\} }
            = \frac{UB_2 + \frac{7}{4}\Delta}{LB_2 + \Delta} 
            \geq \min \left\{\frac{UB_2 }{LB_2}, \frac{\frac{7}{4}\Delta}{\Delta} \right\} \\
            &=\min \left\{S, \frac{7}{4} \right\} > \frac{11}{7}.
        \end{align*}
        And thus we find  a contradiction.
        
        \textbf{Case 4.2)} if $UB_1<UB_2$. We define $\Delta \coloneqq \min\{(UB_2 - UB_1) / (\frac{7}{4} - \frac{4}{3}), \alpha \}$,  $D' \coloneqq |V(D)| - \Delta$, and $\alpha' \coloneqq \alpha - \Delta$. Now define
        \begin{align*}
            S' \coloneqq \frac{\min\{ \frac{4D' - 4}{3} + |S_P|, |S_P| + D' - 2 + \frac{ 3\alpha'}{4}\}}{\max\{ |S_P| + \alpha' - 1, D'+|K| \}}
        \end{align*}
        If  $\alpha \leq (UB_2 - UB_1) / (\frac{7}{4} - \frac{4}{3}) $, then $\Delta = \alpha$, then $S'$ satisfies the conditions for Case (2), so $S' \leq \frac{11}{7}$. Now suppose that $\alpha > (UB_2 - UB_1) / (\frac{7}{4} - \frac{4}{3}) $.
        \begin{align*}
            &= \frac{\min\{ \frac{4D - 4}{3} + |S_P| - \frac{4\Delta}{3}, |S_P| + D - 2 + \frac{ 3\alpha}{4} - \frac{7\Delta}{4}\} }{\max\{ LB_1  - \Delta ,LB_2 - \Delta\}}\\
            & = \frac{\min\{ UB_1 - \frac{4}{3}\Delta, UB_2  - \frac{7}{4}\Delta\}}{\max\{LB_1 - \Delta, LB_2 - \Delta\}}.
        \end{align*}
        By construction, we have $UB_1 - \frac{4}{3}\Delta = UB_2  - \frac{7}{4}\Delta$. Thus, by Case (1), we have that $S' \leq \frac{11}{7}$. By a similar argument to Case (4), we have $\frac{11}{7} < S \leq S' \leq \frac{11}{7}$, which is a contradiction.
        
        \item \textbf{Case:} If $LB_1<LB_2$. Define $\alpha' = \alpha + \max\{|V(D)|-\alpha,LB_2-LB_1\}$. Define 
        \begin{align*}
            S' = \frac{\min \{UB_1, |S_P| + |V(D)| - 2 + \frac{ 3\alpha'}{4} \}}{\max\{ |S_P| + \alpha' - 1, LB_2\}}.
        \end{align*}
        If $|V(D)|-\alpha \leq LB_2-LB_1$, then $S' = \frac{\min\{UB_1, |S_P| + |V(D)| - 2 + \frac{ 3\alpha'}{4} \}}{\max\{LB_1 + LB_2 - LB_1, LB_2\}}$. So $S'$ satisfies the conditions of Case (4), and $S' \leq \frac{11}{7}$. However, by construction we have that $\frac{11}{7}< S \leq S' \leq \frac{11}{7}$. A contradiction.
    
        If $|V(D)|-\alpha > LB_2-LB_1$, then $S' = \frac{\min\{UB_1, |S_P| + |V(D)| - 2 + \frac{ 3|V(D)|}{4} \}}{\max\{ |S_P| + |V(D)| - 1, LB_2\}}$. In this case, we have $\alpha = |V(D)|$, so $S'$ satisfies the conditions of Case (3), so $S' \leq \frac{11}{7}$. Once again, we have $S \leq S'$, which is a contradiction.
        
            
        \item \textbf{Case:} If $LB_1>LB_2$. We define $D' = |V(D)| + LB_1 - LB_2$, and we define $S' \coloneqq \frac{\min \{\frac{4D'-4}{3} + |S_P|, |S_P| + D' - 2 + \frac{ 3\alpha}{4} \}}{\max\{ LB_1, D'+|K| \}} = \frac{\min \{\frac{4D'-4}{3} + |S_P|, |S_P| + D' - 2 + \frac{ 3\alpha}{4} \}}{\max\{LB_1, LB_2 + LB_1 - LB_2\}}$. $S'$ satisfies the conditions of Case (4), and clearly $S' \geq S$, thus we have $\frac{11}{7} < S \leq S' \leq \frac{11}{7}$.
    \end{enumerate}

\end{proof}

\begin{lemma}
\label{lem:11/7tool}
    $$
        S\coloneqq  \frac{ \min\{\frac{4|V(D)| - 4}{3} + |S_P|, |S_P| + |V(D)| - 2 + \alpha - x\}}{ \max \{ |S_P| + \alpha - 1, |V(D)| + |K|, 2K_{1,2} + 2\alpha - 4x\}} \le \frac{11}{7},
    $$
    where $\alpha,|V(D)|,|K|$ and $|S_P|$ are real numbers such that $0\le x\le \alpha \le |V(D)|$, and $1\le |K|\le |S_P| \le 2|K|$. 
\end{lemma}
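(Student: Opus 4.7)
The plan is to reduce the claim to Lemma~\ref{lem:nextlemma} via a case split on $x$.

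First, if $x \ge \alpha/4$, then $\alpha - x \le \frac{3\alpha}{4}$, so the numerator term $|S_P| + |V(D)| - 2 + \alpha - x$ is at most $|S_P| + |V(D)| - 2 + \frac{3\alpha}{4}$, matching the second numerator term of Lemma~\ref{lem:nextlemma}. Dropping the bound $LB_3 := 2|K_{1,2}| + 2\alpha - 4x$ from the maximum in the denominator only weakens the denominator, so in this regime
$$S \le \frac{\min\left\{\tfrac{4|V(D)|-4}{3}+|S_P|,\ |S_P|+|V(D)|-2+\tfrac{3\alpha}{4}\right\}}{\max\{|S_P|+\alpha-1,\ |V(D)|+|K|\}} \le \frac{11}{7}$$
directly by Lemma~\ref{lem:nextlemma}.

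For the remaining case $x < \alpha/4$, set $y := \alpha - x$; then $y > \tfrac{3\alpha}{4}$, whence $LB_3 = 2|K_{1,2}| + 4y - 2\alpha > 2|K_{1,2}| + \alpha$. Here the plan is to adapt the perturbation strategy from the proof of Lemma~\ref{lem:nextlemma}: identify the canonical tight configuration in which $UB_1 = UB_2$ and the active lower bounds are simultaneously tight, verify $S \le 11/7$ in that configuration by a direct algebraic computation, and argue that any non-tight configuration can be reduced to the tight one by local perturbations of $|V(D)|$, $\alpha$, or $x$ that do not decrease $S$. The key quantitative observation is that $LB_3$ grows linearly in $y$ with coefficient $4$, whereas $UB_2$ grows at rate $1$, so the ratio $UB_2 / LB_3$ is automatically controlled whenever $LB_3$ dominates; moreover, at the boundary $y = \tfrac{3\alpha}{4}$ this regime smoothly meets the first case above.

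The main obstacle is the expanded case analysis. Lemma~\ref{lem:nextlemma} required six sub-cases corresponding to the orderings of $(UB_1, UB_2)$ and $(LB_1, LB_2)$; here one additionally tracks the position of $LB_3$ among the other lower bounds, roughly doubling the number of sub-cases. The most delicate sub-case is when $LB_3$ strictly dominates both $LB_1$ and $LB_2$, where one needs a direct verification of
$$\frac{|S_P|+|V(D)|-2+y}{2|K_{1,2}|+4y-2\alpha} \le \frac{11}{7},$$
which should follow from the constraints $|K|\le|S_P|\le 2|K|$, $|V(D)| \ge \alpha \ge y$, and $y > \tfrac{3\alpha}{4}$, combined with a monotonicity argument in $y$.
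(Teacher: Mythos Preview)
Your handling of the regime $x \geq \alpha/4$ matches the paper exactly. For $x < \alpha/4$, however, your plan diverges from the paper's and leaves a genuine gap.

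The paper does not carry out an expanded case analysis over the three lower bounds. Instead, it perturbs $\alpha$ and $x$ to push the configuration back to one already covered by Lemma~\ref{lem:nextlemma}. Concretely, if $UB_2 > UB_1$ one decreases $\alpha$ by $\Delta := \min\{\alpha - 4x,\ UB_2 - UB_1\}$; since the numerator $\min(UB_1, UB_2) = UB_1$ is unaffected while each term in the denominator can only shrink, the ratio $S$ does not decrease, and the new point has either $x = \alpha'/4$ or $UB_1 = UB_2'$. The remaining sub-case $UB_2 \leq UB_1$ is handled by a simultaneous shift $\alpha \mapsto \alpha - \Delta$, $x \mapsto x + \Delta$ with $\Delta = (\alpha - 4x)/5$, landing exactly on $x' = \alpha'/4$. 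In both reductions $LB_3$ is simply dropped before invoking Lemma~\ref{lem:nextlemma}; it is never analysed as the binding constraint.

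Your proposal instead commits to tracking $LB_3$ through a doubled case analysis, and in the sub-case where $LB_3$ strictly dominates you assert that
\[
\frac{|S_P|+|V(D)|-2+y}{2|K_{1,2}|+4y-2\alpha} \;\le\; \frac{11}{7}
\]
``should follow'' from $|K|\le|S_P|\le 2|K|$, $|V(D)|\ge\alpha\ge y$, and $y>\tfrac{3\alpha}{4}$. But these constraints place no lower bound on $|K_{1,2}|$ and give no relation between $|V(D)|$, $|S_P|$ and the denominator; without also feeding in the dominance conditions $LB_3 \geq LB_1$ and $LB_3 \geq LB_2$ the displayed inequality is simply false (take $|K_{1,2}|=0$, $\alpha=4$, $y=3.5$, $|V(D)|=100$, $|K|=|S_P|=50$, giving a ratio above $25$). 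So this step is not yet a proof, and the paper's perturbation device sidesteps the difficulty entirely by never letting $LB_3$ become relevant.
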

\begin{proof}
    For notational simplicity we define the following terms $UB_1 = \frac{4|V(D)| - 4}{3} + |S_P|$, $UB_2 = |S_P| + |V(D)| - 2 + \alpha - x$, $LB_1 = |S_P| + \alpha - 1$, $LB_2 = |V(D)| + |K|$, and $LB_3 = 2K_{1,2} + 2\alpha - 4x$.

    \begin{enumerate}
        \item \textbf{Case:} if $x\geq \frac{\alpha}{4}$, then 
        \begin{align*}
            S \leq \frac{ \min\{\frac{4|V(D)| - 4}{3} + |S_P|, |S_P| + |V(D)| - 2 + \frac{3}{4}\alpha \}}{ \max \{ |S_P| + \alpha - 1, |V(D)| + |K|\}} \leq \frac{11}{7}.
        \end{align*}
        Where the first inequality follows by applying the Case assumption and the fact that $\max\{LB_1,LB_2,LB_3\} \geq \max\{LB_1,LB_2\}$ and the second inequality holds using Lemma~\ref{lem:nextlemma}.

        \item \textbf{Case:} If $UB_2\le UB_1$. Define $\Delta \coloneqq \frac{\alpha-4x}{5} > 0$, $x' = x + \Delta$, and $\alpha' = \alpha - \Delta$. Thus, $x' = \frac{\alpha'}{4}$. Furthermore, we can define
        \begin{align*}
            S' &\coloneqq \frac{ \min\{\frac{4|V(D)| - 4}{3} + |S_P|, |S_P| + |V(D)| - 2 + \frac{3}{4}\alpha'\}}{ \max \{ |S_P| + \alpha' - 1, |V(D)| + |K|, 2K_{1,2} + \alpha' \}}\\
            &\leq   \frac{ \min\{\frac{4|V(D)| - 4}{3} + |S_P|, |S_P| + |V(D)| - 2 + \frac{3}{4}\alpha'\}}{ \max \ |S_P| + \alpha' - 1 ,|V(D)|+|K|\}} \leq \frac{11}{7}.
        \end{align*}
        Where the second inequality follows by application of Lemma~\ref{lem:nextlemma}.
        
        Note that $ |S_P| + \alpha' - 1 \leq LB_1$, $ |V(D)| + |K|\leq LB_2$, $2K_{1,2} + 2\alpha' - 4x' \leq LB_3$, and $UB_2 = |S_P| + |V(D)| - 2 + \alpha - \Delta - x + \Delta = |S_P| + |V(D)| - 3 + \alpha' - x'$. Thus, we can see that $S\leq S'\leq \frac{11}{7}$.
        
        \item \textbf{Case:} If $UB_2> UB_1$. Define $\Delta \coloneqq \min\{\alpha - 4x, UB_2 - UB_1\}$, and $\alpha' = \alpha - \Delta$ and 
        \begin{align*}
            S' \coloneqq  \frac{ \min\{\frac{4|V(D)| - 4}{3} + |S_P|, |S_P| + |V(D)| - 2 + \alpha' - x\}}{ \max \{ |S_P| + \alpha' - 1, |V(D)| + |K|, 2K_{1,2} + 2\alpha' - 4x\}}.
        \end{align*}
        First observe that as $\max|S_P| + \alpha' - 1, |V(D)| + |K|, 2K_{1,2} + 2\alpha' - 4x\}\le \max \{ |S_P| + \alpha - 1, |V(D)| + |K|, 2K_{1,2} + 2\alpha - 4x\}$ and $\frac{4|V(D)| - 4}{3} + |S_P|\le |S_P| + |V(D)| - 2 + \alpha' - x<UB_2$ then $S'\ge S$. Thus it is sufficient to show that $S'\le \frac{11}{7}$. We consider two sub-cases.
        If $\alpha - 4x \leq UB_2 - UB_1$. In which case, $\alpha' = \alpha - \alpha + 4x = 4x$, by applying Case (1) $S'$ one sees that $S'\leq \frac{11}{7}$.

        Else, $\alpha - 4x \geq UB_2 - UB_1$. Then we have that $\frac{4|V(D)| - 4}{3} + |S_P|= |S_P| + |V(D)| - 3 + \alpha'-x$, and therefore, Case (2) shows that $S'\le \frac{11}{7}$.
        
        
    \end{enumerate}
    
\end{proof}
Lastly, we have the following useful claim.
\begin{claim}\label{claim:SimplifyingLB3}
    Define $x\coloneqq \frac{\alpha-\alpha'}{2} + \frac{\alpha_{large} }{2} +\alpha'_1$. Then $\alpha'_1+ 2\alpha'_2 - 2\alpha_{large} \ge 2\alpha-4x$

\end{claim}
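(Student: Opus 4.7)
The plan is to expand $2\alpha - 4x$ using the definition of $x$, substitute the identity $\alpha' = \alpha'_1 + \alpha'_2$ (which was introduced just above Algorithm~\ref{alg:phase3}), and check that the resulting inequality reduces to an obviously true statement. No combinatorial argument is needed here; the claim is a pure arithmetic identity, and the only ``work'' is bookkeeping.

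Concretely, first I would substitute $x = \tfrac{\alpha-\alpha'}{2} + \tfrac{\alpha_{large}}{2} + \alpha'_1$ into the right-hand side to obtain
\begin{align*}
    2\alpha - 4x
    &= 2\alpha - 2(\alpha - \alpha') - 2\alpha_{large} - 4\alpha'_1 \\
    &= 2\alpha' - 2\alpha_{large} - 4\alpha'_1.
\end{align*}
Next I would use $\alpha' = \alpha'_1 + \alpha'_2$ to rewrite this as
\[
    2\alpha - 4x = 2\alpha'_1 + 2\alpha'_2 - 2\alpha_{large} - 4\alpha'_1 = 2\alpha'_2 - 2\alpha'_1 - 2\alpha_{large}.
\]
Subtracting this from the left-hand side of the desired inequality gives
\[
    (\alpha'_1 + 2\alpha'_2 - 2\alpha_{large}) - (2\alpha'_2 - 2\alpha'_1 - 2\alpha_{large}) = 3\alpha'_1,
\]
so the inequality is equivalent to $3\alpha'_1 \geq 0$.

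The ``hard part'' is essentially nonexistent: since $\alpha'_1$ counts the vertices of $X_3$ that have a safe neighbour in $V(A)\cup X_2$ (as defined in Algorithm~\ref{alg:phase3}), it is a non-negative integer, and the claim follows immediately. The only thing to be careful about is correctly tracking the signs during the expansion, in particular not dropping the $-2(\alpha-\alpha')$ term.
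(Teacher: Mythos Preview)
Your proof is correct and follows exactly the same route as the paper: expand $2\alpha-4x$, use $\alpha'=\alpha'_1+\alpha'_2$, and observe that the difference between the two sides is $3\alpha'_1\ge 0$. Your bookkeeping is in fact cleaner than the paper's, which contains a harmless typo in the intermediate step (writing $-\alpha_{large}$ where $-2\alpha_{large}$ is meant) before arriving at the same final expression.
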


We have the ingredients necessary to prove the main Lemma of this section
\begin{proof}
    $2\alpha-4x=2\alpha-4(\frac{\alpha-\alpha'}{2}+\frac{\alpha_{large} }{2}+\alpha'_1)=2\alpha'-4\alpha'_1-\alpha_{large} $\\
    $=2\alpha'_2-2\alpha'_1-2\alpha_{large}  \le \alpha'_1+2\alpha'_2-2\alpha_{large} $
\end{proof}
\begin{lemma}
\label{lem:11/7}
    It holds that 
    \begin{align*}
        \frac{\min\{ \frac{4}{3}(|V(D)| -1) + |S_P| , |V(D)|-2 + |S_P| + \alpha - (\frac{\alpha - \alpha'}{2} + \frac{\alpha_{large}}{2} + \alpha_{1}')\}}{\max\{|S_P| + \alpha - 1, 2K_{1,2} -2\alpha_{large} + \alpha_{1}' + 2\alpha_{2}',n\}} \leq \frac{11}{7}.
    \end{align*}
\end{lemma}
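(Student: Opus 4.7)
The plan is to deduce Lemma~\ref{lem:11/7} as a direct specialization of Lemma~\ref{lem:11/7tool}, using Claim~\ref{claim:SimplifyingLB3} to reconcile the last term of the maximum in the denominator. First, I would set
$$x := \frac{\alpha-\alpha'}{2} + \frac{\alpha_{large}}{2} + \alpha_{1}' \qquad \text{and} \qquad |K| := |K_{1,1}| + |K_{1,2}| + |K_{2,2}| + |K_{2,3}|.$$
With these substitutions the numerator in Lemma~\ref{lem:11/7} matches term-by-term the numerator in Lemma~\ref{lem:11/7tool}: the first argument $\frac{4}{3}(|V(D)|-1)+|S_P|$ is literally $\frac{4|V(D)|-4}{3}+|S_P|$, and the second argument $|V(D)|-2+|S_P|+\alpha-(\frac{\alpha-\alpha'}{2}+\frac{\alpha_{large}}{2}+\alpha_{1}')$ becomes $|S_P|+|V(D)|-2+\alpha-x$.

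Next, I would verify the hypotheses of Lemma~\ref{lem:11/7tool}. The identity $n=|V(D)|+|K|$ holds because $V(D),K_{1,1},K_{1,2},K_{2,2},K_{2,3}$ partition $V$, so the $n$ in our denominator coincides with the $|V(D)|+|K|$ in the tool's. The inequalities $|K|\leq |S_P|\leq 2|K|$ follow directly from $|S_P|=|K_{1,1}|+2|K_{1,2}|+2|K_{2,2}|+\tfrac{3}{2}|K_{2,3}|$; the bound $|K|\geq 1$ comes from the standing assumption in Section~\ref{sec:11/9fvc} (obtained via Lemma~\ref{lem:5/3apx}) that at least one of $K_{1,2}$ and $K_{2,3}$ is non-empty. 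The inequality $\alpha\leq |V(D)|$ is immediate, and a short calculation using $\alpha=\alpha_{large}+|X_1|+|X_2|+\alpha'$ together with $\alpha_{1}'\leq \alpha'$ gives $0\leq x\leq \alpha$.

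It remains to address the last term of the maximum. In Lemma~\ref{lem:11/7} this term is $2|K_{1,2}|-2\alpha_{large}+\alpha_{1}'+2\alpha_{2}'$, whereas in Lemma~\ref{lem:11/7tool} it is $2|K_{1,2}|+2\alpha-4x$. By Claim~\ref{claim:SimplifyingLB3}, $\alpha_{1}'+2\alpha_{2}'-2\alpha_{large}\geq 2\alpha-4x$; adding $2|K_{1,2}|$ to both sides shows that the denominator in our lemma dominates the denominator in the tool. Since the numerators are identical, the ratio we must bound is no larger than the one controlled by Lemma~\ref{lem:11/7tool}, which is at most $\tfrac{11}{7}$.

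All of the genuinely hard work is already packaged into Lemma~\ref{lem:11/7tool} and Lemma~\ref{lem:nextlemma}, which together carry out a case analysis on which upper and lower bound is active. Relative to those results, the only mild obstacle for the present lemma is the careful bookkeeping of the parameter substitution and the verification that every side constraint on $x$, $|K|$, and $|S_P|$ is met; the only nontrivial inequality needed to translate between the two denominators is precisely the content of Claim~\ref{claim:SimplifyingLB3}.
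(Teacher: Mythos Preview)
Your proposal is correct and follows essentially the same route as the paper: define $x$ and $|K|$ exactly as you do, invoke Claim~\ref{claim:SimplifyingLB3} to replace the third lower bound by $2|K_{1,2}|+2\alpha-4x$, and then apply Lemma~\ref{lem:11/7tool}. You are in fact slightly more careful than the paper in explicitly checking the side constraints $0\le x\le \alpha$ and $|K|\ge 1$ before invoking the tool lemma, but the argument is otherwise identical.
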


\begin{proof}[Proof of Lemma~\ref{lem:11/7}]
    To simplify notation, we define $K \coloneqq K_{1,1} \cup K_{1,2} \cup K_{2,2} \cup K_{2,3}$. Observe that $|K| \leq |S_P| = |K_{1,1}| + 2|K_{1,2}| + 2|K_{2,2}| + \frac{3}{2}|K_{2,3}| \leq 2(|K_{1,1}| + |K_{1,2}| + |K_{2,2}| + |K_{2,3}|) = 2|K|$.


    Define $x\coloneqq \frac{\alpha-\alpha'}{2} + \frac{\alpha_{large} }{2} +\alpha'_1$. Then, by using Claim~\ref{claim:SimplifyingLB3} we have $\alpha'_1+ 2\alpha'_2 - 2\alpha_{large} \ge 2\alpha-4x$. Noting that $n = |V(D)| + |K|$, we have $\frac{\min\{APX_1, APX_2\}}{|OPT|}$  is at most 
    \begin{align*}
        \frac{\min\{ \frac{4}{3}(|V(D)| -1) + |S_P| , |V(D)|-2 + |S_P| + \alpha - x)\}}{\max\{|S_P| + \alpha - 1, 2K_{1,2} +2\alpha - 4x,|V(D)| + |K|\}}.
    \end{align*}
    And by Lemma~\ref{lem:11/7tool} this is at most $\frac{11}{7}$.
\end{proof}

    \section{Proof of Lemma~\ref{lem:2ECSSImprovement}}
\label{sec:jensextension}
\begin{proof}
     If $G$ is $2$VC then one can solve the problem using Lemma~\ref{lem:jensextension}. 
     So assume $V$ is not $2VC$. Then in polynomial time we can decompose it into blocks $E_1,...,E_b$ (see \cite{west2001introduction} for further details).
     Let $G_1,...,G_b$ be the induced subgraph on $E_i$ (i.e. $G_i:=G[E_i]$). Furthermore let $n_i$ be the number of vertices of $G_i$ and $opt_i$ be the size of the optimal $2$-ECSS of $G_i$. We have:
     $$
        \sum_{i=1}^b n_i=n+b-1.
     $$
     Now let $E'\subseteq E$. One observes that $E'$ is a $2$-ECSS of $G$ if and only if $E'\cap E_i$ is a $2$-ECSS of $G_i$ for every $i\in \{1,...,b\}$.
     Therefore now we can use Lemma~\ref{lem:jensextension} for each $i$ to get an approximate solution $APX_i$ of size at most $\frac{4}{3}n_i+\frac{2}{3}(opt_i-n_i)-\frac{2}{3}$ for $G_i$. Thus if we set $APX:=APX_1\cup\dots\cup APX_b$, we obtain a feasible solution for $G$ of size:
     $$
        |APX|\le \frac{4}{3}(n+b-1)+\frac{2}{3}(opt-(n+b-1))-\frac{2}{3}b
    $$
    $$
        =\frac{4}{3}n+\frac{2}{3}(opt-n)-\frac{2}{3}.
    $$
  
\end{proof}
    \section{Previous Analysis of \texorpdfstring{$k-$}{k-}FGC}
\label{sec:bug}
In \cite{adjiashvili2022flexible} the authors address the problem of $k$-FGC with Theorem 3, which is proven in Section 2 of that paper. We now discuss the relevant elements of the proof of Theorem 3 provided in \cite{adjiashvili2022flexible}. The authors provide two approximation algorithms for the problem, and by taking the minimum, they find the desired approximation of $1 + O(\frac{1}{k})$. Here we focus on the second algorithm and its analysis, and we will point out an inequality that is incorrect. We provide here a close depiction of the proof found in \cite{adjiashvili2022flexible} in order it identify this issue. 


In the construction the authors of \cite{adjiashvili2022flexible} provide, they let $I' = (G, F, k)$ be an instance of unweighted $k$-FGC, where $G$ is the input graph, and $F$ is the set of all safe edges of $G$. Their approximation algorithm proceeds as follows. First, we compute an edge-set $X$ that forms a maximum forest restricted to the safe edges $F$ of $G$. Then we compute a $(k + 1)$-edge connected spanning subgraph $Y'$ of $G/X$ using the algorithm given in \cite{cheriyan2000approximating}.%
Clearly, the algorithm above runs in polynomial time.

\begin{lemma}
\label{lem:LemmaOneAdjishvali}
    Let $H \subseteq G$ be a feasible solution to $I'$. Then $H / (F \cap E(H))$ is $(k+1)$-edge connected.
\end{lemma}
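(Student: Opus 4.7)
The plan is to prove the contrapositive: I will show that if $H/(F \cap E(H))$ fails to be $(k+1)$-edge-connected, then $H$ cannot be a feasible solution to the $k$-FGC instance $I'$. The key observation is that contraction of an edge set preserves cuts that avoid those edges, so edge cuts of the quotient graph $H/(F \cap E(H))$ correspond bijectively to edge cuts of $H$ whose edges lie entirely in $E(H) \setminus F$, i.e., cuts made exclusively of unsafe edges.

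First I would fix notation: let $H' := H/(F \cap E(H))$, and for a vertex partition $(V', V'')$ of $V(H')$, let $(U', U'')$ be the induced partition of $V(H)$ obtained by taking the preimage of each part under the contraction map. I would then note that the crossing edges of $(U', U'')$ in $H$ are exactly the crossing edges of $(V', V'')$ in $H'$, and in particular none of them belong to $F \cap E(H)$, since every safe edge in $H$ has both endpoints inside some single part (each contracted supernode is a connected component of $(V, F \cap E(H))$ lying entirely in one side of the partition).

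Next, suppose for contradiction that $H'$ is not $(k+1)$-edge-connected. Then there is a partition $(V', V'')$ of $V(H')$ with both parts nonempty whose cut in $H'$ has size at most $k$. Lifting to $H$, this gives a partition $(U', U'')$ of $V(H)$ with both parts nonempty, and the cut in $H$ consists of at most $k$ edges, all unsafe. Let $S \subseteq E_U \cap E(H)$ be this set of at most $k$ unsafe edges. Then $(V, E(H) \setminus S)$ has no edge between $U'$ and $U''$, so it is disconnected. This contradicts the feasibility of $H$ as a solution to $I'$, which requires that removing any at most $k$ unsafe edges leaves the graph connected. Hence $H'$ must be $(k+1)$-edge-connected.

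I don't anticipate a real obstacle here: the only thing to be careful about is the cut-correspondence under contraction, and in particular making sure that the two parts $(U', U'')$ remain nonempty after lifting (they do, since contraction only merges vertices, never deletes them). This completes the plan.
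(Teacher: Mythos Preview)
Your argument is correct. The paper itself does not give a proof of this lemma; it is quoted from \cite{adjiashvili2022flexible} (and appears identically as Lemma~\ref{lem:contractsafeunsafefeasible} in the main text, again with a citation in lieu of proof). Your contrapositive via the cut-lifting correspondence is exactly the standard way to see it, so there is nothing to compare.
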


Let $Z^* \subseteq G$ be an optimal solution to $I'$ and let $X \cup Y'$ be a solution computed by the algorithm described above. To prove Theorem 3 it remains to bound the size of the solution $X \cup Y'$.
Let $\ell := |X|$ and let $Y^*$ be a minimum $(k+1)$-edge connected spanning subgraph of $G/X$. 
By Lemma 1, we have that $X \cup Y'$, and $X \cup Y^*$ are feasible solutions to $I'$.

Recall from Section 1.3 that the optimal solution $Z^*$ consists of 2-edge-connected components that are joined by safe edges $E' \subseteq Z^* \cap F$ in a tree-like fashion. The maximum forest $X$ also joins these 2-edge-connected components of $Z^*$, that is, $(Z^* \backslash E') \cup X$ forms a connected graph. Hence, $Z^* \backslash X$ forms a 2-edge connected graph of $G/X$. 
Therefore we have that $\text{OPT}(I') = |Z^*| \geq |Y^*|$.

On the other hand, it was shown in \cite{nagamochi1992linear} that $Y'$ can be partitioned into $k$ spanning forests. Thus $Y'$ contains at most $(k+1)n'$ edges, where $n' := |V(G/X)|=n-\ell$. By Lemma~\ref{lem:kconnectedlowerbound}, we have  that any feasible solution to a $(k+1)$ECSS instance on $n'$ vertices needs to contain at least $\frac{k+1}{2}n'$ edges, since each vertex has to have degree at least $k + 1$. Therefore, we have that
\[
|X| + |Y'| \leq \ell + (k + 1)(n - \ell) \leq 2\text{OPT}(I') - k\ell ,
\]
where $n = |V(G)|$.

Now we show that the second inequality does not necessarily hold. In fact $2\text{OPT}(I') - k\ell$ can potentially be much smaller than $\ell + (k + 1)(n - \ell)$. For instance, consider the case that  $G$ has a spanning tree containing solely of safe edges and $k\ge 3$. In this case $\text{OPT}(I') = |X| = n-1$. Therefore $2\text{OPT}(I') - k\ell = (2-k)\ell <0$. However $\ell + (k + 1)(n - \ell)=\ell+(k+1)\ge n$. Thus in this case:
$$
    \ell + (k + 1)(n - \ell)\ge n > 0 > 2\text{OPT}(I') - k\ell.
$$

The correct inequality should be:
    \[
|X| + |Y'| \leq \ell + (k + 1)(n - \ell) \leq 2\text{OPT}(I') - \ell ,
\]
where the second inequality holds by applying Lemma~\ref{lem:LemmaOneAdjishvali} to see that $|X|+\frac{(n-|X|)k}{2} \le \text{OPT}(I')$ and the fact that  every $(k+1)$-edge connected subgraph of $G/ X$ must have at least $\frac{(n-|X|)}{2}$ edges.

As previously mentioned, the authors of \cite{adjiashvili2022flexible} provide a second approximation algorithm. Which simply computes the best $(k+1)$ECSS solution $Y$ instead of $Y'$.
\begin{lemma}[\cite{adjiashvili2022flexible}]
    There is a polynomial time algorithm that has cost at most $\ell + \beta_{k+1}OPT(I')$.
\end{lemma}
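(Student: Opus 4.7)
The plan is to analyze the straightforward algorithm that computes a maximum safe forest $X$ (so $|X|=\ell$) and then runs the best known $(k+1)$ECSS approximation on the contracted graph $G/X$, returning $X \cup Y$, where $Y$ is the subgraph produced by this approximation. Both steps run in polynomial time by assumption on $\beta_{k+1}$ and on computing maximum forests. The two things that need to be proved are feasibility of $X \cup Y$ and the claimed size bound.

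For feasibility, I would argue the converse-style direction of Lemma~\ref{lem:contractsafeunsafefeasible} directly: take any set $\{e_1,\dots,e_k\} \subseteq E_U \cap (X \cup Y)$; since $X \subseteq E_S$, these edges all lie in $Y$, and since $Y$ is $(k+1)$-edge-connected in $G/X$, the graph $(V(G/X), Y \setminus \{e_1,\dots,e_k\})$ is still connected. Uncontracting $X$ reinserts all safe edges we took out, which only merges vertices within each formerly-contracted super-node, so $(V, (X \cup Y) \setminus \{e_1,\dots,e_k\})$ is connected. Hence $X \cup Y$ is a feasible $k$-FGC solution.

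For the size bound, it suffices to show that the optimum value $opt(G/X)$ of the $(k+1)$ECSS instance on $G/X$ satisfies $opt(G/X) \leq OPT(I')$, because then $|Y| \leq \beta_{k+1}\, opt(G/X) \leq \beta_{k+1}\, OPT(I')$ and adding $|X|=\ell$ gives the claim. To prove this inequality, fix an optimal $Z^*$ for $I'$. By Lemma~\ref{lem:contractsafeunsafefeasible}, $Z^*/(E_S\cap Z^*)$ is $(k+1)$-edge-connected. The key step is to project $Z^*$ into $G/X$ rather than into $G/(E_S\cap Z^*)$: since $X$ is a \emph{maximal} (equivalently, spanning) forest of $(V,E_S)$, the connected components of $(V,X)$ coincide with those of $(V,E_S)$, and in particular every safe edge of $Z^*$ has both endpoints in the same component of $(V,X)$. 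Therefore all safe edges of $Z^*$ become self-loops in $G/X$, and the image of $Z^* \setminus E_S$ in $G/X$ is a $(k+1)$-edge-connected spanning subgraph of $G/X$ of size at most $|Z^*|=OPT(I')$. This establishes $opt(G/X) \le OPT(I')$ and concludes the proof.

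The only subtle point, and the place where the argument could go wrong if $X$ were not chosen carefully, is the claim that no safe edge of $Z^*$ crosses distinct $X$-components; this is exactly why a \emph{maximum} safe forest is required, rather than just any safe forest. Everything else is just tracking that contraction preserves edge-connectivity up to self-loops.
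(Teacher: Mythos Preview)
Your proof is correct and follows essentially the same approach as the paper (which in turn is recapping the argument from \cite{adjiashvili2022flexible}): the key inequality $opt(G/X)\le OPT(I')$ is obtained there, as in your argument, by observing that a maximum safe forest $X$ induces the same component structure as all of $E_S$, so the unsafe part of an optimal solution $Z^*$ remains a feasible $(k+1)$ECSS solution after contracting $X$; your direct feasibility argument for $X\cup Y$ is the converse of Lemma~\ref{lem:contractsafeunsafefeasible} and matches what the paper uses implicitly.
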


Taken together, we find an approximation algorithm with approximation factor
\begin{align*}
    &\min\{\ell + \beta_{k+1}OPT(I'), 2OPT(I') - 2\ell\} 
    \leq \frac{1}{2}(\ell + \beta_{k+1}OPT(I')) + \frac{1}{2}(2OPT(I') - \ell) \\
    =& \frac{1}{2}(\beta_{k+1} + 2)OPT(I') 
    = \left( \frac{3}{2} + O \left( \frac{1}{k} \right) \right) OPT(I').
\end{align*}

\end{appendix}

\end{document}